\documentclass[11pt]{article}


\usepackage[margin=1in]{geometry}
\usepackage{amssymb,amsmath,amsthm,amsfonts,thmtools,mathtools}
\usepackage{color}
\usepackage[export]{adjustbox}
\usepackage{yfonts}
\usepackage{mathrsfs}
\usepackage{authblk}

\usepackage{xspace}

\usepackage{verbatim}

\usepackage{graphicx,lipsum}
\usepackage[font=small]{caption}
\usepackage{subcaption}
\usepackage[utf8]{inputenc}

\usepackage{url}
\usepackage[all]{xy}
\usepackage{rotating}
\usepackage{ifpdf}
\usepackage{tcolorbox}
\usepackage{lipsum}

\usepackage{mdframed}             
\usepackage{xifthen} 
\usepackage{mathtools}  

\DeclareMathAlphabet\mathbfcal{OMS}{cmsy}{b}{n}

\usepackage{comment}
\usepackage[T1]{fontenc}

\usepackage{array}
\usepackage[shortlabels]{enumitem}

\usepackage{algorithm}
\usepackage[noend]{algpseudocode}

\usepackage{wrapfig} 

\input{insbox}


\newcommand{\ignore}[1]{}

\newcommand{\margincomment}[2]%
{\marginpar{\footnotesize\raggedright {\color{red}#1}: #2}}

\newcommand{\etal}{et~al.}


\newcommand{\myparagraph}[1]{{\medskip\noindent\textbf{#1}}}

\newcommand{\mycase}[1]{{\underline{Case~#1}:}}


\newtheorem{lemma}{Lemma}[section]
\newtheorem{theorem}[lemma]{Theorem}
\newtheorem{corollary}[lemma]{Corollary}

\newtheorem{claim}[lemma]{Claim}
\newtheorem{observation}[lemma]{Observation}


\newcommand{\braced}[1]{{ \left\{ {#1} \right\} }}

\newcommand{\parend}[1]{{ \left({#1} \right) }}




\algnewcommand\algorithmiclet{\textbf{let}}
\algnewcommand\Let{\State \algorithmiclet\ }
\algnewcommand\algorithmiccase{\textbf{case}}
\algdef{SE}[CASE]{Case}{EndCase}[1]{\algorithmiccase\ #1}{\algorithmicend\ \algorithmiccase}%
\algtext*{EndCase}%
\algnewcommand{\IfThenElse}[3]{
  \State
  \algorithmicif\ #1\ \algorithmicthen\ #2\ \algorithmicelse\ #3}

\algnewcommand{\LeftComment}[1]{\Statex \hspace{-1em} \(\triangleright\) \emph{#1}}


\newcommand{\stc}[1]{\mathrm{stc}(#1)}
\newcommand{\cng}[2]{\mathrm{cng}_{#1}(#2)}
\newcommand{\maxcng}[1]{\mathrm{cng}(#1)}

\newcommand{\cutof}[1]{{\partial #1}}
\newcommand{\cutofin}[2]{{\partial_{#2} #1}}
\newcommand{\sptreecut}[2]{#1_{#2}}
\newcommand{\treecutin}[3]{#1^{+#2}_{#3}}
\newcommand{\treecutout}[3]{#1^{-#2}_{#3}}

\newcommand{\Tstar}{\widetilde{T}}

\newcommand{\hubs}{\widetilde H}

\newcommand{\firstweight}{\mathop{\mathrm{w}_1}}
\newcommand{\secondweight}{\mathop{\mathrm{w}_2}}
\newcommand{\dbweight}[2]{#1\!:\!#2}

\newcommand{\bottleneck}{\calB}
\newcommand{\dwgadget}{\calW}
\newcommand{\flower}{\calF}
\newcommand{\rootflower}{\calR}
\newcommand{\twoNflower}{\calH}

\newcommand{\xtwoP}{x^{\mathrm{2P}}}
\newcommand{\xtwoN}{x^{\mathrm{2N}}}
\newcommand{\xthreeP}{x^{\mathrm{3P}}}
\newcommand{\xroot}{x^{\rootflower}}

\newcommand{\yroot}{y^{\rootflower}}
\newcommand{\ztwoP}{z^{\mathrm{2P}}}
\newcommand{\ztwoN}{z^{\mathrm{2N}}}
\newcommand{\zthreeP}{z^{\mathrm{3P}}}
\newcommand{\zroot}{z^{\rootflower}}

\newcommand{\cactusC}{\mathfrakC}

\newcommand{\problemSTC}{{\textsf{STC}}}
\newcommand{\problemKSTC}[1]{{#1\textrm{-}\textsf{STC}}}
\newcommand{\problemMPNSAT}{\mbox{(M2P1N)-SAT}\xspace}
\newcommand{\degree}[2]{\text{deg}_{#1}(#2)}

\newcommand{\NP}{\ensuremath{\mathbb{NP}}\xspace}

\newcommand\dual[1]{\hat{#1}}




\newcommand{\ListLengths}{\setlength{\itemsep}{0ex}\setlength{\topsep}{1ex}\setlength{\partopsep}{0ex}}





\newcommand{\barX}{{\bar X}}
\newcommand{\barY}{{\bar Y}}

\newcommand{\calB}{\ensuremath{\mathcal B}\xspace}

\newcommand{\calF}{\ensuremath{\mathcal F}\xspace}

\newcommand{\calH}{\ensuremath{\mathcal H}\xspace}

\newcommand{\calR}{\ensuremath{\mathcal R}\xspace}

\newcommand{\calW}{\ensuremath{\mathcal W}\xspace}

\newcommand{\tildeO}{{\tilde{O}}}

\newcommand{\mathfrakC}{{\mathfrak{C}}}



\graphicspath{{./FIGURES/}}


\title{Two Complexity Results on Spanning-Tree Congestion Problems\thanks{%
	Research supported by NSF grant CCF-2153723 (M. Chrobak), 
	ANR grant ANR-23-CE48-0010 (C. D\"{u}rr),
	and grant 24-10306S of GA ČR (J. Sgall and P. Kolman).}}

\author[$\dagger$]{Sunny Atalig}
\author[$\dagger$]{Marek Chrobak}
\author[$\diamond$]{Christoph D\"{u}rr}
\author[$\S$]{Petr Kolman}
\author[$\ddagger$] {Huong Luu}
\author[$\S$]{Ji\v{r}{\'\i} Sgall}
\author[$\dagger$] {Gregory Zhu}

\affil[$\dagger$]{University of California at Riverside, USA}
\affil[$\ddagger$]{California Polytechnic University at Pomona, USA}
\affil[$\S$]{Charles University, Prague, Czech Republic}
\affil[$\diamond$]{Sorbonne Universit\'{e}, CNRS, LIP6, France}

\begin{document}

\maketitle

\begin{abstract}
In the spanning-tree congestion problem ($\problemSTC$), given a graph $G$, the objective is to compute
a spanning tree of $G$ for which the maximum edge congestion is minimized. While
$\problemSTC$ is known to be $\NP$-hard, even for some restricted graph classes,
several key
questions regarding its computational complexity remain open, and we address some of these
in our paper. 
(i) For graphs of maximum degree $\Delta$, it is known that
$\problemSTC$ is $\NP$-hard when $\Delta\ge 8$. We provide a complete resolution of this variant, 
by showing that $\problemSTC$ remains $\NP$-hard for each degree bound $\Delta\ge 3$.
(ii) In the decision version of $\problemSTC$, given an integer $K$, the goal is to determine whether the 
congestion of $G$ is at most $K$.
We  prove that this variant is polynomial-time solvable for $K$-edge-connected graphs.
\end{abstract}


\section{Introduction}
\label{sec: introduction}
Constructing spanning trees for graphs under specific constraints is a well-studied problem in graph theory and algorithmics. 
In this paper, we focus on \emph{the spanning-tree congestion problem ($\problemSTC$)}, that arises naturally in some network design and routing problems. 
The problem can be viewed as a special case of the graph sparsification problem where a graph $G$ is 
embedded into its spanning tree $T$ by mapping each edge $(x,y)$ of $G$ to the unique $x$-to-$y$ path in $T$. 
The congestion of an edge $e\in T$ is defined as the number of edges of $G$ 
whose corresponding path in $T$ traverses $e$, and the congestion of $T$ is the maximum congestion of its edges. 
In the $\problemSTC$ problem, we are given a graph $G$ and the objective is to compute
a spanning tree with minimum congestion. This minimum congestion value is referred to as the \emph{spanning-tree congestion of} $G$
and denoted by $\stc{G}$.

The concept of spanning-tree congestion was introduced under different names in the late 
1990s~\cite{sandeep_1986_optimal_tree_machines,simonson_1987_variation_min_cut_arrangement,rosenberg_1988_graph_embeddings,khuller_1993_designing_multi_commodity_flow_tree},
and in 2004 formalized by Ostrovskii~\cite{ostrovoskii_2004_minimal_congestion_tree}, who established some key properties. 
The problem has been extensively studied since then, and
numerous results regarding its graph-theoretic properties and computational complexity
have been reported in the literature. Below we review briefly those that are most relevant to our work.

$\problemSTC$ is $\NP$-hard, with the first $\NP$-hardness proof given by L{\"o}wenstein~\cite{lowenstein_2010_in_the_complement_dominating_set} in 2010. 
It remains $\NP$-hard for planar graphs~\cite{otachi_2010_complexity_result_stc}, chain graphs and split graphs~\cite{okamoto_2011_hardness_results_exp_algorithm_stc}. 
On the other hand, $\problemSTC$ is polynomial-time solvable for a wide variety of special graph classes, including
complete $k$-partite graphs, two-dimensional tori~\cite{kozawa_2009_stc_graphs}, outerplanar graphs~\cite{bodlaender_2011_stc_k-outerplanargraphs}, 
two-dimensional Hamming graphs~\cite{kozawa_2011_stc_rook_graphs}, co-chain graphs~\cite{kubo_2015_spanning_tree_small_parameter}, and interval graphs~\cite{lin_2025_stc_interval_graphs}. 

In the decision version of $\problemSTC$, in addition to a graph $G$ we are also given an integer $K$, and the goal is to determine if $\stc{G} \le K$. 
A natural variant of $\problemSTC$, when the congestion parameter $K$ is a fixed constant (rather than given as input) is denoted $\problemKSTC{K}$.  
The $\problemKSTC{K}$ problem was shown to be $\NP$-complete for $K \ge 5$ by 
Luu and Chrobak~\cite{luu_chrobak_2025_better_hardness_algo}, building on earlier results 
for larger constants~\cite{otachi_2010_complexity_result_stc,bodlaender_2012_parameterized_complexity_stc}. 
On the other hand, $\problemKSTC{K}$ is solvable in linear time for $K\le 3$~\cite{otachi_2010_complexity_result_stc}. 
The complexity status of $\problemKSTC{4}$ remains an intriguing open problem. 
For graphs of radius $2$, $\problemKSTC{K}$ is $\NP$-complete for $K\ge 6$~\cite{luu_chrobak_2025_better_hardness_algo},
and its complexity is open for $K=4,5$.  
For any constant $K$, $\problemKSTC{K}$ is linear-time solvable for bounded-degree graphs, bounded-treewidth graphs, apex-minor-free graphs~\cite{bodlaender_2012_parameterized_complexity_stc}, 
and chordal graphs~\cite{otachi_2020_survey_spanning_tree_congestion}. 

Kolman~\cite{kolman_iwoca_2024} observed that the existing $\NP$-hardness proofs used graphs
of unbounded degree, and raised the question about the complexity of $\problemSTC$
for graphs of constant degree. For constant-degree graphs, it has only been known
that $\problemKSTC{K}$ is linear-time solvable if the congestion bound $K$ is also constant~\cite{bodlaender_2012_parameterized_complexity_stc}.
Recently, Lampis~\etal~\cite{lampis_etal_parameterized_spanning_tree_congestion_2025}
reported progress on this problem, by proving that,
for any constant $\Delta\ge 8$, $\problemSTC$ is $\NP$-hard for graphs with maximum degree $\Delta$, 
leaving open the complexity of  $\problemSTC$ for degree bounds between $3$ and $7$.


\myparagraph{Our contributions.} 
Addressing the problem left open by Lampis~\etal~\cite{lampis_etal_parameterized_spanning_tree_congestion_2025}, 
we prove (see Section~\ref{sec: np-hardness for degree-3 graphs}) the following theorem:

\begin{theorem}\label{thm: np-completeness for degree 3}
Problem $\problemSTC$ is $\NP$-hard for graphs of maximum degree at most $3$.
\end{theorem}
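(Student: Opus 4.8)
The plan is to prove Theorem~\ref{thm: np-completeness for degree 3} by a polynomial-time reduction from a heavily restricted satisfiability problem to $\problemSTC$ on graphs of maximum degree $3$. Since in a degree-$3$ graph every branching point is a local ternary choice, the natural source is a SAT variant with both clause width and number of variable occurrences bounded by small constants -- namely the problem $\problemMPNSAT$, in which each clause consists of two positive and one negative literal and each variable occurs a bounded number of times. The first step is therefore to establish that $\problemMPNSAT$ is $\NP$-hard; I would do this by the standard chain of reductions from $3$-SAT (splitting high-occurrence variables with equality gadgets, and re-coloring literals so that each clause has the prescribed positive/negative pattern), so that the remainder of the argument only has to cope with a very rigid input. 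I would note in passing why the seemingly easier route -- applying a generic degree-reduction gadget to the $\maxdegree{G}\ge 8$ hardness of Lampis~\etal -- does not obviously work: replacing a high-degree vertex by a long cycle or path of degree-$3$ vertices does not preserve $\stc{\cdot}$ in a controlled way, because a spanning tree may cut such a cycle in many places.

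Given a formula $\phi$, I would construct in polynomial time a graph $G_\phi$ with $\maxdegree{G_\phi}\le 3$ together with an integer threshold $K=K(\phi)$ such that $\stc{G_\phi}\le K$ iff $\phi$ is satisfiable. The construction is gadget-based. Each variable $x$ receives a \emph{flower}-type gadget: a $2$-connected subgraph built only from degree-$\le 3$ vertices and carrying a distinguished \emph{bottleneck} edge, engineered so that in any spanning tree of congestion at most $K$ the fundamental cut of the bottleneck edge must take one of exactly two canonical shapes, which we read as ``$x$ true'' or ``$x$ false''. Each clause receives a gadget attached, through low-degree \emph{distributor} gadgets (binary trees of degree-$3$ vertices that fan a variable's truth value out to its several occurrences while keeping every degree at most $3$), to the flowers of its three variables; the clause gadget is calibrated so that its internal edges can all be held at congestion $\le K$ exactly when at least one incident literal is satisfied, and otherwise some edge is forced above $K$. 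The whole of $G_\phi$ is made $2$-connected, which is convenient for the analysis: deleting any tree edge $e$ splits $V(G_\phi)$ into two parts and the congestion of $e$ is precisely the number of $G_\phi$-edges crossing that cut, so congestion can be argued about cut by cut and gadget by gadget.

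The correctness proof then splits in the usual way. For completeness, from a satisfying assignment I would exhibit an explicit spanning tree: inside each flower pick the canonical configuration matching the truth value, route the distributors accordingly, and complete the clause gadgets using a satisfied literal; then bound the congestion of every edge by $K$ via a finite gadget-by-gadget check. For soundness, given a spanning tree $T$ with $\maxcng{T}\le K$, I would show that $T$ must be \emph{canonical}: using $2$-connectedness together with the bottleneck/flower design, any deviation from a canonical configuration inside some gadget (cutting a flower ``the wrong way'', entering or leaving a distributor tree in an unintended pattern, etc.) produces a fundamental cut crossed by more than $K$ edges. A canonical tree induces a well-defined truth assignment, and the clause-gadget calibration forces that assignment to satisfy $\phi$.

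The main obstacle is the soundness direction under the degree-$3$ restriction. In the unbounded-degree and $\maxdegree{G}\ge 8$ reductions one can place high-degree ``hub'' vertices that create sharp bottlenecks pinning down the spanning tree; with maximum degree $3$ this tool disappears, and each hub must be simulated by a long path or cycle of degree-$3$ vertices, which gives the adversarial spanning tree much more freedom. Ruling out all the spurious ways $T$ can thread through the flowers, distributors, and clause gadgets requires a careful structural case analysis, hand in hand with a delicate calibration of the gadget sizes and of the global parameter $K$ so that both directions hold at once -- larger gadgets tend to make completeness easier but soundness harder to certify. I expect essentially all of the technical weight, and all of the case checking, to sit in that step.
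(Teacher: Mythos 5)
Your high-level plan coincides with the paper's: the reduction is indeed from the restricted SAT variant $\problemMPNSAT$, the variable/clause structure is realized by ``flower''-type gadgets, and the congestion analysis is carried out cut by cut. But what you have written is a plan, not a proof: no gadget is actually constructed, the threshold $K$ is never specified, and you yourself state that ``essentially all of the technical weight, and all of the case checking'' sits in the soundness step that you do not carry out. For this theorem that step \emph{is} the theorem -- the completeness/soundness skeleton you describe is the standard template for every $\problemSTC$ hardness proof, and the entire difficulty is exhibiting degree-$3$ gadgets for which the soundness case analysis actually closes.

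Concretely, the device you are missing is the one the paper uses to neutralize exactly the obstacle you flag (the loss of high-degree hub vertices). The reduction is modularized through \emph{double-weighted edges}: one first builds a weighted graph in which an edge of weight $\dbweight{a}{b}$ contributes $a$ to any cut separating its endpoints when it is a non-tree edge and $b$ when it is a tree edge, and then replaces each such edge by a degree-$3$ gadget $\dwgadget(a,b)$ consisting of $a$ parallel copies of a ``wall-of-bricks'' bottleneck $\bottleneck(b-a+1)$ strung between two ports. The load-bearing property is (b2) of the bottleneck: \emph{every} spanning tree's path between the two gates contains an edge of congestion at least $w$, which is proved via a planar-duality argument on the hexagonal grid. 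This is what recovers, inside a degree-$3$ graph, the sharp forced bottlenecks that high-degree vertices provide in the $\Delta\ge 8$ construction, and it is what makes your canonical-tree argument provable rather than merely hoped for. Without this gadget (or an equivalent), your soundness direction has no mechanism for ruling out the ``adversarial'' spanning trees you correctly identify as the problem, so the proposal as it stands does not establish the theorem.
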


Naturally, this theorem is true for all degree bounds $\Delta\ge 3$, and it remains true for $3$-regular graphs.
(Vertices of degree $1$ or $2$ can be removed from the graph, without affecting its maximum congestion value.)
This result fully resolves the status of $\problemSTC$ for bounded-degree graphs. 

\smallskip

We also  study the $\problemKSTC{K}$ problem for $K$-edge-connected graphs (see Section~\ref{sec: linear-time algorithm}),
proving the following theorem:

\begin{theorem}\label{thm: algorithm for k-edge-connected}
There is an $\tildeO(m)$-time algorithm that,
given a $K$-edge-connected graph $G$, determines whether
$\stc{G}=K$.
\end{theorem}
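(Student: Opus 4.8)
\emph{Proof plan.} Since $G$ is $K$-edge-connected, every cut of $G$ --- in particular every fundamental cut of a spanning tree --- has at least $K$ edges, so $\stc{G}\ge K$ always. Hence I would first reduce the problem to deciding whether $G$ has a spanning tree $T$ all of whose fundamental cuts are minimum cuts of $G$ (equivalently, every edge of $T$ has congestion exactly $K$). Rooting such a $T$ at an arbitrary vertex $r$, its family of descendant-sets $\{D_v : v\ne r\}$ is a laminar family of minimum cuts in which each $D_v$ contains $v$ as a ``private'' element, belonging to no smaller set of the family; conversely a laminar family of this shape yields a spanning tree provided the implied parent edges actually lie in $G$. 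So the real task is to decide whether some ``tree-shaped'' laminar subfamily of the minimum cuts is realizable by a subgraph of $G$.

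The next step is to compute the cactus representation $\calR$ of the minimum cuts of $G$ together with the map $\phi\colon V(G)\to V(\calR)$; this can be done in $\tildeO(m)$ time by known near-linear-time algorithms. Every minimum cut of $G$ is a union of the bags $\phi^{-1}(x)$, and by the private-element property every non-root $v$ satisfies $\phi^{-1}(\phi(v))\subseteq D_v$; hence two distinct vertices sharing a bag would be mutual descendants of one another, which is impossible. So I would record the necessary condition that $\phi$ be injective (all bags of size at most $1$); if it fails, answer \textsc{no}. (Equivalently: every pair of vertices must be separated by some minimum cut --- the Gomory--Hu tree of $G$ must have all its edge weights equal to $K$.)

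Assuming $\phi$ injective, $\calR$ is a cactus on the $n$ vertices of $G$ plus $O(n)$ Steiner nodes with empty bags, of total size $O(n)$. I would then root the block-cut tree of $\calR$ and decide realizability by a bottom-up pass, using the following local picture. A bridge of $\calR$ is ``flexible'': its cut may or may not serve as a tree edge, and it merely attaches the part below to the part above. A cycle block of length $\ell$ forces the bag-vertices around it to be linked consecutively in $T$ by a path --- and, using connectivity of $G$, one checks that consecutive bag-vertices of a cycle block are necessarily adjacent in $G$ --- so the only freedom is which node of the cycle is the ``top'' endpoint and how the subtrees attached at the remaining nodes hang off that path. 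The binding constraints arise at the branch points (nodes in several blocks) and at Steiner nodes: there the subtrees coming from the different branches must be combined into one tree whose newly created fundamental cuts are again minimum cuts, which forces one branch to act as a ``hub'' joined, by edges of $G$, to an admissible root of each of the other branches. When the branch point is a real vertex it can be the hub itself; when it is a Steiner node this becomes a genuine requirement on $G$, generalizing the case where $\calR$ is a star, in which $\stc{G}=K$ precisely when $G$ has a universal vertex. So the recursion propagates, for each subtree, the set of vertices allowed to be its root, and at each branch point tests for a compatible hub; the algorithm answers \textsc{yes} iff the recursion succeeds at the root of the block-cut tree.

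I expect the main obstacle to be making this characterization precise and proving both directions: one must say exactly which sub-collections of cactus cuts are the fundamental cuts of a spanning tree, handle the interaction between cycle blocks and branch points, and track the admissible-root sets --- which, I anticipate, stay simple (a single cut vertex, or the two endpoints of a cycle's path, and the like), and that is what keeps the state space small. The secondary obstacle is staying within the $\tildeO(m)$ budget: the cactus itself costs $\tildeO(m)$, but a naive hub test at a branch point might inspect an edge of $G$ once for each branch point on its endpoints' cactus path, which is $\Theta(nm)$ in the worst case; I would need a single, carefully ordered traversal that charges each edge only polylogarithmically, after which the remaining bookkeeping is $O(m\log n)$.
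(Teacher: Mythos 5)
Your plan follows essentially the same route as the paper: reduce to finding a spanning tree all of whose fundamental cuts are minimum cuts, compute the cactus representation in $\tildeO(m)$ time, reject if $\phi$ is not injective, and then run a bottom-up dynamic program over the laminar structure of (basic) cactus cuts, propagating for each cut the set of vertices that could serve as the attachment point of the subtree below it (your ``admissible roots'' are the paper's \emph{hubs}), with special handling of cycle blocks via a forced path structure. So the strategy is right; the issue is that everything you flag as ``the main obstacle'' is in fact the entire technical content of the proof, and it is absent. Concretely, you have not (i) defined the local certificate (the paper's \emph{safe trees}) that makes the admissible-root sets well-defined from the subgraph below a cut alone, (ii) proved the two-directional recurrences expressing the hub set of a cut in terms of the hub sets of its children --- which require showing that in any congestion-$K$ tree all tree edges crossing a basic cut share an endpoint, a fact that rests on the non-crossing property of basic cuts against \emph{all} minimum cuts, not just basic ones --- or (iii) carried out the cycle analysis, where the tree must traverse all but one of the $\ell$ half-cuts of a cycle and the hub candidates split into ``front-spine'' and ``back-spine'' cases.

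Two of your specific anticipations are also off in ways that matter. First, the claim that ``consecutive bag-vertices of a cycle block are necessarily adjacent in $G$'' is not right as stated: cycle nodes may be Steiner nodes with empty bags, and the forced path through a cycle passes through one vertex $w_i$ chosen from each \emph{shore} $Z_i$ (the whole subtree hanging below the $i$-th cycle node), not from the bags; which $w_i$ works depends on the hub sets of the children, which is exactly what the DP must track. Second, the admissible-root sets do not ``stay simple'' in the sense of one or two candidates: a hub must merely be an endpoint of one of the $K$ cut edges, so these sets can have up to $2K$ elements. The time bound survives anyway --- $O(K)$ set operations of size $O(K)$ per cactus node gives $\tildeO(Kn)=\tildeO(m)$ since every vertex has degree at least $K$ --- but that accounting, and the single-pass computation of maximal spine lengths around each cycle, still needs to be done to meet the $\tildeO(m)$ budget you correctly identify as the secondary obstacle.
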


Above, $m$ is the number of edges, and
the $\tildeO(m)$ time bound is actually independent of $K$.
Our solution is based on the so-called cactus representation of $K$-cuts in $K$-edge-connected graphs
that was developed by Dinic~\etal~\cite{dinitz_etal_strukture_systemy_1976} (see also~\cite{Fleiner_Frank_cactus_mincuts_2009}).
We further refine this characterization for graphs with congestion $K$, to obtain additional properties that lead to
a fast algorithm.
Besides its own interest, this result sheds new light on the complexity of $\problemKSTC{4}$, 
showing that its difficulty is related to the presence of cuts of size less than $4$ in the graph.


\myparagraph{Other related work.}
General bounds for the spanning-tree congestion value have been well studied.
For graphs with $n$ vertices and $m$ edges it is known that the congestion is at most $O(\sqrt{mn})$
and that there are graphs where this value is $\Omega(\sqrt{mn})$~\cite{chandran_et_al_spanning_tree_congestion_2018}.
Since the congestion of an $n$-clique is $n-1$, this implies that, somewhat counter-intuitively, the congestion value is 
not monotone: adding edges can actually decrease the congestion, and quite substantially so. 

This non-monotonicity is particularly challenging in the context of approximations. Indeed, very little is known 
about the approximability of $\problemSTC$. While the upper bound of $n/2$ on the approximation ratio 
is trivial (achieved by \emph{any} spanning tree~\cite{otachi_2020_survey_spanning_tree_congestion}), the best known lower bound is only $1.2$,
implied directly by the $\NP$-completess of $\problemKSTC{5}$~\cite{luu_chrobak_2025_better_hardness_algo}. 
In a recent work,  Kolman~\cite{kolman_approximating_congestion_2025} developed
an algorithm with approximation ratio $\tildeO(\Delta)$, where $\Delta$ is the maximum vertex degree.
Yet the general problem remains open; in particular it is not known if it is possible to achieve ratio $O(n^\delta)$, for some $\delta < 1$.

The spanning-tree congestion is related to the tree spanner problem which seeks a spanning tree with minimum stretch factor. 
The two problems are in fact equivalent in the case of planar graphs: the spanning-tree congestion of a planar graph 
is equal to the minimum stretch factor of its dual plus one~\cite{fakete_2001_tree_spanner,otachi_2010_complexity_result_stc} 
(cf.~\cite{cai_1995_tree_spanner,fakete_2001_tree_spanner,dragan_2011_spanner_in_sparse_graph} and the references therein for
further discussion). It is worth mentioning here that
the complexity status of the tree 3-spanner problem has remained open since its introduction in 2014~\cite{cai_1995_tree_spanner}.

The $\problemSTC$ problem can be relaxed by dropping the restriction that the tree to be computed
is a spanning tree of the graph. In this version, the tree
must include all vertices, but its edges do not need to be present in the underlying graph. This variant
arises in the context of multi-commodity tree-based routing~\cite{seymour_1994_call_routing,khuller_1993_designing_multi_commodity_flow_tree},
and appears to be computationally easier that $\problemSTC$, as it admits an $O(\log n)$ approximation.

Readers interested in learning more about the $\problemSTC$ problem are referred to
the survey by Otachi~\cite{otachi_2020_survey_spanning_tree_congestion} that covers the state-of-the-art as of 2020, and
to the recent paper by Lampis~\etal~\cite{lampis_etal_parameterized_spanning_tree_congestion_2025} that has
additional information about some recent work, in particular about the parametrized complexity of $\problemSTC$.


\section{Preliminaries}
\label{sec: preliminaries}

Throughout the paper, by $G = (V,E)$ we denote an undirected graph with 
vertex set $V$ and edge set $E$.  
For a vertex $u$, by $E_u$ we denote the set of edges incident with $u$ and by
$N_u$ or $N(u)$ we denote the set of $u$'s neighbors. We extend this
notation naturally to sets of vertices.
For a set of edges
$E'\subseteq E$, $V(E')$ denotes the set of all vertices incident with some edge of $E'$. 

Recall that a connected graph $G = (V,E)$ is said to be \emph{$K$-edge-connected} if
it remains connected even after removing $K-1$ edges. 
For any subset $X\notin\braced{\emptyset,V}$ of vertices, by $\cutof{X}$ (or $\cutof{\barX}$) we denote the set of
edges between $X$ and $\barX=V\setminus X$, and we call it a \emph{cut}.
We refer to $X$ and $\barX$ as the \emph{shores} of cut $\cutof{X}$.
If $|\cutof{X}| = K$, we say that $\cutof{X}$ is a \emph{$K$-cut},
and if $|X|=1$ or $|\barX| = 1$ then we call cut $\cutof{X}$ \emph{trivial}.

If $T$ is a spanning tree of $G$ and $e$ is an edge of $T$, removing $e$ from $T$ disconnects $T$ into two
connected components. Given a vertex $x$ of $T$, we denote the
component containing $x$ by $\treecutin{T}{x}{e}$ and the component not
containing $x$ by $\treecutout{T}{x}{e}$ (we interpret these as sets of vertices of $G$).
The cut $\cutof{\treecutin{T}{x}{e}}=\cutof{\treecutout{T}{x}{e}} = \cutof{\sptreecut{T}{e}}$ is called \emph{the cut induced by $e$}.
(So $\treecutin{T}{x}{e}$ and $\treecutout{T}{x}{e}$ are its two shores.)
Its cardinality $|\cutof{\sptreecut{T}{e}}|$ is called the \emph{congestion of $e$ in $T$} and is
denoted by $\cng{G,T}{e}$, or $\cng{T}{e}$ if $G$ is understood from context.
The \emph{congestion of tree $T$}, denoted $\maxcng{G,T}$, is the maximum
edge congestion in $T$. 
The minimum value of $\maxcng{G,T}$ over all spanning trees $T$ of $G$ is called
the \emph{spanning-tree congestion of $G$} and is denoted $\stc{G}$.
It is easy to see that this definition, expressed in terms of induced cuts, is
equivalent to the definition given at the beginning of Section~\ref{sec: introduction}.


\section{$\NP$-Hardness for Degree-$3$ Graphs}
\label{sec: np-hardness for degree-3 graphs}

In this section we prove Theorem~\ref{thm: np-completeness for degree 3}.
Our proof is via a polynomial-time reduction from an $\NP$-complete 
version of SAT (defined below), mapping a boolean expression $\phi$
into a graph $G$ and integer $K$, such that $\stc{G}\le K$ if and only if $\phi$ is satisfiable.
$G$ consists of multiple \emph{gadget} subgraphs, some corresponding to variables and some to clauses,
as well as one additional \emph{root gadget}, with appropriate connections in and between these gadgets. 
Some gadgets are constructed from smaller \emph{sub-gadgets}. 
The most basic gadget is called a \emph{double-weight gadget}, and it allows us to use
edges that are assigned two weight values, with appropriate interpretation.
Using double-weighted edges, we construct a more complex \emph{flower gadget}
that will be used as the root gadget and as the gadgets for some clauses.
The restriction to degree $3$ makes these constructions quite intricate.
A considerably simpler (through structurally similar) proof for graphs of degree at most $4$ can be found in Appendix~\ref{sec: np-hardness degree 4}.


\myparagraph{Problem $\problemMPNSAT$.} 
This is an $\NP$-complete restriction of SAT~\cite{luu_chrobak_2025_better_hardness_algo}, 
whose instance is a boolean expression in conjunctive normal form with the following properties:
\begin{description}[nosep]
\item{(i)} each clause contains either three positive literals (3P-clause), or two positive literals (2P-clause), or two negative literals (2N-clause), and
\item{(ii)} each variable appears exactly three times, exactly once in each type of clause,
	and any two clauses share at most one variable.
\end{description}

We use the following conventions: letter $\phi$ is an instance of $\problemMPNSAT$,
boolean variables are denoted with Latin letters $x,y,z$, while for clauses we use Greek letters
$\kappa$, $\alpha$, $\beta$, $\gamma$ and $\pi$. Typically,
$\alpha$, $\beta$ and $\gamma$ denote a 2N-clause, 3P-clause and 2P-clause, respectively,
$\kappa$ is a clause of any type, and $\pi$ is a positive clause.
For a variable $x$, by the \emph{2N-clause of $x$} we mean the unique
2N-clause that contains the negative literal of $x$. Similarly, the \emph{2P-} and \emph{3P-clauses of
$x$} are the unique 2P- and 3P-clauses, respectively, that contain the positive literal of $x$.


\myparagraph{Double weights.}
In the graph we construct from an instance of $\problemMPNSAT$
we will need the notion of double-weighted edges, introduced by Luu and Chrobak~\cite{luu_chrobak_2025_better_hardness_algo}. 
Consider a pair of weight functions $\firstweight, \secondweight : E \to \braced{1,...,M}$, 
where $M$ is a positive integer whose value is polynomial in $n$.
Given a spanning tree $T$ and an edge $e=\parend{u,v}\in T$, we define the \emph{weighted congestion} of $e$ to be
\begin{equation*}
\cng{G,T} {e} \;=\; \textstyle \secondweight(e) + \sum_{e' \in \cutof{\sptreecut{T}{e}} \setminus \braced{e}} \firstweight(e') \ .
\end{equation*}
The definitions of $\maxcng{G,T}$ and $\stc{G}$ extend naturally to double-weighted graphs.
We also define the \emph{weighted degree} of $v$ to be $\degree{G}{v} = \sum_{(u,v) \in E} \firstweight(u,v)$. 
(Note that it depends only on weight function $\firstweight$.)
As long as the meaning is clear from context, we will often drop the word \emph{weighted} and
write simply \emph{congestion} or \emph{degree}, while we mean the weighted versions of these terms.

When $\firstweight(e) = \secondweight(e) = 1$, we say 
that $e$ is \emph{unweighted}. For a double-weighted edge $e$ with $(a,b)= (\firstweight(e),\secondweight(e))$, 
we write its weight as $\dbweight{a}{b}$.

We will only use weight functions such that $\firstweight(e)\le \secondweight(e)$ for each $e\in E$,
so one can think of $\firstweight$ as the \emph{light weight} function and $\secondweight$ as the \emph{heavy weight} function.
An important intuition is that the edges in $T$ contribute their heavy weight to their induced cuts, while other edges
contribute their light weight to the cuts they belong to. This is why using double-weight edges gives us more control over the
structure of trees with optimal congestion, thus greatly simplifying the construction of our target graph.

An edge $e=(u,v)$ with $\firstweight(e) = \secondweight(e)$ can be replaced by $\firstweight(e)$ edge-disjoint paths from $u$ to $v$ without affecting
the congestion or vertex degrees.
As shown by Luu and Chrobak~\cite{luu_chrobak_2025_better_hardness_algo}, this idea can be extended to \emph{unequal} double weights: in a graph $G$, 
an edge with weight $\dbweight{a}{b}$ (under some mild assumptions) can be replaced by an appropriate unweighted sub-graph called a \emph{double-weight gadget},
so that the resulting graph $G'$ has the property that $\stc{G} \le K$ if and only if $\stc{G'} \le K$.
In their construction,
the maximum degree of $G'$  becomes large if $b$ is large. 
Lampis~\etal~\cite{lampis_etal_parameterized_spanning_tree_congestion_2025} provide a low-degree double-weight 
gadget, but it is not sufficient for our purpose (because it contains vertices of degree $4$).

In Appendix~\ref{sec: double weight gadget}, we construct a degree~$3$ double-weight gadget $\dwgadget(a,b), s^\ast, t^\ast$,
where $a,b$ represent a double weight $\dbweight{a}{b}$ and $s^\ast, t^\ast$ are two designated vertices called ``ports''.
There we also prove that $\dwgadget(a,b)$ has the following properties.


 
\begin{lemma}\label{lemma: double weight gadget}
Let $K\ge 3$ be an integer, $G$ a double-weighted graph of maximum degree $\Delta\ge 3$, 
and $e=(u,v) \in E$ an edge in $G$ with double weight $\dbweight{a}{b}$ satisfying $a<b$ and $b-a \le K-2$. 
Let $G'$ be the graph obtained from $G$ by removing $e$ and replacing it by a degree-$3$ double-weight
gadget $\dwgadget(a,b)$
whose ports $s^\ast$ and $t^\ast$ are identified with $u$ and $v$, respectively.
Then 
\begin{description}[nosep]
	\item{\normalfont{(i)}} The (weighted) degrees of the original
		vertices of $G$ (including $u,v$) remain unchanged. Thus the maximum degree in $G'$ is $\Delta$.
	\item{\normalfont{(ii)}} $\stc{G} \le K$ if and only if $\stc{G'} \le K$. 
\end{description}
\end{lemma}


\myparagraph{Flowers.} 
One can think of our construction as having an intermediate implicit step, where variables and clauses are mapped
into vertices of an auxiliary graph $G^\ast$ that also contains a high-degree \emph{root} vertex.
To obtain the final graph $G$, the high-degree vertices are replaced by other appropriate gadgets.
One gadget is called a \emph{flower}, and it will be used to replace 2N-clause vertices and the root vertex of $G^\ast$.
One other (unnamed) gadget, with a slightly different functionality, will be used as the variable gadget.

An \emph{$\ell$-terminal flower gadget with integrality $K$}, denoted $\flower(\ell,K)$, is the following graph:
\begin{itemize}[nosep]
\item It has $3\ell$ vertices $c_1, \ldots, c_{\ell}$, $d_1, \ldots, d_{\ell}$, and $t_1, \ldots, t_{\ell}$, 
	that we call the \emph{core}, \emph{dummy}, and \emph{terminal} vertices, respectively.  Vertex $c_1$ is special and is designated as the \emph{center} of the flower.
\item For each $i\in [\ell]$\footnote{By $[\ell]$ we denote the set $\{1,2,\ldots,\ell\}$.}, it has
 the following edges (indexing is cyclic, so $\ell+1=1$):
(i) $\parend{c_i, c_{i+1}}$, $\parend{t_i, d_i}$ and $\parend{t_i, d_{i+1}}$, all with weight $1$,
(ii) $\parend{c_i, d_i}$ with weight $\dbweight{1}{K-1}$ for $i\neq 1$ and weight $1$ for $i=1$.
\end{itemize}

\begin{figure}[ht]
\centering
\includegraphics[width=2in]{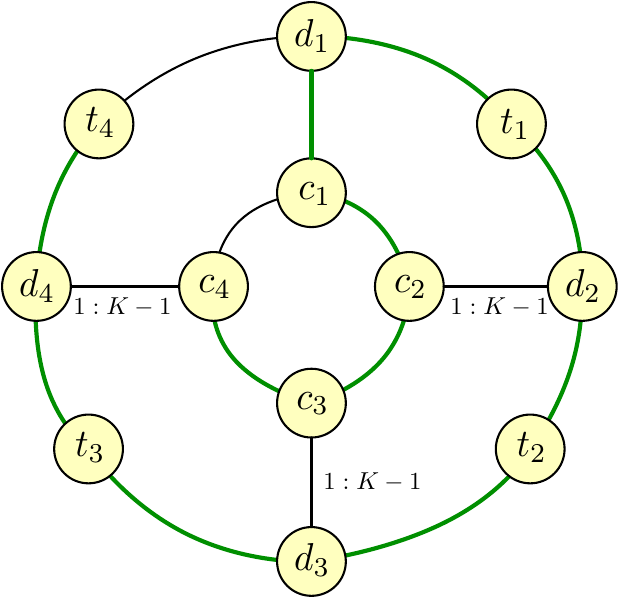}
\caption{A $4$-terminal flower $\flower(4,K)$, and its congestion-$5$ spanning tree marked with thick (green) lines.} \label{fig:4 leaf flower}
\end{figure}

Figure~\ref{fig:4 leaf flower} shows $\flower(4,K)$.
Notice that the core and dummy vertices have degree $3$, while the terminal vertices have degree $2$ (these will be used to attach
the flower to the rest of the graph via edges with unit $\firstweight$-weights).
$\flower(\ell,K)$ has $O(\ell)$ vertices and edges, and it has a spanning tree with congestion $\ell+1$ given by edges
$\parend{c_i, c_{i+1}}$, $\parend{t_i, d_i}$, $\parend{t_i, d_{i+1}}$,
for $i\in[\ell-1]$, and $\parend{t_{\ell}, d_{\ell}}$, $\parend{d_1, c_1}$.

Expanding on the intuition outlined earlier, to simulate a high-degree vertex, a flower should ideally
have the property that a spanning tree with congestion $K$ can
visit it only once in the sense that the flower lies wholly inside one shore of any cut induced by any non-flower edge in this tree.
As this is difficult to achieve using degree-$3$ vertices, we instead require only the core to lie on one 
shore of such cuts. 
To make up for this relaxation, the flower has the property that disjoint paths touching the 
gadget's terminals can be extended to disjoint paths touching the core. This disjoint-paths property will be crucial in
the proof.

Let $G$ be a double-weighted graph, and $U\subsetneq V$ a set of vertices that induces a flower subgraph $\flower = \flower(\ell,K) = (U,F)$
such that the cut $\cutof U$ consists of exactly $\ell$ edges, each connected to a different terminal in $\flower$.
The intuition above is formalized below.


\begin{observation}\label{obs: integrity of core}
For any spanning tree $T$ of $G$ with $\cng{G}{T} \le K$ and an edge $e \in T \setminus F$, all the core vertices of $\flower$ 
belong to the same shore of the cut $\cutof{\sptreecut{T}{e}}$.	
\end{observation}

\begin{proof}
Observe first that $T$ contains no edges between core and dummy vertices of $\flower$ except $(d_1, c_1)$:
if $(d_i, c_i) \in T$ for some $i\ne 1$, then the cut induced by $(d_i, c_i)$ would have congestion $K+1$
(because $(d_i,c_i)$ itself contributes $K-1$, and there are two edge-disjoint paths from $c_i$ to $d_i$ within $\flower$ 
not containing $(d_i, c_i)$), contradicting the assumption that $\cng{G}{T} \le K$.
Thus, $T$ contains all but one edge from the core $c_1-c_2-...-c_\ell-c_1$, which implies the observation.
\end{proof}


Note that in the flower $\flower(\ell,K)$, the $\ell$ length-$2$ paths $t_i-d_i-c_i$ are disjoint.
Thus, given a collection of edge-disjoint paths $\braced{P_i}_{i \in I}$ in $G$ indexed by $I\subseteq [\ell]$, 
where each path $P_i$ starts at some $u_i$ in $V-U$, 
ends at terminal $t_i$ of $\flower$ and does not contain any other vertex of $\flower$, 
we can easily extend it to a collection $\braced{P'_i}_{i \in I}$ of edge-disjoint paths
with each $P'_i$ starting at $u_i$ and ending at $c_i$.


\myparagraph{The reduction.} 
Given an instance $\phi$ of $\problemMPNSAT$, we convert it into
a double-weighted graph $G$ with maximum degree $3$, and a constant $K$ such that:
\begin{description}\setlength{\itemsep}{-0.03in}
	\item{$(\ast)$} the boolean expression $\phi$ is satisfiable if and only if $\stc{G}\le K$.
\end{description}
The weights in $G$ will be bounded by a polynomial function of the size of $\phi$; thus,
per Lemma~\ref{lemma: double weight gadget} and the construction of the double-weight gadget 
$\dwgadget(a,b)$ in Appendix~\ref{sec: double weight gadget}, 
this reduction will be sufficient to establish Theorem~\ref{thm: np-completeness for degree 3}.

Let $K\ge 6$ be a positive integer to be specified later, and let 
$n, m, m_1$ and $m_2$, resp., be the number of variables, clauses, 2N-clauses and 2P-clauses of $\phi$, resp.
We construct a double-weighted graph $G=(V,E; \firstweight, \secondweight)$ as follows (cf.~Fig.~\ref{fig: structure of G}):

\begin{itemize}[nosep]

\item
Create a copy $\rootflower$ of $\flower(2m_1+m_2+n,K)$, that we call \emph{the root} or \emph{root-gadget}:
it has one terminal $t_\rootflower^x$ per each variable $x$, 
one terminal $t_\rootflower^\gamma$ per each 2P-clause $\gamma$, 
and two terminals $t_\rootflower^{\alpha, 1}$ and $t_\rootflower^{\alpha, 2}$ per each 2N-clause $\alpha$. 
(How exactly we assign terminals to clauses/variables is irrelevant.) 

\item
For each positive clause $\pi$, create a vertex $\pi$. For every 2N-clause $\alpha$, 
create a flower $\twoNflower_\alpha = \flower(4,K)$, called the \emph{$\alpha$-gadget} or  \emph{clause-gadget},
and denote its terminals by $t_\alpha^{\rootflower, 1}, t_\alpha^{\rootflower, 2}, t_{\alpha}^x, t_\alpha^y$, where $x$ and $y$ are the variables in $\alpha$.
(How we assign labels to terminals is irrelevant). 

\item
For each variable $x$, create a length-4 cycle $\xtwoN-\xtwoP-\xroot-\xthreeP-\xtwoN$ with unweighted edges, 
called a \emph{variable-gadget} or \emph{$x$-gadget}. 
(The order of vertex labels in this cycle \emph{is important}.)
Then add a \emph{root-variable edge} $(\xroot, t_\rootflower^x)$ with weight $\dbweight{1}{K-5}$,
and \emph{clause-variable} unweighted edges $(t_\alpha^x, \xtwoN)$, $(\beta, \xthreeP)$, $(\gamma, \xtwoP)$, 
where $\alpha, \beta, \gamma$ are the 2N-clause, 3P-clause, and 2P-clause of $x$, respectively.

\item
For each 2P-clause $\gamma$, add one \emph{root-clause} edge $(\gamma, t_\rootflower^\gamma)$ with weight $\dbweight{1}{K-1}$. 
For each 2N-clause $\alpha$, add two \emph{root-clause} edges $(t_\alpha^{\rootflower, 1}, t_\rootflower^{\alpha, 1})$ and $(t_\alpha^{\rootflower, 2}, t_\rootflower^{\alpha, 2})$, 
each of weight $\dbweight{1}{K-1}$. 
(Note that there are no edges between 3P-clases and the root gadget.)

\end{itemize}

\begin{figure}[ht]
\begin{center}
	\includegraphics[width=5in]{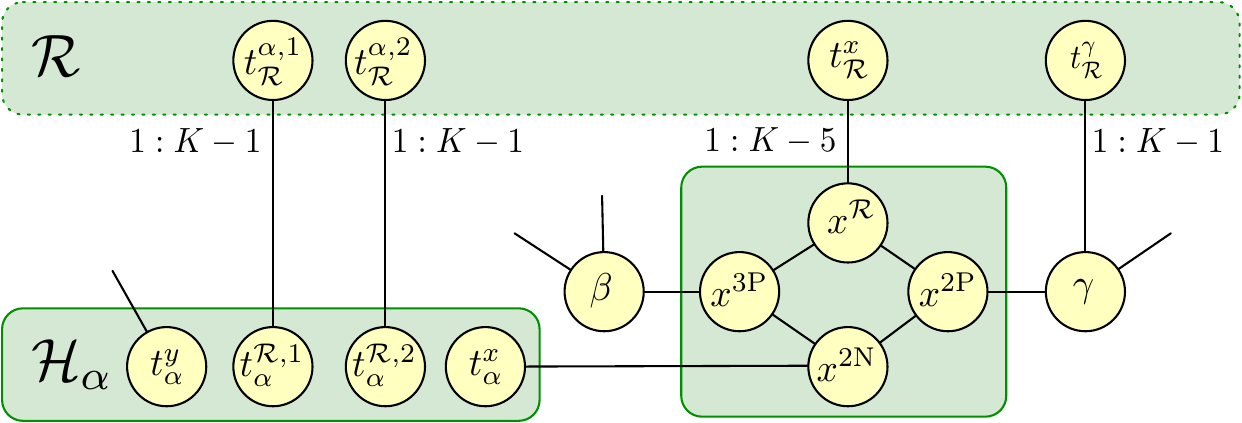}
    \caption{The structure of $G$.}
	\label{fig: structure of G}
\end{center}
\end{figure}

\smallskip

By inspection, all vertices in this construction have degree 3, and $\firstweight(e) =1$ for all $e\in E$. 
The number of edges $|E|$ in $G$ is independent of $K$, so we can set $K=2|E|$. Our double-weighted $\problemSTC$ instance $G$ is now fully specified. 

Any edge belonging to a variable, clause, or root gadget 
is called an \emph{internal edge}, while every other edge is \emph{external}. 
We identify flowers by the corresponding clause (or root); for example by \emph{core of clause $\alpha$} we mean the core of $\twoNflower_\alpha$. 
(Similar terminology applies to flower terminals and centers.) For the sake of uniformity, 
we also refer to positive-clause vertices as \emph{centers}.
In the proof it will be convenient to occasionally work in the auxiliary multi-graph $G^*$ given by contracting all internal edges of $G$,
with each flower and variable gadget contracted to a single vertex, for which we use the same notation as the gadget itself.
We extend this convention to the edges in $G^*$: for example a root-clause edge $(t_\alpha^{\rootflower, 1},t_\rootflower^{\alpha, 1})$ in 
$G$ between a terminal of clause $\alpha$ and a terminal of the root is represented by edge $(\alpha, \rootflower)$ in $G^*$ between $\rootflower$ and $\alpha$.


\myparagraph{Correctness.}
We now need to prove that our construction is correct, namely that it satisfies the condition $(\ast)$. We prove the two
implications in $(\ast)$ separately.

\medskip\noindent
($\Rightarrow$)
Given a satisfying assignment for $\phi$, we convert it into a spanning tree $T$ for $G$ as follows:
(1) for each gadget, add a spanning-tree of unweighted edges for this gadget to~$T$ (one always exists for flowers and 4-cycles),
(2) add all root-variable edges,
(3) for each clause $\kappa$, pick any (exactly one) variable $x$ whose assignment satisfies $\kappa$ 
			and add the clause-variable edge from $\kappa$ to $x$ to $T$.

It is easy to see that $T$ is a spanning tree for $G$. Furthermore, we can obtain a spanning tree $T^*$ for $G^*$ by contracting the internal edges in $G$, and every clause is a leaf in $T^*$. We now show that $\cng{G,T}{e} \le K$ for all $e \in T$.

When $e\in T$ is unweighted, we trivially have $\cng{G,T}{e} \le |E| \le K$. Otherwise $e=\parend{x, t_\rootflower^x}\in T$ is an external root-variable edge, 
in which case the vertices for any given gadget will be together on one shore of the cut induced by $e$, implying $\cng{G,T}{e}=\cng{G^*, T^*}{x,\rootflower}$. 
Thus, it suffices to deal with congestion in $G^*$.

Let $\alpha, \beta, \gamma$ be the 2N-clause, 3P-clause, and 2P-clause containing $x$ respectively. 
One shore of the cut induced by $(x,\rootflower)$ is given either by 
(i) $\braced{x}$,
(ii) $\braced{\kappa, x}$ for $\kappa \in \braced{\alpha, \beta, \gamma}$, or 
(iii) $\braced{x, \beta, \gamma}$. 
In case (i), $\cng{G^*, T^*}{x,\rootflower} = K-5+\degree{G^*}{x}-1  \le K-2$.
In case (ii), $\cng{G^*, T^*}{x,\rootflower}=K-5+3 +2 \le K$, since $\kappa$ is incident to at most 3 edges in the cut and $x$ is incident to 2 edges besides $(\rootflower,x)$.
In case (iii), 
$\cng{G^*,T^*}{x,\rootflower} = K-5 +  2 + 2 + 1 = K$, since both positive clauses are incident to 2 cut edges while $x$ is incident to 1 besides $(\rootflower,x)$.


\medskip\noindent
($\Leftarrow$)
Given a spanning tree $T$ for $G$ with $\maxcng{G,T} \le K$, our goal is to construct a satisfying assignment for $\phi$. Ideally, $T$ would have a form similar to the spanning tree described in the $\Rightarrow$~direction, but this may not be the case. 
 Nevertheless, $T$ has enough structure, captured in Lemmas~\ref{lemma: no root-clause edges} and~\ref{lemma: traversing a variable},
 to define a variable assignment.

\begin{lemma}\label{lemma: no root-clause edges}
Tree $T$ does not contain any root-clause edges.
\end{lemma}

\begin{proof}
First we observe that every root-clause edge has weight $1:K-1$.
Next we observe that for every root-clause edge $(u,v)$, be it an edge between the root and a 2N-clause or 2P-clause, there are (at least) three
edge disjoint paths between the vertices $u$ and $v$. Thus, if the edge $(u,v)$ appears in $T$, it has congestion at least $K-1+2=K+1$,
contradicting $\cng{G}{T} \le K$.
We conclude that no root-clause edge appears in $T$.
%
\end{proof}

To state Lemma~\ref{lemma: traversing a variable}, we need a few more definitions.
For a clause $\kappa$ containing variable $x$, a \emph{traversal from $\kappa$ to $x$} is a path in $T$ that starts from the center of $\kappa$ and  
contains exactly two external edges: the first being the clause-variable edge from $\kappa$ to $x$ (the \emph{entering edge}), 
and the second being any other external edge adjacent to the $x$-gadget (the \emph{exiting edge}). 
We say that \emph{$T$ traverses $x$ from $\kappa$} when such a path exists. Intuitively, a traversal from $\kappa$ to $x$ is a natural way to 
lift the edge $(\kappa,x)$ from $G^*$ to a path in $G$. We emphasize that a traversal begins at the \emph{centers} of flowers, 
as opposed to terminals or dummy nodes; this becomes relevant later in Claim~\ref{claim: connected clauses}.



\begin{lemma}\label{lemma: traversing a variable}
If $T$ traverses $x$ from a negative clause, then $T$ does not traverse $x$ from any positive clause.
\end{lemma}

Assuming the lemma above holds, we define our satisfying assignment as follows: for each variable $x$, make it false if $T$ traverses it from a 
negative clause, otherwise make it true. Lemma~\ref{lemma: no root-clause edges} implies that $T$ includes for each clause $\kappa$ a 
traversal from $\kappa$ to some variable $x$ appearing in it. (In particular, a traversal appears as a prefix of the path in $T$ from
 $\kappa$'s center to the center of the root.) $x$ is false when $\kappa$ is negative by definition of the assignment, 
 and $x$ is true when $\kappa$ is positive by Lemma~\ref{lemma: traversing a variable}. Therefore all clauses are satisfied.

\begin{proof}

\smallskip

	Let $x$ be a variable and $\alpha, \beta, \gamma$ 
	be the 2N-clause, 3P-clause, and 2P-clause of $x$, respectively. Assume for contradiction that $T$ traverses $x$ from $\alpha$ and at 
	least one positive clause $\beta$ or $\gamma$. We show that this assumption implies that $\cng{G}{T} > K$.
	 First we prove the following claim.

\begin{claim}\label{claim: connected clauses}
For some clause $\pi \in \braced{\beta, \gamma}$, the path in $T$ from the center of $\alpha$ to $\pi$ 
contains exactly two external edges, the first being the clause-variable edge $\parend{t_\alpha^x, \xtwoN}$ from $\alpha$ to $x$, 
and the second being the clause-variable edge $\parend{\gamma, \xtwoP}$ or $\parend{\beta, \xthreeP}$ from $x$ to $\pi$.
\end{claim}

To justify Claim~\ref{claim: connected clauses}, let $P$ be a traversal from $\alpha$ to $x$. If the exiting edge of $P$ is $\parend{\xtwoP, \gamma}$ 
(resp.\ $\parend{\xthreeP, \beta}$), then $P$ is simply the path in $T$ from the center of $\alpha$ to $\gamma$ (resp.\ $\beta$), and the 
claim is satisfied for $\pi = \gamma$ (resp.\ $\beta$).
Otherwise, the exiting edge of $P$ is $\parend{\xroot, t^x_{\rootflower}}$, which of course implies that $\xtwoN$ and $\xroot$ are both in $P$.
Now let $\pi \in \braced{\beta, \gamma}$ be a clause where $T$ traverses $x$ from $\pi$, and let $P'$ be a corresponding traversal. 
Then in $P'$, the entering edge must be succeeded by an edge containing $v \in \braced{\xtwoN, \xroot}$, due to the way variable gadgets are labeled. 
This implies that $P$ and $P'$ overlap. The path from $\alpha$'s center to $\pi$ is then given by joining the sub-path in $P$ from 
the center of $\alpha$ to $v$, followed by the sub-path in $P'$ from $v$ to $\pi$. Claim~\ref{claim: connected clauses} then follows.
	
\smallskip

Continuing the proof of Lemma~\ref{lemma: traversing a variable}, let $\pi \in \braced{\beta, \gamma}$ be a clause satisfying Claim~\ref{claim: connected clauses}. 
By Lemma~\ref{lemma: no root-clause edges}, the path in $T$ from the center of $\alpha$ to the center of the root must contain some root-variable edge; 
suppose $e=\parend{\yroot, t_\rootflower^y}$ is the first such edge on this path. Clearly, the center of the root lies on the shore 
$\treecutout{T}{\yroot}{e}$ of the cut $\cutof{\sptreecut{T}{e}}$, while the center of $\alpha$ lies on the opposite shore $\treecutin{T}{\yroot}{e}$. 
On the other hand, the path described in 
Claim~\ref{claim: connected clauses} does not contain any root-variable edges, implying that $\pi$ and $\xtwoN$ also lie on shore $\treecutin{T}{\yroot}{e}$. 
In conjunction with Observation~\ref{obs: integrity of core}, we obtain:
	 
\begin{corollary}
The core of the root is on the shore $\treecutout{T}{\yroot}{e}$, while $\xtwoN$, $\pi$, and the core of $\alpha$ are on the shore $\treecutin{T}{\yroot}{e}$.
\end{corollary}
	 
We now show that $\cng{T}{e} > K$.  The definition of $\problemMPNSAT$ implies that there are four distinct variables $z_1, z_2, z_3, z_4$, 
none equal to $x$, such that $z_1 \in \alpha$, $z_2,z_3 \in \beta$, and $z_4 \in \gamma$. This in turn implies that there are 7 edge-disjoint paths 
crossing $\cutof{\sptreecut{T}{e}}$. In particular, each path begins at either $\xtwoN$, $\pi$, or a core vertex of $\alpha$, 
then ends at a core vertex of the root. As explained in the discussion of flower gadgets, it is sufficient to specify terminals as path endpoints.
Three of these paths are 
\begin{alignat*}{5}
&
t_\alpha^{\rootflower, 1} - t_\rootflower^{\alpha, 1}, 
&&
\quad\quad\quad
&&
t_\alpha^{\rootflower, 2} - t_\rootflower^{\alpha, 2}, 
&&
\quad\quad\quad
&&
t_\alpha^{z_1} - \ztwoN_1 - \ztwoP_1 - \zroot_1 - t_\rootflower^{z_1}.
\end{alignat*}
The choice of the four remaining paths depends on whether $\pi = \beta$ or $\gamma$.
If $\pi=\beta$, these paths are
\begin{alignat*}{3}
&
\xtwoN - \xthreeP - \xroot - t_\rootflower^x, 
&&
\quad\quad\quad
&&
\beta -  \zthreeP_2 - \zroot_2 - t_\rootflower^{z_2},
\\
&\beta - \zthreeP_3 - \zroot_3 - t_\rootflower^{z_3},  
&&
\quad\quad\quad
&&
\xtwoN - \xtwoP - \gamma - t_\rootflower^\gamma.
\end{alignat*}
If $\pi=\gamma$, these paths are
\begin{alignat*}{3}
&
\xtwoN - \xtwoP - \xroot - t_\rootflower^x, 
&&
\quad\quad\quad
&&
\gamma - t_\rootflower^\gamma, 
\\
&
\gamma - \ztwoP_4 - \zroot_4 - t_\rootflower^{z_4},
&&
\quad\quad\quad
&&
\xtwoN - \xthreeP - \beta -  \zthreeP_2 - \zroot_2 - t_\rootflower^{z_2}.
\end{alignat*}
At most one of these paths contains $e$, implying $\cng{G,T}{e} \ge K-5 + 6 = K+1$, 
contradicting $\maxcng{G,T}\le K$. This completes the proof of Lemma~\ref{lemma: traversing a variable}.
\end{proof}


\section{Cactus Representation and Spanning-Tree Congestion}
\label{sec: cactus representation and congestion}

In this section, laying the groundwork for our algorithm in Section~\ref{sec: linear-time algorithm},
we analyze the structure of $K$-edge-connected graphs whose spanning-tree congestion is $K$.
We start, in Section~\ref{subsec: cactus representation},
by reviewing the properties of $K$-edge-connected graphs, captured by so-called \emph{cactus representation}.
Then, in Section~\ref{subsec: cactus for congestion K}, we focus on the special 
case of graphs with congestion $K$, and we prove that this congestion assumption implies 
additional structural properties of such graphs, that will lead to an efficient algorithm.

Throughout this section we assume that $G = (V,E)$ is a given $K$-edge-connected graph 
with $n = |V|$ vertices  and $m = |E|$ edges.


\subsection{Cactus Representation}
\label{subsec: cactus representation}

Two cuts $\cutof{X}$ and $\cutof{Y}$ are called \emph{nested} if one
of the four pair-wise shore intersections $X\cap Y$, $\barX\cap Y$,
$X\cap \bar Y$ and $\barX\cap \barY$ is empty.  If all four
intersections are non-empty then we say that cuts $\cutof{X}$ and
$\cutof{Y}$ \emph{cross}.  A family of cuts is called \emph{laminar}
if any two cuts in it are nested. By $\cutofin{X}{Y}$ we denote the subset of cut $\cutof{X}$ consisting of edges whose both endpoints are in $Y$.


\begin{theorem}\label{thm: k-cuts in k-edge-connected graphs}
{{\normalfont \cite{dinitz_etal_strukture_systemy_1976}}}.
Let $G$ be a $K$-edge-connected (multi-)graph. 
\begin{description}[nosep]
\item {\normalfont{(a)}}
If $K$ is odd, then $G$ has no crossing $K$-cuts.
That is, the family of $K$-cuts is laminar.
\item{\normalfont{(b)}}
If $K$ is even, then any two crossing $K$-cuts
$\cutof{X}$, $\cutof{Y}$ in $G$ satisfy 
$
|\cutofin{X}{Y}| 
	= |\cutofin{X}{\barY}|
	= |\cutofin{\barX}{Y}| 
	= |\cutofin{\barX}{\barY}| = K/2
$.
There are no edges between $X\cap Y$ and $\barX\cap\barY$,
and between $X\cap\barY$ and $\barX\cap Y$.
\end{description}
\end{theorem}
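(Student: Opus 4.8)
The plan is to derive both parts from two elementary counting identities for the cut-size function $d(S):=|\cutof{S}|$, together with the fact that $K$-edge-connectivity forces $d(S)\ge K$ for every $S\notin\braced{\emptyset,V}$. I would begin by fixing two crossing $K$-cuts $\cutof{X}$ and $\cutof{Y}$ and naming the four quadrant sets $A=X\cap Y$, $B=X\cap\barY$, $C=\barX\cap Y$, $D=\barX\cap\barY$. Because the cuts cross, all of $A,B,C,D$ are nonempty, so each of them --- and likewise each of $X\cap Y$, $X\cup Y$, $X\setminus Y$, $Y\setminus X$ --- is a proper nonempty subset of $V$, so $K$-edge-connectivity applies to each.

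Next I would set up the bookkeeping. Writing $e(P,Q)$ for the number of edges with one endpoint in $P$ and the other in $Q$, every cut of interest decomposes into the six inter-quadrant classes $e(A,B),e(A,C),e(A,D),e(B,C),e(B,D),e(C,D)$, and summing these appropriately gives the identities
\[
d(X)+d(Y) \;=\; d(A)+d(D)+2\,e(B,C), \qquad
d(X)+d(Y) \;=\; d(B)+d(C)+2\,e(A,D),
\]
which are just submodularity and posi-modularity of $d$ with their slack terms made explicit. Since $d(X)=d(Y)=K$, the left-hand side of each identity is $2K$; since $d(A),d(B),d(C),d(D)\ge K$ and $e(\cdot,\cdot)\ge 0$, each right-hand side is at least $2K$. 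This is possible only if $d(A)=d(B)=d(C)=d(D)=K$ and $e(A,D)=e(B,C)=0$ --- and the latter is exactly the assertion that there are no edges between $X\cap Y$ and $\barX\cap\barY$, nor between $X\cap\barY$ and $\barX\cap Y$.

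I would then finish by pinning down the four remaining classes. With $e(A,D)=e(B,C)=0$, expanding the six cut sizes $d(A),d(B),d(C),d(D),d(X),d(Y)$ --- all equal to $K$ --- in the unknowns $e(A,B),e(A,C),e(B,D),e(C,D)$ gives a small linear system whose unique solution is $e(A,B)=e(A,C)=e(B,D)=e(C,D)=K/2$. If $K$ is odd this is impossible, so no two $K$-cuts can cross and the family of $K$-cuts is laminar; this is~(a). If $K$ is even, observing that $|\cutofin{X}{Y}|=e(A,C)$, $|\cutofin{X}{\barY}|=e(B,D)$, $|\cutofin{\barX}{Y}|=e(A,C)$, $|\cutofin{\barX}{\barY}|=e(B,D)$, and that all four equal $K/2$, yields~(b).

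I do not anticipate a genuine obstacle: the whole argument is a textbook application of submodularity, and the statement is in any case attributed to Dinitz~\etal, so this is a reconstruction of their proof rather than a new one. The only steps needing care are checking that the four derived sets above are proper nonempty subsets (so that the connectivity bound may legitimately be invoked on each), and getting the two counting identities exactly right, since the entire conclusion is squeezed out of their equality cases.
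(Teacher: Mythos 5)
Your argument is correct and complete. The paper does not prove this theorem at all --- it is quoted from Dinic, Karzanov and Lomonosov \cite{dinitz_etal_strukture_systemy_1976} --- so there is no in-paper proof to compare against; your reconstruction via the two exact submodularity/posimodularity identities, squeezing out $d(A)=d(B)=d(C)=d(D)=K$ and $e(A,D)=e(B,C)=0$ and then solving the resulting linear system to get all four remaining classes equal to $K/2$ (whence the parity contradiction for odd $K$), is the standard proof of this result and all the computations check out.
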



\begin{figure}[ht]
	\begin{center}
		\includegraphics[width = 4.5in]{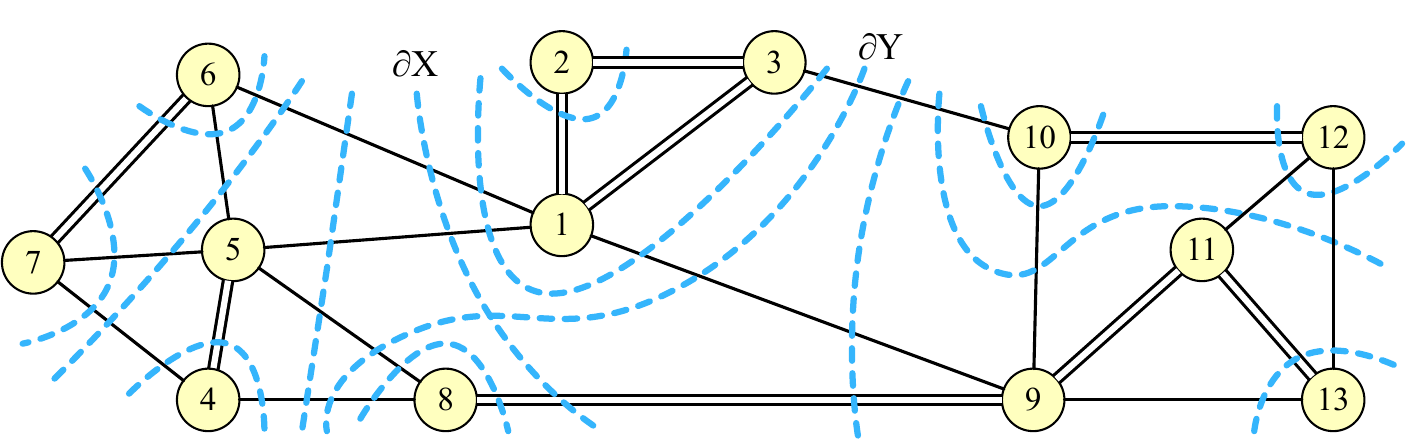}
        \end{center}
	\caption{A $4$-edge-connected multigraph and all its
          $4$-cuts. Parallel lines represent double parallel edges.
		Each vertex $v_i$ is identified by its index $i$.
		For $X = \braced{ v_{4},v_{5},v_{6},v_{7},v_{8} }$
		and $Y = \braced{v_{1},v_{2},v_{3},v_{4},v_{5},v_{6},v_{7}}$,
		cuts $\cutof{X}$, $\cutof{Y}$ cross.
		}
	\label{fig: 4-cuts_example_graph}
\end{figure}


This theorem can be refined to produce an even more informative
representation of $K$-cuts, in terms of so-called \emph{cactus graphs}.
For cactus graphs, we will use
terminology of \emph{nodes} and \emph{links} (instead of vertices and
edges).  A connected multigraph is called a \emph{cactus graph} if every link
belongs to exactly one cycle. (Equivalently, it is a $2$-edge-connected
graph whose biconnected components are cycles.)
Cactus cycles of length $2$ are called \emph{trivial}. 
A degree-$2$ node of a cactus is called an
\emph{external node}, and any other node is called an
\emph{internal node}. From the definition, it follows that each
cactus has at least one external node. If all its cycles are trivial, the cactus forms a tree whose
adjacent nodes are connected by two parallel links and the external nodes are its leaves.


\begin{theorem}\label{thm: k-cuts cactus representation}
{{\normalfont \cite{dinitz_etal_strukture_systemy_1976}}}.
Let $G = (V,E)$ be a $K$-edge-connected graph. Then
there is a cactus graph $\cactusC_G = (U,F)$ and an associated mapping $\phi : V \to U$ with the following properties:
\begin{description}[nosep]
\item {\normalfont{(a)}}
For each set $X\subseteq V$,
$\cutof{X}$ is a $K$-cut if and only if $X = \phi^{-1}(Q)$ for some $Q\subseteq U$ such that
$\cutof{Q}$ is a  $2$-cut of $\cactusC_G$. 
\item{\normalfont{(b)}}
If $K$ is odd, then all cycles in $\cactusC_G$ are trivial;
that is, $\cactusC_G$ is a tree with adjacent nodes connected by two parallel links.
\end{description}
\end{theorem}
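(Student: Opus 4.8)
The plan is to follow the classical proof of the cactus representation of Dinitz, Karzanov and Lomonosov, in the streamlined form of Fleiner and Frank~\cite{Fleiner_Frank_cactus_mincuts_2009}, organized according to the parity split already established in Theorem~\ref{thm: k-cuts in k-edge-connected graphs}. First I would perform a harmless reduction: call two vertices \emph{equivalent} if no $K$-cut separates them, and contract each equivalence class to a single vertex. This changes neither the edge-connectivity nor the family of $K$-cuts (viewed as bipartitions of the vertex set), and in the contracted graph each remaining vertex is separated from some other by a $K$-cut. The node set of the cactus we build will consist of these classes, possibly together with a few extra nodes with empty preimage that serve as gluing junctions, and $\phi$ will send each vertex to the node of its class.

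\emph{The odd case.} By Theorem~\ref{thm: k-cuts in k-edge-connected graphs}(a) the $K$-cuts form a laminar family. I would invoke the standard fact that a laminar family of cuts is representable by a tree: pick a shore of each $K$-cut so that the chosen shores form a laminar family of sets, take their Hasse forest, and extract a tree $T$ whose edges are in bijection with the $K$-cuts and in which each vertex of $V$ sits at a unique node. Turn $T$ into a cactus $\cactusC_G$ by doubling every edge into two parallel links. Then every biconnected component of $\cactusC_G$ is a trivial $2$-cycle, and a $2$-cut of a cactus is exactly a pair of links lying on a common cycle, so the $2$-cuts of $\cactusC_G$ are precisely the pairs of parallel links, hence correspond bijectively to the edges of $T$, i.e.\ to the $K$-cuts of $G$. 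This yields both (a) and (b) simultaneously.

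\emph{The even case}, which is the crux. Now crossing $K$-cuts occur, and the key structural claim is that they organize into \emph{circular partitions}: for suitable partitions $\{A_1,\dots,A_r\}$ of $V$ arranged cyclically, the cuts $\cutof{A_i\cup A_{i+1}\cup\cdots\cup A_j}$ over all proper contiguous arcs are $K$-cuts, and together with the cuts that are not crossed at all they account for every $K$-cut. I would derive this by repeated uncrossing using Theorem~\ref{thm: k-cuts in k-edge-connected graphs}(b): if $\cutof X$ crosses $\cutof Y$, then, arranging the four quadrants $X\cap Y$, $X\cap\barY$, $\barX\cap\barY$, $\barX\cap Y$ cyclically in this order, each quadrant sends exactly $K/2$ edges to each cyclic neighbour and none to the opposite one (this is just the statement that $|\cutofin{X}{Y}|=|\cutofin{Y}{X}|=K/2$ and the ``no edges between opposite quadrants'' clause), so every union of quadrants over an arc is again a $K$-cut; feeding a third crossing cut into this picture and chasing the edge counts forces a consistent cyclic order on the atoms involved. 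One then shows that distinct circular partitions and the uncrossed cuts fit together laminarly -- each circular partition refines a single part of another, or the two are nested as sets -- so that the whole family is obtained from a laminar ``skeleton'' by expanding some of its nodes into cycles of the appropriate lengths. Building $\cactusC_G$ accordingly (a length-$r$ cycle per circular partition, a trivial $2$-cycle per uncrossed $K$-cut, glued along the skeleton with empty junction nodes where needed) and letting $\phi$ map each vertex to the finest region containing it, one checks that the $2$-cuts of $\cactusC_G$ are exactly the $K$-cuts of $G$, which is (a). Part (b) then reduces to the odd case: by Theorem~\ref{thm: k-cuts in k-edge-connected graphs}(a) there are no crossings when $K$ is odd, the skeleton is everything, and all cycles produced are trivial.

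The main obstacle is precisely the circular-partition lemma in the even case: promoting the purely local ``quadrant'' data of Theorem~\ref{thm: k-cuts in k-edge-connected graphs}(b) to a single global cyclic order of the relevant atoms, and then proving the resulting circular partitions interact laminarly so that they can all be realized inside one cactus. Once that structure is available, constructing $\cactusC_G$ and verifying property (a) are essentially bookkeeping, and I would cite the Fleiner–Frank argument~\cite{Fleiner_Frank_cactus_mincuts_2009} (and the original~\cite{dinitz_etal_strukture_systemy_1976}) for the uncrossing details rather than redo them from scratch.
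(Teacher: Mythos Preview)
The paper does not prove this theorem: it is stated with a citation to~\cite{dinitz_etal_strukture_systemy_1976} (and the authors also point to~\cite{Fleiner_Frank_cactus_mincuts_2009}) and used as a black box, so there is no ``paper's own proof'' to compare against. Your outline is a reasonable summary of the classical Dinitz--Karzanov--Lomonosov/Fleiner--Frank argument, and since you explicitly say you would cite those sources for the uncrossing details rather than redo them, what you have written is essentially at the same level of detail as the paper itself: a pointer to the literature.

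If the intent was to supply an actual self-contained proof, then the main gap is exactly the one you flag yourself: the circular-partition lemma and the laminar compatibility of distinct circular partitions are asserted but not proved, and those are the nontrivial steps. Everything else (the odd case via laminar-family trees, doubling edges to get trivial $2$-cycles, the bookkeeping of $\phi$) is routine. So as a proof sketch this is fine and matches the standard route; as a standalone proof it would need the crossing-family structure worked out in full.
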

The pair $\cactusC_G, \phi$ is called a \emph{cactus representation}
of $G$ (see Figures~\ref{fig: 4-cuts_example_graph} and~\ref{fig: 4-cuts_example_cactus}).
A cactus representation of $G$ of size $O(n)$ can be computed in 
near-linear time $\tildeO(m)$~\cite{karger_panigrahi_near-time-time_cactus_2009}.


\begin{figure}[ht]
	\begin{center}
		\includegraphics[width = 3in]{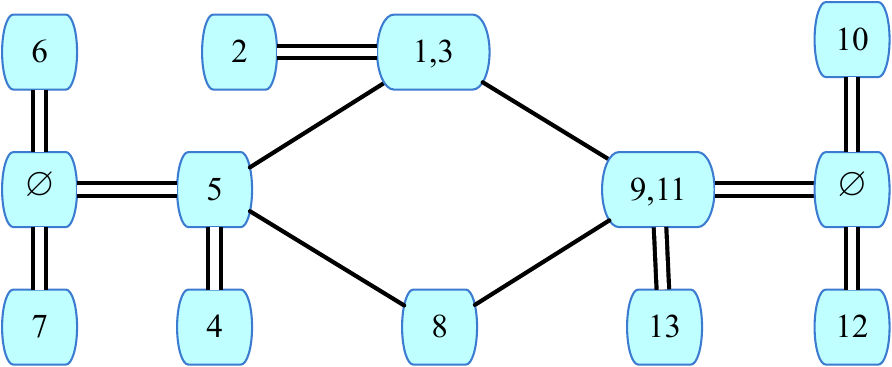}
        \end{center}
	\caption{
A cactus representation of the graph from Figure~\ref{fig: 4-cuts_example_graph}. The numbers represent indices of vertices in $G$
that are mapped by $\phi$ to the corresponding node in $\cactusC_G$.
Note that the pre-image of a node in $\cactusC_G$ could be empty. 
	}
	\label{fig: 4-cuts_example_cactus}
\end{figure}


\myparagraph{Basic $K$-cuts.}
Per Theorem~\ref{thm: k-cuts cactus representation}, each $K$-cut of
$G$ is represented by a $2$-cut of $\cactusC_G$.
Each $2$-cut of $\cactusC_G$ consists of two links that belong to the
same cycle. Thus, a cycle of length $\ell$ in $\cactusC_G$ represents
$\ell\choose 2$ $K$-cuts of $G$. Two $K$-cuts cross in $G$ if and only
if they are represented by two crossing $2$-cuts of $\cactusC_G$; that
is, the four links in these two $2$-cuts belong to the same cycle and
alternate in the order around this cycle.

A shore of any $K$-cut represented 
by two non-adjacent links of this cycle can be obtained as a union of
shores of $K$-cuts represented by pairs of its consecutive links. For our purpose it
is sufficient for us to focus on this subset of $\ell$ cuts represented by such link pairs.
This motivates the following definition.

A $K$-cut $\cutof{X}$ in $G$ is called a \emph{basic $K$-cut} if it is
represented by a pair of links of $\cactusC_G$ that share a node.  If
this shared node is $b$, we say that $\cutof{X}$ is a \emph{basic
$K$-cut associated with $b$}.  Note that $K$-cuts represented by two
parallel links are also basic and are associated with both endpoints of these links. 

From now on, as a rule, when talking about a cut
$\cutof{X}$ associated with $b$, we will represent it by
$X=\phi^{-1}(Q)$ for the shore $Q$ in the $2$-cut of $\cactusC_G$ that
does \emph{not} contain $b$.

For each node $b\in\cactusC_G$ of degree $2d$, the links
incident with $b$ form $d$ disjoint pairs with each pair on the same
cycle of $\cactusC_G$. All the basic $K$-cuts associated with $b$ are
then given by these pairs. Representing them by $\cutof{Z_1}$,
\ldots, $\cutof{Z_d}$ as described above, all the sets
$Z_1$, \ldots, $Z_d$ form a disjoint partition of the 
set $V\setminus\phi^{-1}(b)$.

Basic $K$-cuts form a laminar family. In fact, Theorem~\ref{thm: k-cuts in k-edge-connected graphs}
implies even the following two
stronger properties that will be crucial for our algorithm.

\begin{observation}\label{obs:basic-laminar}
(Non-crossing property of basic cuts)
\begin{description}[nosep]
\item {\normalfont{(a)}}
  A basic $K$-cut does not cross any other $K$-cut (even a non-basic one).
  \label{obs: cuts vs cuts Xj}
\item {\normalfont{(b)}}
Let $\cutof{Z_1}$, \ldots, $\cutof{Z_d}$ be all basic $K$-cuts
associated with $b\in\cactusC_G$. Then for any $K$-cut
$\cutof{Y}$, there exists $j$ for which either $Y\subseteq Z_j$ or
$\barY\subseteq Z_j$.
\end{description}
\end{observation}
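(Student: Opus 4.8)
The plan is to prove part (b) first, working inside the cactus representation $\cactusC_G$, and then obtain part (a) as an easy corollary. For (b), the first step is to describe the basic $K$-cuts at $b$ in cactus terms. The node $b$ has degree $2d$, and its incident links split into $d$ pairs, the $j$-th pair lying on a cactus cycle $C_j$; since two cycles of a cactus meet in at most one node, the cycles $C_1,\dots,C_d$ pairwise intersect only at $b$. Hence removing the node $b$ together with its incident links leaves $\cactusC_G$ with exactly $d$ connected components, whose node sets I denote $Q_1,\dots,Q_d$; following the convention fixed in the text, the basic cut $\cutof{Z_j}$ is represented by the $j$-th link pair at $b$, and removing that pair leaves one side containing $b$ and the other side equal to $Q_j$, so $Z_j=\phi^{-1}(Q_j)$. (In particular this recovers the stated fact that the $Z_j$ partition $V\setminus\phi^{-1}(b)$.)

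Now let $\cutof Y$ be an arbitrary $K$-cut. By Theorem~\ref{thm: k-cuts cactus representation}(a) we may write $Y=\phi^{-1}(Q)$ for some $2$-cut $\cutof Q$ of $\cactusC_G$, and we let $Q'\in\braced{Q,\barQ}$ be the shore of $\cutof Q$ that does not contain $b$ (exactly one of the two shores omits $b$, as they partition the node set of $\cactusC_G$). The crucial observation is that both shores of a $2$-cut of a $2$-edge-connected graph induce connected subgraphs: if a shore $S$ were to split as $S_1\sqcup S_2$ with no edges between the parts, then $|\cutof{S_1}|+|\cutof{S_2}|=|\cutof S|=2$ would force $|\cutof{S_i}|\le 1$ for some $i$, contradicting $2$-edge-connectivity. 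Thus $Q'$ induces a connected subgraph of $\cactusC_G$, and since $b\notin Q'$ it is contained in a single component of $\cactusC_G-b$, i.e., $Q'\subseteq Q_j$ for some $j$. Taking preimages, $\phi^{-1}(Q')\subseteq\phi^{-1}(Q_j)=Z_j$, and as $\phi^{-1}(Q')$ equals $Y$ or $\barY$, this yields $Y\subseteq Z_j$ or $\barY\subseteq Z_j$, proving (b).

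For (a), let $\cutof X$ be a basic $K$-cut associated with $b$, so $X=Z_j$ in the notation above, and let $\cutof Y$ be an arbitrary $K$-cut. By (b) there is an index $i$ with $Y\subseteq Z_i$ or $\barY\subseteq Z_i$. If $i=j$ this reads $Y\subseteq X$ or $\barY\subseteq X$; if $i\neq j$ then $Z_i\subseteq V\setminus Z_j=\barX$, so it reads $Y\subseteq\barX$ or $\barY\subseteq\barX$. In every case one of the four shore intersections of $\cutof X$ and $\cutof Y$ is empty, which is exactly the statement that these cuts do not cross. (For odd $K$, part (a) also follows at once from Theorem~\ref{thm: k-cuts in k-edge-connected graphs}(a), since then no two $K$-cuts cross.)

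The step I expect to be the main obstacle is the cactus-structural claim underlying (b): that removing the node $b$ yields precisely the components $Q_1,\dots,Q_d$ and that $Z_j=\phi^{-1}(Q_j)$. This uses only elementary properties of cacti — distinct cycles share at most one node, and every piece of the cactus attaches to some $C_j$ via a node different from $b$ — but it must be lined up carefully against the paper's convention for which shore names a basic $K$-cut associated with $b$. Once that is in place, the connectivity of $2$-cut shores and the monotonicity of $\phi^{-1}$ make the rest routine.
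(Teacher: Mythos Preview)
Your proof is correct. The paper treats this as an observation following directly from the cactus description and gives no argument beyond the sentence ``Theorem~\ref{thm: k-cuts in k-edge-connected graphs} implies even the following two stronger properties''; your proof via the component structure of $\cactusC_G - b$ and the connectedness of the shores of a $2$-cut in a $2$-edge-connected graph is precisely the natural way to unpack that. The only minor difference is that you deduce (a) from (b), whereas the paper's implicit route to (a) is the remark made just before the definition of basic cuts: two $K$-cuts cross in $G$ iff their representing $2$-cuts cross in $\cactusC_G$, meaning the four links lie on one cycle and alternate --- impossible when one pair shares a node. Both arguments are equally short.
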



\subsection{Cactus Representation for Graphs with Congestion $K$}
\label{subsec: cactus for congestion K}


We now assume that $\stc{G}\le K$, and we show that this assumption implies
additional properties of $G$'s cactus representation. These properties will
play a critical role in our algorithm.

By $\Tstar$ we denote a spanning tree of $G$ with $\cng{G}{\Tstar}\le K$.
Note that by the assumption about $K$-edge-connectivity, $|E_u|\geq K$ for each vertex $u$, and $\stc{G}=K$. 
So each edge $e$ of $\Tstar$ induces a $K$-cut, i.e., we have $|\cutof{\sptreecut{\Tstar}{e}}| = K$.
In particular, every leaf $u$ of $\Tstar$ has degree exactly $K$ in $G$ and $E_u=\cutof{\braced{u}}$ is a (trivial) $K$-cut. 


\begin{observation}\label{obs: cactus nodes singletons or empty}
For every node $b\in\cactusC_G$, we have $|\phi^{-1}(b)|\le 1$.
\end{observation}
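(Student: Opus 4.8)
The plan is to argue by contradiction: suppose some node $b\in\cactusC_G$ has $|\phi^{-1}(b)|\ge 2$, and derive a violation of the congestion bound $\cng{G}{\Tstar}\le K$. The key idea is that two vertices $u,v\in\phi^{-1}(b)$ are, from the point of view of $K$-cuts, "indistinguishable": by Theorem~\ref{thm: k-cuts cactus representation}(a), every $K$-cut $\cutof{X}$ has $X=\phi^{-1}(Q)$ for a shore $Q$ of a $2$-cut of $\cactusC_G$, and since $u,v$ map to the same node $b$, we have $u\in X \iff v\in X$ for every $K$-cut. In particular no $K$-cut separates $u$ from $v$. Since (as noted just before the observation) every edge $e$ of $\Tstar$ induces a $K$-cut, this means $u$ and $v$ lie on the same side of every tree edge — which forces $u$ and $v$ to be adjacent in $\Tstar$, i.e., the edge $uv$ itself is in $\Tstar$, and moreover $u$ and $v$ are the only two vertices (else a third vertex in $\phi^{-1}(b)$, or the rest of the tree hanging off, would create a separating tree edge). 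So actually I first want to reduce to: $\phi^{-1}(b)=\{u,v\}$, $V=\{u,v\}$, and $\Tstar$ is the single edge $uv$. But $G$ is $K$-edge-connected on two vertices means there are $\ge K$ parallel edges $uv$, and the congestion of the unique tree edge is then $\ge K$... which is fine, equal to $K$.

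So the contradiction must come from somewhere more refined — the point is that $G$ has $n\ge 2$ vertices but a cactus representation is only interesting when there are multiple $K$-cuts; more carefully, I should instead argue as follows. Take $u,v\in\phi^{-1}(b)$ distinct. Consider the path in $\Tstar$ between $u$ and $v$; if it has length $\ge 2$ it passes through some intermediate vertex $w\notin\{u,v\}$, and the tree edge incident to $w$ on this path induces a $K$-cut separating $u$ from $v$, contradicting the indistinguishability above. Hence $uv\in\Tstar$. Now remove the edge $uv$ from $\Tstar$: this induces a $K$-cut $\cutof{\sptreecut{\Tstar}{uv}}$ with shores $\treecutin{\Tstar}{u}{uv}\ni u$ and $\treecutout{\Tstar}{u}{uv}\ni v$; but this $K$-cut separates $u$ from $v$, again contradicting indistinguishability — unless one of the shores is empty, i.e. $V=\{u,v\}$. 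And if $V=\{u,v\}$, then there is exactly one $K$-cut in $G$ (the trivial cut), whereas $\cactusC_G$ is a cactus (a connected $2$-edge-connected multigraph), so it has at least two nodes and at least one cycle, hence at least two distinct $2$-cuts, hence $G$ has at least two distinct $K$-cuts — contradiction. (Alternatively: a cactus on one node has no links and is not $2$-edge-connected, and if $\cactusC_G$ has $\ge 2$ nodes then since $\phi^{-1}(b)=V$ all other nodes have empty pre-image, but then the $2$-cut of $\cactusC_G$ separating $b$ from the rest maps to $\cutof{\emptyset}$, which is not a valid $K$-cut, contradicting Theorem~\ref{thm: k-cuts cactus representation}(a) applied in the reverse direction.)

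I expect the main obstacle to be the bookkeeping in the degenerate base case $V=\{u,v\}$: one has to be careful about what a "cactus representation" of a two-vertex multigraph actually is and to confirm that the $\iff$ in Theorem~\ref{thm: k-cuts cactus representation}(a) really does rule it out, rather than hand-waving it. The cleanest route is probably to invoke the forward direction of the correspondence: if $\cactusC_G$ had a node $b'\ne b$, then any $2$-cut of $\cactusC_G$ with $b$ on one side and $b'$ on the other yields, via $\phi^{-1}$, a purported $K$-cut $\cutof{X}$ with $X\subseteq V\setminus\phi^{-1}(b)=\emptyset$, which is not a cut at all — so $\cactusC_G$ is a single node, contradicting that a cactus graph is connected and $2$-edge-connected (a single node supports no links and no cycle). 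The rest of the argument — using induced $K$-cuts of $\Tstar$ to show $u$ and $v$ can be neither separated by an internal tree vertex nor by the edge joining them — is routine once the indistinguishability lemma ("no $K$-cut separates two vertices with the same image under $\phi$") is isolated and stated cleanly at the start.
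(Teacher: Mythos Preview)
Your core idea is exactly the paper's: if $u,v\in\phi^{-1}(b)$ are distinct, then no $K$-cut separates them (since every $K$-cut is $\phi^{-1}(Q)$ for a shore $Q$ of a cactus $2$-cut), while any edge on the $u$--$v$ path in $\Tstar$ induces a $K$-cut that does separate them. The paper's proof is precisely these two sentences, with no case split.

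Your detour is unnecessary and contains a slip. The split into ``path length $\ge 2$'' versus ``$uv\in\Tstar$'' is not needed: even when $uv\in\Tstar$, the edge $uv$ itself lies on the $u$--$v$ path, and removing it yields shores $\treecutin{\Tstar}{u}{uv}\ni u$ and $\treecutout{\Tstar}{u}{uv}\ni v$, both \emph{non-empty} (they contain $u$ and $v$ respectively). So the induced cut is a bona fide $K$-cut separating $u$ from $v$, and you already have your contradiction. Your clause ``unless one of the shores is empty, i.e.\ $V=\{u,v\}$'' is simply false---shores of a spanning-tree edge are never empty---and the entire subsequent analysis of the two-vertex degenerate case and the structure of a one-node cactus is superfluous. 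Drop the case split and the degenerate-case discussion; what remains is the paper's proof.
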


\begin{proof}
Suppose that $\phi^{-1}(b)$ contains two distinct
vertices $u$ and $v$.  Then none of the $K$-cuts represented by
$\cactusC_G$ separates $u$ from $v$.  On the other hand, any edge on
the $u$-to-$v$ path in $\Tstar$ induces a $K$-cut separating $u$
and $v$, a contradiction.
\end{proof}

By Observation~\ref{obs: cactus nodes singletons or empty},
each node $b$ of $\cactusC_G$ can be classified into one of the two categories:
either $\phi^{-1}(b) = \emptyset$, in which case we refer to $b$ as a
\emph{Type-$0$ node}, or $|\phi^{-1}(b)| = 1$, in which case we call
it a \emph{Type-$1$ node}. This is refined further in
the observation below that follows directly from
the fact that if $\deg(u)=K$ then $E_u=\cutof{\braced{u}}$ is a
trivial $K$-cut and $\phi(u)$ is a Type-1 external node of $\cactusC_G$.


\begin{observation}\label{obs: odd K singleton node large degree}
Every node $b\in\cactusC_G$ is of one of the following three types:
\begin{description}[nosep]
\item{\normalfont{(i)}} 
an external Type-$1$ node with $b=\phi(u)$ for a vertex $u$
    of degree exactly $K$,
\item{\normalfont{(ii)}}
an internal Type-$1$ node with $b=\phi(u)$ for a vertex $u$
    of degree strictly greater than $K$, or
\item{\normalfont{(iii)}}
an internal Type-$0$ node.
\end{description}
\end{observation}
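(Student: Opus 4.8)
The plan is to start from the type dichotomy already established by Observation~\ref{obs: cactus nodes singletons or empty}: every node $b$ is either Type-$0$ ($\phi^{-1}(b)=\emptyset$) or Type-$1$ ($\phi^{-1}(b)=\{u\}$ for a unique $u$). So the task reduces to ruling out two combinations: an \emph{external} Type-$0$ node, and an \emph{external} Type-$1$ node whose vertex $u$ has $\deg_G(u)>K$. Everything else then falls into cases (i)--(iii) automatically. I would handle the external Type-$1$ case first, since the text already hints at it: if $b=\phi(u)$ is external (degree $2$ in $\cactusC_G$), then the two links at $b$ form a single $2$-cut of $\cactusC_G$, whose $\cactusC_G$-shore not containing $b$ maps under $\phi^{-1}$ to $V\setminus\{u\}$; hence by Theorem~\ref{thm: k-cuts cactus representation}(a), $\cutof{\{u\}}=E_u$ is a $K$-cut, i.e.\ $\deg_G(u)=K$. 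This is exactly the content the text attributes to case (i), so the only remaining work is the converse bookkeeping: a Type-$1$ node with $\deg_G(u)>K$ cannot be external, hence is internal, giving (ii).

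Next I would rule out external Type-$0$ nodes. Suppose $b$ is external with $\phi^{-1}(b)=\emptyset$. As above, the unique basic $K$-cut associated with $b$ is $\cutof{X}$ with $X=\phi^{-1}(Q)$, $Q$ the $\cactusC_G$-shore not containing $b$; since $\phi^{-1}(b)=\emptyset$ we get $X=V$, so the "other shore'' is empty. But a cut is only defined for $X\notin\{\emptyset,V\}$ (Preliminaries), so there is in fact no $K$-cut here — equivalently, the $2$-cut of $\cactusC_G$ at $b$ separates $b$ from nothing in $V$. This contradicts the standard convention that $\cactusC_G$ is a \emph{reduced} cactus representation, i.e.\ no proper subtree/subcactus hanging off a single $2$-cut has empty $\phi$-preimage; more self-containedly, if such a degenerate external node existed we could contract it away without changing the family of $K$-cuts, so WLOG it does not occur. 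I would state this as a normalization assumption on the cactus representation (the size-$O(n)$ representation of~\cite{karger_panigrahi_near-time-time_cactus_2009} is reduced in this sense) rather than prove it from scratch. With external Type-$0$ nodes excluded, every external node is Type-$1$ and falls under (i); every internal node is either Type-$1$ (then (ii) if $\deg_G(u)>K$, but note an internal Type-$1$ node could a priori still have $\deg_G(u)=K$ — see below) or Type-$0$ (then (iii)).

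The one genuine subtlety — and the step I expect to be the main obstacle — is showing that an internal Type-$1$ node $b=\phi(u)$ must have $\deg_G(u)>K$ \emph{strictly}, i.e.\ that a vertex with $\deg_G(u)=K$ maps to an \emph{external} node. The text asserts this via "the fact that if $\deg(u)=K$ then $E_u=\cutof{\{u\}}$ is a trivial $K$-cut and $\phi(u)$ is a Type-$1$ external node.'' I would justify it as follows: if $\deg_G(u)=K$ then $\cutof{\{u\}}=E_u$ is a $K$-cut, so by Theorem~\ref{thm: k-cuts cactus representation}(a) it corresponds to a $2$-cut $\cutof{Q}$ of $\cactusC_G$ with $\phi^{-1}(Q)=\{u\}$ (taking $Q$ to be the side not containing the rest of $V$). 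Since $b=\phi(u)\in Q$ and no other vertex maps into $Q$, and since $\cactusC_G$ has no Type-$0$ external nodes and (by reducedness) no "dangling'' all-Type-$0$ subcacti, $Q$ can contain no node other than $b$ — any additional node of $Q$ would either be external Type-$0$ (excluded) or would force, by connectivity of the cactus restricted appropriately, a further $K$-cut splitting $\{u\}$, which is impossible. Hence $Q=\{b\}$, the $2$-cut $\cutof{Q}$ is exactly the two links incident to $b$, and $b$ has degree $2$ in $\cactusC_G$, i.e.\ $b$ is external. Contrapositively, an internal Type-$1$ node has $\deg_G(u)\ne K$, and since $\deg_G(u)\ge K$ by $K$-edge-connectivity, $\deg_G(u)>K$, giving case (ii) and completing the trichotomy.
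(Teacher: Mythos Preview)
Your proposal is correct and follows essentially the same approach as the paper, which gives only a one-line justification: the paper simply asserts that if $\deg(u)=K$ then $E_u$ is a trivial $K$-cut and $\phi(u)$ is an external node, leaving everything else implicit. You have correctly identified and filled in the two points the paper glosses over---that external Type-$0$ nodes do not occur, and that a degree-$K$ vertex must map to an external node---and you rightly note that both hinge on the cactus representation being reduced (no subcacti with empty $\phi$-preimage), an assumption the paper makes tacitly via its appeal to the $O(n)$-size construction.
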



\myparagraph{Basic $K$-cuts associated with cactus nodes.}  
A key property of $\Tstar$ that we use in our algorithm is that for any
basic $K$-cut $\cutof{X}$, all edges in $\Tstar\cap\cutof{X}$ share an
endpoint $w$. Lemmas~\ref{lem: K-cuts, type 1 node} and~\ref{lem:
  K-cuts, type 0 node} below show an even stronger property, namely that
for all basic cuts associated with any given node $b$, this common
endpoint will be the same; furthermore for Type-1 node $b$, we have
$\phi(w)=b$, i.e., we have only this single choice for the common
endpoint $w$. See Figure~\ref{fig: cactus nodes and its cuts} for an
illustration.


\begin{figure}[ht]
	\begin{center}
		\includegraphics[width = 1.75in]{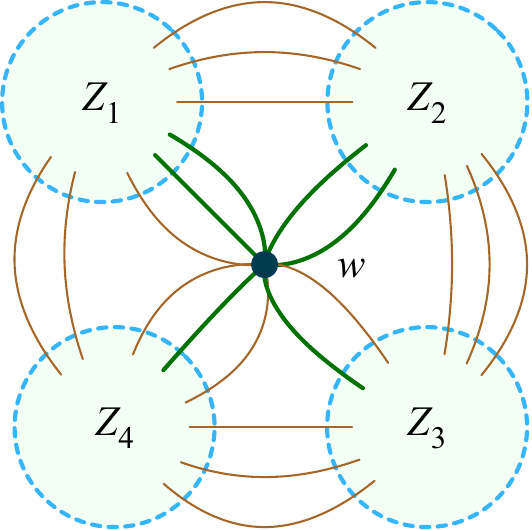}
		\hspace{0.5in}
		\includegraphics[width = 1.75in]{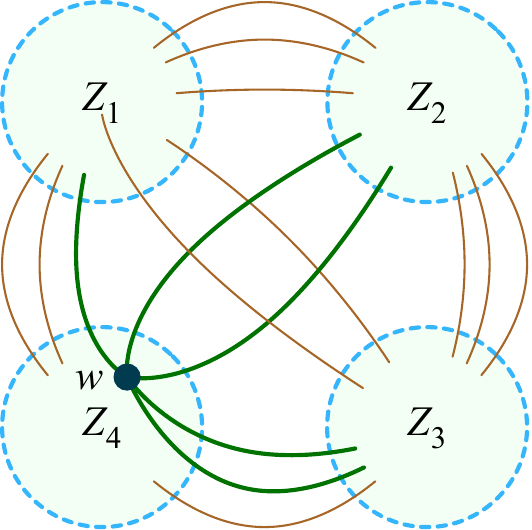}
	\end{center}
	\caption{On the left, an illustration of Lemma~\ref{lem: K-cuts, type 1 node}.
	On the right, an illustration of Lemma~\ref{lem: K-cuts, type 0 node}.
	In both examples $K=8$ and $d=4$. Edges in $\Tstar$ that
	cross the basic $K$-cuts  $\cutof{Z_i}$ are green (dark) and thick, non-tree edges are brown (light) and thin.}
	\label{fig: cactus nodes and its cuts}
\end{figure}


\begin{lemma}\label{lem: K-cuts, type 1 node}
Let $b$ be a Type-$1$ node and $w\in V$ such that $\phi(w)=b$.
Let $\cutof{Z_1}$, \ldots, $\cutof{Z_d}$ be all basic $K$-cuts associated with $b$.
Then $\Tstar \cap \bigcup_{i=1}^d \cutof{Z_i} \;\subseteq\; E_w$.
\end{lemma}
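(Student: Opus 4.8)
The plan is to argue by contradiction: suppose some tree edge $e \in \Tstar \cap \cutof{Z_i}$ is not incident with $w$. Since $\cutof{Z_i}$ is a basic $K$-cut associated with $b=\phi(w)$, and we represent it with the shore $Z_i$ that does not contain $w$, the vertex $w$ lies in $\barZ_i = V\setminus Z_i$. Now recall from the remark preceding the lemma (or rather, the property we are strengthening) that all edges of $\Tstar\cap\cutof{X}$ for a basic cut $\cutof X$ share an endpoint; but I cannot simply invoke that, since that is exactly part of what these lemmas establish. So instead I would work directly. The key object is the edge $e = (p,q)$ of $\Tstar$ crossing $\cutof{Z_i}$, say with $p\in Z_i$ and $q\in\barZ_i$. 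Removing $e$ from $\Tstar$ splits it into $\treecutin{\Tstar}{p}{e}$ and $\treecutout{\Tstar}{p}{e}$, and the induced cut $\cutof{\sptreecut{\Tstar}{e}}$ is a $K$-cut.

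The heart of the argument is to show that if $e$ is not incident to $w$, then one can combine the tree-induced $K$-cut at $e$ with the basic cut $\cutof{Z_i}$ to produce a $K$-cut that crosses the basic cut $\cutof{Z_i}$ — contradicting Observation 4.5(a), which says a basic $K$-cut crosses no other $K$-cut. Concretely: consider the shore $S = \treecutout{\Tstar}{p}{e}$ (the side containing $q$ and hence containing $w$, provided $w$ is on that side — this needs a small argument using that $w \in \barZ_i$ and the structure of the tree). I would look at the four intersections $S\cap Z_i$, $S\cap\barZ_i$, $\barS\cap Z_i$, $\barS\cap\barZ_i$. Because $e$ is internal to neither $\{w\}$-side in the required way, I expect to show all four are nonempty: $q\in S\cap\barZ_i$, $p\in\barS\cap Z_i$, $w\in S\cap\barZ_i$ as well, and the existence of a vertex in $S\cap Z_i$ and in $\barS\cap\barZ_i$ follows from the edge $e$ having its endpoints split across $\cutof{Z_i}$ while the tree path from $e$ into the $Z_i$-side must re-enter. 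The cleanest version is probably: since $|\cutof{Z_i}\cap\Tstar|\ge 1$ and $\Tstar$ restricted to $Z_i$ is connected to the rest only through $\cutof{Z_i}$, one traces the tree to locate witnesses. The main obstacle will be handling the bookkeeping of which shore contains $w$ and ensuring genuine crossing rather than nesting; I would organize this by first establishing that $w$ and all of $\phi^{-1}(b)$'s "local" structure forces $w\in S$, then exhibiting the four witnesses explicitly.

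An alternative, possibly cleaner route avoids crossing arguments entirely and uses a counting/flow argument: for each $i$, let $f_i = |\Tstar\cap\cutof{Z_i}|\ge 1$. The tree $\Tstar$ restricted to each $Z_i$ is a forest; summing, the tree edges leaving $\bigcup Z_i$ collectively must "see" the node $w$ because $w=\phi^{-1}(b)$ and the $Z_i$ partition $V\setminus\{w\}$. If some crossing tree edge of $\cutof{Z_i}$ were not incident to $w$, then within $Z_i$ the tree would need an extra connection, forcing $f_i\ge 2$ or creating a separating set that yields a $K$-cut nested improperly. Then one uses $K$-edge-connectivity ($|E_w|\ge K$, and $E_w$ is itself related to cuts) together with the fact that $\sum_i |\cutof{Z_i}| $ double-counts edges with both endpoints outside $\{w\}$ to pin down that every crossing edge must touch $w$. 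I would try both approaches in a scratch pass and keep whichever needs fewer case distinctions; my guess is the crossing-contradiction via Observation 4.5(a) is the intended one, since that observation is flagged as "crucial for our algorithm" and this is its first real use.

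The expected main obstacle, in either route, is the same: carefully certifying that the auxiliary $K$-cut I build genuinely *crosses* $\cutof{Z_i}$ (all four shore-intersections nonempty) rather than merely being nested with it — the degenerate cases (where $e$ is close to the boundary of $Z_i$, or where one intersection collapses) are where the assumption "$e$ not incident to $w$" must be fed in to rule out nesting. Once crossing is established, the contradiction with Observation 4.5(a) is immediate, and since $i$ was arbitrary this gives $\Tstar\cap\bigcup_{i=1}^d\cutof{Z_i}\subseteq E_w$.
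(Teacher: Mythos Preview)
Your main approach has a genuine gap. You propose to remove the bad tree edge $e=(p,q)$ itself (with $p\in Z_i$, $q\in\barZ_i$, $q\neq w$) and argue that the induced tree $K$-cut $\cutof{S}$ crosses $\cutof{Z_i}$. But this need not happen: take the situation where $\Tstar$ restricted to $Z_i$ is connected and $e$ is the \emph{only} tree edge leaving $Z_i$. Then $\barS=Z_i$ exactly, so $S\cap Z_i=\emptyset$ and $\barS\cap\barZ_i=\emptyset$; the two cuts are equal, not crossing. The assumption ``$e$ not incident to $w$'' places $q$ in some $Z_{i'}$ with $i'\neq i$, but it does nothing to force a witness in $S\cap Z_i$ or in $\barS\cap\barZ_i$. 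Your sentence ``the tree path from $e$ into the $Z_i$-side must re-enter'' is exactly where the argument breaks.

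The paper's proof differs from your plan in two ways, both essential. First, it does not use the bad edge $(u,v)$; instead it takes $e'$ to be the \emph{first edge on the $w$-to-$u$ path} in $\Tstar$. Since $e'$ is incident to $w$ and $(u,v)$ is not, $e'\neq(u,v)$, and removing $e'$ leaves $u$ and $v$ on the same side $Y$ while $w$ sits on the other side $\barY$. Second, it invokes Observation~4.5(b), not~4.5(a): the shore $Y$ contains $u\in Z_i$ and $v\in Z_{i'}$, so $Y\not\subseteq Z_j$ for any $j$; the shore $\barY$ contains $w\notin\bigcup_j Z_j$, so $\barY\not\subseteq Z_j$ either; contradiction. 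Even with the correct edge $e'$, part~(a) alone would not suffice, since $\cutof{Y}$ can still be nested with $\cutof{Z_i}$ (e.g.\ when $Z_i\subseteq Y$). The key structural fact you are not exploiting is that the $Z_j$ partition $V\setminus\{w\}$, which is precisely what makes part~(b) applicable. Your counting alternative is too vague to assess, but the crossing route is salvageable only with both of these changes.
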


\begin{proof}
It is sufficient to prove that for any $i\neq i'$, there is no edge in
$\Tstar$ connecting $Z_i$ and $Z_{i'}$. For a contradiction, assume
that $u\in Z_i$, $v\in Z_{i'}$, and $(u,v)\in\Tstar$. Now consider the
first edge $e$ on the path from $w$ to $u$ in the spanning tree
$\Tstar$. Edge $e$ induces a $K$-cut $\cutof{Y}=\cutof{\Tstar_e}$
with has both $u$ and $v$ in the same shore and $w$ in the other
shore. Thus for any $j$, none of the shores of $\cutof{Y}$ can be
contained in some set $Z_j$. This contradicts Observation~\ref{obs: cuts vs cuts Xj}.
\end{proof}


\begin{lemma}\label{lem: K-cuts, type 0 node}
Let $b$ be a Type-$0$ node, and let $\cutof{Z_1}$, \ldots, $\cutof{Z_d}$ be all basic
$K$-cuts associated with $b$. Then there exists $j$ and $w\in Z_j$ such that 
$\Tstar \cap \bigcup_{i=1}^d \cutof{Z_i} \;\subseteq\; E_w$.
\end{lemma}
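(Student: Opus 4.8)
The plan is to mimic the structure of the proof of Lemma~\ref{lem: K-cuts, type 1 node}, but with one extra case analysis to locate the common endpoint, since now there is no single vertex $w$ with $\phi(w)=b$. First I would observe that, exactly as in Lemma~\ref{lem: K-cuts, type 1 node}, the sets $Z_1,\dots,Z_d$ partition all of $V$ (since $\phi^{-1}(b)=\emptyset$ when $b$ is Type-$0$). Now consider any edge $(u,v)\in\Tstar \cap \bigcup_i \cutof{Z_i}$; say $u\in Z_i$, $v\in Z_{i'}$. By Observation~\ref{obs: cuts vs cuts Xj} applied to the basic cut $\cutof{Z_i}$ and the $K$-cut induced by $(u,v)$, it cannot be that $i\ne i'$: if $i\ne i'$, take the first edge $e$ on the $\Tstar$-path leaving $u$ towards $v$; the induced $K$-cut $\cutof{\Tstar_e}$ has $u$ and $v$ on the same shore (one step suffices since $(u,v)$ is an edge, but more carefully: $e=(u,\cdot)$, and $v$ is still on $u$'s side unless $e=(u,v)$ itself, in which case $v$ is on the far side — so I must be a bit careful here). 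Actually the cleaner route: since $(u,v)$ is itself a tree edge, $\cutof{\Tstar_{(u,v)}}$ is a $K$-cut with $u$ on one shore $S_u$ and $v$ on the other $S_v$; since $u\in Z_i$ and the cut $\cutof{Z_i}$ is basic, Observation~\ref{obs:basic-laminar}(a) says it does not cross $\cutof{\Tstar_{(u,v)}}$, hence one of $S_u\cap Z_i$, $S_v\cap Z_i$, $S_u\cap \barZ_i$, $S_v\cap \barZ_i$ is empty; combined with $u\in S_u\cap Z_i$ and the fact that $Z_{i'}\subseteq \barZ_i$ contains $v\in S_v$, we will be able to derive the desired structural fact about which $Z_j$ contains the relevant endpoint. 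This needs to be written carefully.

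More directly, here is the argument I would actually run. Since $\cutof{Z_i}$ is basic, Observation~\ref{obs:basic-laminar}(b) applied to the family $\cutof{Z_1},\dots,\cutof{Z_d}$ and to $\cutof{Y}$ where $Y$ is a shore of the $K$-cut induced by any tree edge $e$ of $\Tstar$ with exactly one endpoint in some $Z_i$: there is a $j$ with $Y\subseteq Z_j$ or $\barY\subseteq Z_j$. The key claim is that this $j$ is the \emph{same} for all edges of $\Tstar\cap\bigcup_i\cutof{Z_i}$, and that all these edges share an endpoint $w$ lying in that $Z_j$. To see the first part, suppose $e=(u,v)\in\Tstar$ with $u\in Z_i$ and $v\in Z_{i'}$, $i\ne i'$. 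Orient $\Tstar$ and root it anywhere in $Z_i$; removing $e$ splits $\Tstar$ into the part $T_u$ containing $u$ and the part $T_v$ containing $v$. Both $T_u$ and $T_v$ are shores of a $K$-cut, so by Observation~\ref{obs:basic-laminar}(b) each of them is contained in some single $Z_j$ — but $T_u$ contains $u\in Z_i$ and $T_v$ contains $v\in Z_{i'}$, forcing $T_u\subseteq Z_i$ and $T_v\subseteq Z_{i'}$. Consequently the edge $e$ is the \emph{only} edge of $\Tstar$ between $Z_i$ and $Z_{i'}$, and moreover \emph{every} $\Tstar$-edge other than $e$ lies entirely inside some single $Z_j$. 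So $\Tstar\cap\bigcup_i\cutof{Z_i}$ is precisely the set of $\Tstar$-edges joining two different parts of the partition $\{Z_1,\dots,Z_d\}$; contract each $Z_j$ to a point and these edges form a spanning tree of the resulting $d$-vertex tree.

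It remains to show this contracted $d$-vertex tree is a \emph{star}, i.e., there is a center $Z_j$ meeting every cross-edge. If not, the contracted tree has two disjoint edges, i.e., there are $\Tstar$-edges $(u_1,v_1)$ and $(u_2,v_2)$ with $u_1,v_1,u_2,v_2$ in four pairwise-distinct parts, or a path of length $\ge 3$ in the contracted tree. Take the middle part $Z_j$ of such a path; its two incident cross-edges are $(p,q_1)$ and $(p',q_2)$ with $p,p'\in Z_j$ and $q_1\in Z_{j_1}$, $q_2\in Z_{j_2}$, $j_1\ne j_2$. I then want a contradiction with congestion $\le K$, or with the non-crossing property: consider the basic cut $\cutof{Z_j}$ itself. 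The edges of $\Tstar\cap\cutof{Z_j}$ are exactly the cross-edges of $\Tstar$ incident to $Z_j$ — by the first claim applied with $i=j$, these all have their $Z_j$-endpoint at a single vertex $w_j$ (this is the analogue inside $Z_j$ of the Lemma~\ref{lem: K-cuts, type 1 node} argument: run that argument but now for the cut $\cutof{Z_j}$, using that an edge inside $\Tstar$ leaving $w_j$ first would induce a $K$-cut neither of whose shores is inside any $Z_k$, contradicting Observation~\ref{obs: cuts vs cuts Xj}). So in fact $p=p'=w_j$, and picking $w:=w_j$ for the \emph{one} part that is not a leaf of the contracted tree gives a vertex with $\Tstar\cap\bigcup_i\cutof{Z_i}\subseteq E_w$; if the contracted tree is a single edge, either endpoint works. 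The main obstacle I anticipate is the bookkeeping in this last step — making sure the "shared endpoint within $Z_j$" argument (the internal analogue of Lemma~\ref{lem: K-cuts, type 1 node}) is stated in a way that legitimately applies to a Type-$0$ node, where the shared endpoint is not pinned down by $\phi$ but only by this internal tree argument; once that sub-lemma is in hand, choosing the center of the contracted star and reading off $w$ is immediate.
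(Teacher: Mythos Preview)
Your central structural claim is false. You assert that after removing a cross tree-edge $e=(u,v)$ with $u\in Z_i$, $v\in Z_{i'}$, ``both $T_u$ and $T_v$ are shores of a $K$-cut, so by Observation~\ref{obs:basic-laminar}(b) each of them is contained in some single $Z_j$.'' But Observation~\ref{obs:basic-laminar}(b) only promises that \emph{one} of the two shores lies inside a single $Z_j$, not both; and since the $Z_j$'s partition $V$ (as you yourself note), having $T_u\subseteq Z_i$ and $T_v\subseteq Z_{i'}$ simultaneously would force $V=T_u\cup T_v\subseteq Z_i\cup Z_{i'}$, impossible once $d\ge 3$. Consequently your conclusion that ``every $\Tstar$-edge other than $e$ lies entirely inside some single $Z_j$'' fails, and so does the contracted-tree picture built on it: a single $Z_j$ can be entered and left several times along a path in $\Tstar$, so the contraction need not even be a tree.

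Your final paragraph tries to recover by arguing the contracted object is a star and that within the center part $Z_j$ all $Z_j$-endpoints coincide, but this last step is precisely the special case (for the single basic cut $\cutof{Z_j}$) of what you are trying to prove, and you only gesture at it. The paper sidesteps all this scaffolding with a direct contradiction: assume two edges $(u,v),(u',v')\in\Tstar\cap\bigcup_i\cutof{Z_i}$ with all four endpoints distinct, take any edge $e$ on the $\Tstar$-path connecting $\{u,v\}$ to $\{u',v'\}$ that avoids these two edges, and observe that the induced $K$-cut has $\{u,v\}$ in one shore and $\{u',v'\}$ in the other; since $u,v$ lie in different $Z_i$'s (and likewise $u',v'$), neither shore fits in any single $Z_j$, contradicting Observation~\ref{obs:basic-laminar}(b). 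Pairwise sharing of endpoints in a tree then forces a common endpoint.
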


\begin{proof}
Suppose, towards contradiction, that the claim in the lemma is false. That is, there
are edges $(u,v), (u',v') \in \Tstar \cap \bigcup_{i=1}^d \cutof{Z_i}$ with all endpoints $u,v,u',v'$ distinct.
Consider the unique path in $\Tstar$ from $\braced{u,v}$ to $\braced{u',v'}$ that does not include edges $(u,v)$ and $(u',v')$. 
Let $e$ be any edge on this path and let $\cutof{Y} =
\cutof{\sptreecut{\Tstar}{e}}$ be the $K$-cut induced by $e$.
Then $\cutof{Y}$ separates $\braced{u,v}$ from $\braced{u',v'}$. However, $u$
and $v$ are in two different sets $Z_i$; the same holds for $u'$ and
$v'$. Thus for any $j$, none of the shores of $\cutof{Y}$ can be
contained in some set $Z_j$. This contradicts Observation~\ref{obs: cuts vs cuts Xj}.
\end{proof}

The corollary below will play a key role in our algorithm, and it
follows directly from Lemmas~\ref{lem: K-cuts, type 1 node} and~\ref{lem: K-cuts, type 0 node}, as each
basic cut is associated with some node $b$. 

\begin{corollary}\label{cor: odd K, cross edges common endpoint}
For any basic $K$-cut $\cutof{Z}$ of $G$, all edges in
$\Tstar\cap\cutof{Z}$ have a common endpoint.
\end{corollary}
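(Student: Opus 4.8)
The plan is to reduce immediately to the two lemmas just established. First I would recall that, by definition, a basic $K$-cut $\cutof{Z}$ of $G$ is represented by a pair of links of $\cactusC_G$ that share a node $b$; hence $\cutof{Z}$ is a basic $K$-cut associated with $b$, and so it coincides with one of the cuts $\cutof{Z_1},\dots,\cutof{Z_d}$ in the complete enumeration of basic $K$-cuts associated with $b$ (as set up in the paragraph on basic $K$-cuts associated with cactus nodes: the $2d$ links incident with $b$ split into $d$ cycle-pairs, and the basic $K$-cuts associated with $b$ are precisely the $d$ cuts given by these pairs).

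Next I would split on the type of $b$. By Observation~\ref{obs: cactus nodes singletons or empty}, $b$ is either Type-$1$ or Type-$0$. If $b$ is Type-$1$, let $w$ be the unique vertex with $\phi(w)=b$; Lemma~\ref{lem: K-cuts, type 1 node} gives $\Tstar\cap\bigcup_{i=1}^d\cutof{Z_i}\subseteq E_w$, and since $\cutof{Z}$ is one of the $\cutof{Z_i}$ we obtain $\Tstar\cap\cutof{Z}\subseteq E_w$, i.e.\ every edge of $\Tstar\cap\cutof{Z}$ is incident with $w$. If $b$ is Type-$0$, Lemma~\ref{lem: K-cuts, type 0 node} produces some $w\in V$ with the same containment $\Tstar\cap\bigcup_{i=1}^d\cutof{Z_i}\subseteq E_w$, and the identical conclusion follows. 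In either case all edges of $\Tstar\cap\cutof{Z}$ share the endpoint $w$, which is exactly the assertion of the corollary.

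There is essentially no obstacle here: the corollary is a direct repackaging of Lemmas~\ref{lem: K-cuts, type 1 node} and~\ref{lem: K-cuts, type 0 node}. The only point requiring a moment's care is the observation that every basic $K$-cut is captured in the enumeration $\cutof{Z_1},\dots,\cutof{Z_d}$ for its associated node, which is immediate from the way that family was defined. One might additionally remark that a basic $K$-cut can be associated with more than one node (for instance a cut represented by two parallel links is associated with both endpoints), but this causes no difficulty — any one such node suffices to invoke the appropriate lemma, and the resulting common endpoint $w$ is the same vertex regardless.
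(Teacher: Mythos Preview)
Your proposal is correct and matches the paper's approach exactly: the paper simply states that the corollary ``follows directly from Lemmas~\ref{lem: K-cuts, type 1 node} and~\ref{lem: K-cuts, type 0 node}, as each basic cut is associated with some node $b$,'' and your argument spells out precisely this reduction. The only minor quibble is your final parenthetical claim that ``the resulting common endpoint $w$ is the same vertex regardless'' of which associated node is chosen---this need not hold when $|\Tstar\cap\cutof{Z}|\le 1$, but it is also irrelevant, since (as you correctly note just before) any one associated node already suffices.
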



\myparagraph{Basic $K$-cuts of cactus cycles.}
Assume now that $K$ is even, and let $C = a_0 - a_1 - \ldots - a_{\ell-1}
- a_0$ be a non-trivial cycle in $\cactusC_G$, so $\ell\geq 3$.  We
index $C$ cyclically, i.e., $a_i = a_{i \pmod \ell}$, for all integers $i$.
For each $i$, the two consecutive links $(a_{i-1},a_i)$ and $(a_i,a_{i+1})$ of $C$
form a $2$-cut $\cutof{Q_i}$ of $\cactusC_G$, where the shore $Q_i$ is chosen so that $a_i\in Q_i$.
Let $\cutof{Z_i}$ be the basic $K$-cut in $G$ represented by the $2$-cut 
$\cutof{Q_i}$, with $Z_i=\phi^{-1}(Q_i)$.
The sets $Z_0,...,Z_{\ell-1}$ form a partition of $V$, that is $Z_{i}\cap Z_{i'} = \emptyset$ for $i\neq i'$
and $\bigcup_{i=0}^{\ell-1} Z_i = V$.
By Theorems~\ref{thm: k-cuts in k-edge-connected graphs} and~\ref{thm: k-cuts cactus representation}, 
for any $i$, there are exactly $K/2$
edges between $Z_i$ and $Z_{i+1}$, i.e., 
$|\cutof{Z_i}\cap\cutof{Z_{i+1}}|=K/2$;  we call this set
$\cutof{Z_i}\cap\cutof{Z_{i+1}}$ of $K/2$ edges a \emph{half-$K$-cut}. It follows that
$\cutof{Z_i}\cap\cutof{Z_{i'}}=\emptyset$ if $a_i$ and $a_{i'}$ are not consecutive on $C$, 
i.e., whenever $i-i' \notin \braced{1,-1} \pmod{\ell}$.

\begin{figure}[ht]
	\begin{center}
		\includegraphics[width = 3.5in]{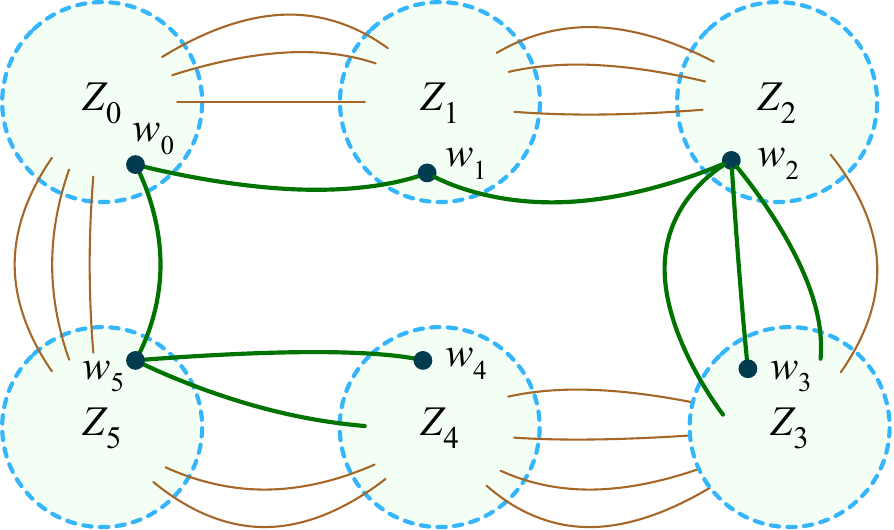}
	\end{center}
	\caption{An illustration of Lemma~\ref{lem: K-cuts, cycle}, for $K=8$, $\ell = 6$ and $g=4$.}
	\label{fig: cactus cycle and its cuts}
\end{figure}

Next, in the lemma below, we show that $\Tstar$ traverses the cuts represented by $C$ in a very restricted way:
for some index $g$, the edges of $\Tstar$ form a length-$(\ell-1)$ path that traverses
consecutive half-$K$-cuts  between sets $Z_{g}, Z_{g+1},...,Z_{g+\ell-1} = Z_{g-1}$,
and the only other edges of $\Tstar$ in the cuts of $C$ are among those connecting a vertex $w_{g+1} \in Z_{g+1}$ to set $Z_g$,
and among those connecting a vertex $w_{g-2}\in Z_{g-2}$ to set $Z_{g-1}$. In particular, $\Tstar$ has
no edges in the half-cut $\cutof{Z_{g-1}}\cap\cutof{Z_g}$.
(See Figure~\ref{fig: cactus cycle and its cuts}.)


\begin{lemma}\label{lem: K-cuts, cycle}
There is an index $g$ and vertices $w_g, w_{g+1}, \ldots, w_{g+\ell-1} = w_{g-1}$, with $w_i\in Z_i$ for $i = g,\ldots,g+\ell-1$,
such that the edge set $\Tstar \cap \bigcup_{i=0}^{\ell-1}\cutof{Z_i}$ has the following form:
\begin{description}[nosep]
\item{\normalfont{(i)}} 
$\Tstar\cap \cutof{Z_i}\cap \cutof{Z_{i+1}} = \braced{(w_{i},w_{i+1})}$ for $i = g+1,\ldots,g+\ell-3$,
\item{\normalfont{(ii)}} 
$(w_g,w_{g+1}) \in \Tstar \cap \cutof{Z_{g}} \cap \cutof{Z_{g+1}} \subseteq E_{w_{g+1}}$, 
	$(w_{g-2},w_{g-1}) \in \Tstar\cap \cutof{Z_{g-1}}\cap \cutof{Z_{g-2}} \subseteq E_{w_{g-2}}$, and
\item{\normalfont{(iii)}} 
	$\Tstar\cap \cutof{Z_{g}}\cap \cutof{Z_{g-1}} = \emptyset$.
\end{description}
\end{lemma}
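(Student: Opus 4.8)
The plan is to analyze the structure of $\Tstar$ restricted to the edges in $\bigcup_i \cutof{Z_i}$, using as the main engine the non-crossing property of basic cuts (Observation~\ref{obs:basic-laminar}). First I would observe that, since the $Z_0,\ldots,Z_{\ell-1}$ partition $V$ and the only nonempty intersections $\cutof{Z_i}\cap\cutof{Z_{i'}}$ occur for consecutive indices, every edge of $\Tstar$ in $\bigcup_i\cutof{Z_i}$ lies in a unique half-$K$-cut $\cutof{Z_i}\cap\cutof{Z_{i+1}}$; so this edge set decomposes as $\bigcup_{i} F_i$ with $F_i = \Tstar\cap\cutof{Z_i}\cap\cutof{Z_{i+1}}$. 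Contract each $Z_i$ to a single super-node; then $\Tstar$ maps onto a connected spanning subgraph $H$ of the cycle $C_\ell$ on super-nodes $\{Z_0,\ldots,Z_{\ell-1}\}$ whose multi-edge between $Z_i$ and $Z_{i+1}$ has multiplicity $|F_i|$. Since $\Tstar$ is connected and spans $V$, $H$ must be connected, so at most one of the $F_i$ is empty (if two were empty, $H$ disconnects); call the missing (or one chosen empty) index $g-1$, so $F_{g-1}=\Tstar\cap\cutof{Z_{g-1}}\cap\cutof{Z_g}=\emptyset$, giving item (iii). After removing that link, $H$ lives on the path $Z_g - Z_{g+1} - \cdots - Z_{g-1}$, and connectivity of $\Tstar$ forces $|F_i|\ge 1$ for every $i=g,\ldots,g-2$.

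Next I would pin down the "interior" half-cuts. For each $i$ with $g+1\le i\le g+\ell-3$ (the half-cuts strictly interior to the path), I claim $|F_i|=1$. Indeed, $\cutof{Z_i}$ is a basic $K$-cut; by Corollary~\ref{cor: odd K, cross edges common endpoint} all edges of $\Tstar\cap\cutof{Z_i}$ share a common endpoint $w$, and likewise all edges of $\Tstar\cap\cutof{Z_{i+1}}$ share a common endpoint $w'$. An edge in $F_i = \cutof{Z_i}\cap\cutof{Z_{i+1}}$ is incident to both $w$ and $w'$, so it equals $(w,w')$ — hence $F_i\subseteq\{(w,w')\}$, and combined with $|F_i|\ge1$ we get $F_i=\{(w,w')\}$, a single edge. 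Writing $w_i$ for the endpoint of this unique edge lying in $Z_i$ and $w_{i+1}$ for the one in $Z_{i+1}$, and checking along consecutive interior half-cuts that these labels are consistent (the common endpoint of $\Tstar\cap\cutof{Z_i}$ in $Z_i$ is shared by $F_{i-1}$ and $F_i$), yields a path $w_g - w_{g+1} - \cdots$ of tree edges and establishes item (i). The consistency check is where I would be most careful: I must verify that the common-endpoint vertex furnished by Corollary~\ref{cor: odd K, cross edges common endpoint} for the cut $\cutof{Z_i}$ is the same vertex whether we approach it from $F_{i-1}$ or $F_i$ — but this is immediate because there is only one such common endpoint for the whole set $\Tstar\cap\cutof{Z_i}$, and both $F_{i-1}$ and $F_i$ are subsets of it.

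Finally, for the two "end" half-cuts $F_g=\Tstar\cap\cutof{Z_g}\cap\cutof{Z_{g+1}}$ and $F_{g-2}=\Tstar\cap\cutof{Z_{g-2}}\cap\cutof{Z_{g-1}}$, I cannot conclude $|F_i|=1$, but I can still bound the shared endpoint. Apply Corollary~\ref{cor: odd K, cross edges common endpoint} to $\cutof{Z_{g+1}}$: all edges of $\Tstar\cap\cutof{Z_{g+1}}$ share a common endpoint $w_{g+1}$; since $F_g\subseteq\cutof{Z_{g+1}}$, we get $F_g\subseteq E_{w_{g+1}}$, and (using $|F_g|\ge1$ and that every edge of $F_g$ has one endpoint in $Z_g$) we may fix $w_g\in Z_g$ to be the $Z_g$-endpoint of some chosen edge of $F_g$, so $(w_g,w_{g+1})\in F_g\subseteq E_{w_{g+1}}$; note $w_{g+1}$ is consistent with the label already assigned from $F_{g+1}$ because both come from the single common endpoint of $\Tstar\cap\cutof{Z_{g+1}}$ in $Z_{g+1}$. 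Symmetrically, applying the Corollary to $\cutof{Z_{g-2}}$ gives a common endpoint $w_{g-2}\in Z_{g-2}$ with $F_{g-2}\subseteq E_{w_{g-2}}$ and $(w_{g-2},w_{g-1})\in F_{g-2}$. That establishes item (ii), and with the convention $w_{g+\ell-1}=w_{g-1}$ all the claimed labels $w_g,\ldots,w_{g+\ell-1}$ are defined with $w_i\in Z_i$, completing the lemma. The main obstacle is the bookkeeping of the cyclic indices and making sure the single-vertex labels match up across the three regimes (end half-cut $F_g$, interior half-cuts, end half-cut $F_{g-2}$); the underlying graph-theoretic facts are all already in hand via Observation~\ref{obs:basic-laminar} and Corollary~\ref{cor: odd K, cross edges common endpoint}.
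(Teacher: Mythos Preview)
Your approach is essentially the paper's: both arguments use Corollary~\ref{cor: odd K, cross edges common endpoint} as the engine, locate the empty half-$K$-cut, and then read off the path structure from the common-endpoint property. Your write-up is in fact more explicit than the paper's about why the interior half-cuts are singletons.

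There is, however, one genuine gap. Your contraction argument shows that $H$ is connected and hence \emph{at most one} $F_i$ is empty, but it does not show that \emph{at least one} is empty. The image of a tree under vertex-set contraction can perfectly well be a multigraph containing a cycle, so nothing in your super-node picture rules out $|F_i|\ge 1$ for every $i$. The phrase ``call the missing (or one chosen empty) index $g-1$'' glosses over this case. The fix is exactly the paper's one-line observation: if every half-$K$-cut met $\Tstar$, then picking one edge from each and applying your own common-endpoint argument (now valid at every index, cyclically) would produce a closed walk $w_0-w_1-\cdots-w_{\ell-1}-w_0$ of tree edges, contradicting acyclicity of $\Tstar$. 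Once you insert that sentence, the rest of your argument goes through.
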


\begin{proof}
First observe that if $\Tstar$ contains two edges in adjacent
half-$K$-cuts, say $e_{i-1}\in \Tstar\cap\cutof{Z_{i-1}}\cap \cutof{Z_{i}}$
and $e_i\in\Tstar\cap\cutof{Z_{i}}\cap \cutof{Z_{i+1}}$, then, by Corollary~\ref{cor: odd K, cross edges common endpoint},
they have a common endpoint $w_i$. Further, since $Z_{i-1}\cap Z_{i+1} = \emptyset$ (and $\ell\ge 3$), we must have $w_i\in Z_i$.

It is not possible for each half-$K$-cut $\cutof{Z_{i}}\cap \cutof{Z_{i+1}}$ to have an edge from $\Tstar$, because then,
by the observation from the previous paragraph, these edges would form a cycle.
Choose one $g$ for which $\Tstar\cap \cutof{Z_{g}}\cap \cutof{Z_{g-1}} = \emptyset$. So~(iii) is already true.
Since $\Tstar$ is spanning, all other half-$K$-cuts intersect $\Tstar$. Choose one edge from
$\Tstar\cap\cutof{Z_{i}}\cap \cutof{Z_{i+1}}$ for each $i\neq g-1$. 
Using the observation from the first paragraph, these chosen edges form a path $w_g, \ldots, w_{g+\ell-1}$
with $w_i\in Z_i$ for $i = g,\ldots,g+\ell-1$, and this path satisfies properties~(i) and~(ii).
\end{proof}



\section{A Linear-Time Algorithm}
\label{sec: linear-time algorithm}

In this section we prove Theorem~\ref{thm: algorithm for k-edge-connected}, by developing
an $\tildeO(m)$-time algorithm that, given a $K$-edge-connected graph $G$, determines whether $\stc{G}=K$.
We start by sketching the basic ideas leading to our algorithm.

A natural attempt to design an algorithm would be to exploit the tree-like structure of cactus graphs,
applying the dynamic programming strategy to recursively compute some congestion-related information for the $K$-cuts of $G$.
The first challenge one encouters is that in general (for even values of $K$), the $K$-cuts do not form a laminar structure. 

This is where the concept of \emph{basic} $K$-cuts, introduced in Section~\ref{subsec: cactus representation}, is
helpful. The family of basic $K$-cuts \emph{is} laminar.
More specifically, we will choose an arbitrary degree-$K$ vertex $r$ in $G$, called the \emph{root} of $G$ 
(see Section~\ref{subsec: rooting G and cactus}.)
We can then identify each basic $K$-cut $\cutof{Z}$ by the shore $Z$ that does not contain $r$.
The subset relation between these shores defines a tree structure on the basic $K$-cuts.
The algorithm can then process the basic $K$-cuts bottom-up in this order.

What information such an algorithm would need to maintain for each basic $K$-cut $\cutof{Z}$?
A na\"{\i}ve approach would be to somehow keep track of all ``candidate crossing-edge sets'' $A\subseteq\cutof{Z}$,
namely the sets $A$ for which there is a spanning tree $\Tstar$ of $G$ with congestion $K$ such that $\Tstar\cap \cutof{Z} = A$.
This raises two issues. One, the resulting algorithm's running time would be exponential in $K$.
Two, obviously, the algorithm does not yet know $\Tstar$ when $\cutof{Z}$ is considered.
What's worse, even the congestion of tree edges that are wholly inside shore $Z$ cannot be uniquely determined 
based only on the subgraph induced by $Z$. (It may be possible to address the latter issue by storing appropriate
information about the tree topology inside $Z$, but at the price of further increasing the time complexity.)

The characterization of $K$-edge-connected graphs with congestion $K$, 
developed in Section~\ref{sec: cactus representation and congestion}, suggests an alternative approach.
The key property is Corollary~\ref{cor: odd K, cross edges common endpoint}, 
which says that if $\Tstar$ is a spanning tree with congestion $K$ then
for each basic $K$-cut $\cutof{Z}$, all edges in $\Tstar\cap \cutof{Z}$ have a common endpoint.
Instead of focussing on tree edges crossing $\cutof{Z}$, we can instead 
attempt to recursively compute these common endpoints.
This is not quite possible, for the same reason as above: we cannot determine if a vertex
is such a common endpoint, for some congestion-$K$ spanning tree, without knowing the whole graph $G$.
Generally, the analysis in Section~\ref{sec: cactus representation and congestion}, while it
provides crucial ideas, is not sufficient in itself to derive a dynamic-programming algorithm
because it is expressed in terms of global properties of $G$.
For this, we need a definition of ``candidate common endpoints'' that is based only on the
information from the already processed subgraph.

To address this, in Section~\ref{subsec: rooting G and cactus} we establish analogs
of the properties in Section~\ref{sec: cactus representation and congestion} expressed in terms of
the rooted versions of $\cactusC_G$ and $G$.
One key observation is this: for any vertex $w$ and for \emph{any} spanning tree $T$
such that $T\cap \cutof{Z} \subseteq E_w$, the congestion of $T$'s edges within the subgraph induced
by $Z\cup\braced{w}$ depends only on this subgraph. If all these congestion values are $K$,
we call $T$ \emph{safe}, and we call $w$ a \emph{hub} for $\cutof{Z}$. (We also require, for technical reasons,
that $w$ has at least one edge crossing $\cutof{Z}$.) 
If $w$ is a hub for $\cutof{Z}$, it only means that the information from the subgraph $Z\cup\braced{w}$
is not sufficient to eliminate $w$ as a possible common endpoint of the edges from $\cutof{Z}$ that
belong to some congestion-$K$ spanning tree of $G$. (Later, when processing some ancestor $K$-cut of $Z$,
it may turn out that such a tree does not exist for $w$.)
We prove that the set of hubs of $\cutof{Z}$, denoted $\hubs(Z)$, can be
determined from the hub sets of $\cutof{Z}$'s children, thus establishing a recurrence relation that
drives the dynamic programming algorithm (see Section~\ref{subsec: the algorithm}).

While it may not be obvious, computing all hub sets $\hubs(Z)$ is sufficient to determine
whether the congestion of $G$ is $K$. (That is, for any $w\in \hubs(Z)$ we do not need to keep track of which tree
edges from $w$ cross $\cutof{Z}$, circumventing the issue we mentioned earlier.)
The reason is this: once (and if) we reach the root $r$,
the condition on $r$ being a hub for its cut $\cutof{Z}$ (with $Z = V\setminus\braced{r}$)
is equivalent to $G$ having a spanning tree with congestion $K$.


\subsection{Rooting $G$ and its Cactus Graph}
\label{subsec: rooting G and cactus}

Throughout this sub-section (Section~\ref{subsec: rooting G and cactus}) 
we will assume that $G$ is a $K$-edge-connected graph and that $|\phi(b)|\le 1$ for every node
$b\in\cactusC_G$, so that Observation~\ref{obs: odd K singleton node large degree} can be applied.
We fix an arbitrary degree-$K$ vertex $r\in V$ as the \emph{root of $G$}.  By convention, from now
on, each $K$-cut $\cutof{X}$ will be represented by the shore $X$ that
does not contain $r$. This naturally imposes a tree-like structure on
basic $K$-cuts, where a basic $K$-cut $\cutof{X}$ is a
\emph{descendant} of a basic $K$-cut $\cutof{Y}$ if $X\subsetneq Y$. A
descendant $\cutof{X}$ is called a \emph{child} of $\cutof{Y}$ (making
$\cutof{Y}$ a \emph{parent} of $\cutof{X}$) if there is no basic $K$-cut
$\cutof{Z}$ such that $X\subsetneq Z \subsetneq Y$.  Since the family
of basic $K$-cuts is laminar by Observation~\ref{obs:basic-laminar}, every cut except
$\cutof{(V\setminus\{r\})}$ has a unique parent. (So the descendant
relation is indeed a tree.)

To reflect this ordering of basic $K$-cuts in $G$'s cactus representation,
we root $\cactusC_G$ at the node $\phi(r)$. In our algorithm
we will need a few more related concepts:

  \begin{itemize}[nosep]
  \item A node $a$ of $\cactusC_G$ is said to be \emph{below} a node $b$ if $b$
    is on every path from $a$ to $\phi(r)$.
  \item A node $a$ is called the \emph{head} of the cycle $C$ in $\cactusC_G$, and denoted
    $a_C$, if $a\in C$ and all the other nodes of $C$ are below $a_C$.
  \item For a node $a\neq\phi(r)$, by $Q^a$ we denote the set of nodes consisting of $a$
    and all nodes below $a$. Also, let $W^a=\phi^{-1}(Q^a)$ be the set of
	vertices of $G$ represented by $Q^a$.
  \item For a cycle $C$ in $\cactusC_G$, 
  	we let $Q^C = \bigcup_{b\in C\setminus a_C}Q^b$. That is, $Q^C$ consists
	of all nodes that are below $a_C$, excluding $a_C$ itself. Also, let $W^C=\phi^{-1}(Q^C)$.
  \end{itemize}

The cactus structure implies that each cycle has a unique head, so the
definition above is valid. Furthermore, each node except for $\phi(r)$
belongs to a unique cycle, which could be trivial or not, where it is a non-head node.
On the other hand, a node can be a head of several cycles.

\smallskip

The following observation, that follows directly from the definitions,
summarizes the properties of the basic $K$-cuts and their ordering,
and the corresponding properties of $\cactusC_G$.


\begin{observation}\label{obs:cactus-correspondence}
The basic $K$-cuts of $G$ satisfy the following properties:
\begin{enumerate}[{\rm(a)},nosep]
\item
  The family of basic $K$-cuts of $G$ consists exactly of all $K$-cuts
  $\cutof{W^b}$ and $\cutof{W^C}$, for nodes $b\neq\phi(r)$ 
  and non-trivial cycles $C$ in $\cactusC_G$.
  Further, this representation is unique; that is all $K$-cuts $\cutof{W^b}$
  and $\cutof{W^C}$, for nodes $b\neq\phi(r)$ and non-trivial cycles
  $\cutof{W^C}$ in $\cactusC_G$, are different.
\item
  If $C$ is a trivial cycle of $\cactusC_G$
  then $\cutof{W^C} = \cutof{W^b}$, where $b$ is the node on $C$ different from $a_C$.
\item
  For nodes $a$ and $b$ of $\cactusC_G$ distinct from $\phi(r)$, 
  $a$ is below node $b$ if and only if $\cutof{W^a}$ is a descendant of $\cutof{W^b}$.
\item
  For a non-trivial cycle $C$ of $\cactusC_G$, the parent of $\cutof{W^C}$ is $\cutof{W^{a_C}}$, and
  the children of $\cutof{W^C}$ are the $K$-cuts $\cutof{W^b}$ for $b\in C\setminus\braced{a_C}$.
  Further, $W^C = \bigcup_{b\in C\setminus\braced{a_C}} W^b$.
\item
  For a node $b\neq \phi(r)$ of $\cactusC_G$, if $C$ is the unique cycle where $b$ is a non-head node,
  the parent of $\cutof{W^b}$ is either $W^{a_C}$, if $C$ is trivial, or $W^C$, if $C$ is non-trivial.
  The children of $W^b$ are all $K$-cuts $W^C$, for cycles $C$ for which $a_C = b$.
  (Note that, by~(b), for trivial cycles $C$ these children also have the form $W^a$, where
  $a$ is the node on $C$ other than $b$.)
  Further, $W^b = \phi^{-1}(b) \cup \bigcup_{C:b = a_C} W^C$.
\end{enumerate}
\end{observation}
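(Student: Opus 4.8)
The plan is to verify each of the five properties (a)--(e) directly from the definitions, using the combinatorial structure of rooted cactus graphs together with the correspondence between basic $K$-cuts and node/cycle shores established in Section~\ref{subsec: cactus representation} (in particular Theorem~\ref{thm: k-cuts cactus representation} and the discussion of basic $K$-cuts). Since the statement explicitly says it ``follows directly from the definitions,'' the work is organizational rather than deep: one must carefully unwind what $Q^b$, $Q^C$, ``below,'' and ``head'' mean in the rooted cactus, and match them against the parent/child/descendant relations on basic $K$-cuts defined via the subset order on shores not containing $r$.

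First I would prove (a). Recall a basic $K$-cut is one represented by a $2$-cut of $\cactusC_G$ consisting of two links sharing a node $a$; after rooting at $\phi(r)$, such a $2$-cut separates a unique ``lower'' part from the part containing $\phi(r)$. If the two links lie on a trivial cycle (two parallel links between $a$ and some child node $b$), the lower shore is $Q^b$. If they lie on a non-trivial cycle $C$ with head $a_C = a$ (so the two shared-endpoint links are the two links of $C$ incident to $a_C$), the lower shore is exactly $Q^C$. Conversely, every node $b \neq \phi(r)$ is a non-head node on exactly one cycle, and the pair of links of that cycle not incident to $b$\,--- wait, more carefully: $\cutof{W^b}$ is represented by the two links of $b$'s unique cycle that are incident to $b$ on the side away from... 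I would instead argue $\cutof{W^b}$ is the $2$-cut formed by the two links on $b$'s cycle that, when removed, isolate $Q^b$; these two links share the node on the cycle ``just above'' $b$ toward $a_C$ only in the trivial case, and in general $\cutof{W^b}$ is a basic cut because $b$ itself is the shared node of the two links of its cycle incident to $b$ that cut off everything below and including $b$ from the rest. The cleanest route: a link pair sharing node $a$ cuts the cactus into the component(s) ``hanging below'' through that cycle versus the rest; enumerating over (node $a$, incident cycle at $a$) pairs and recording the lower side gives precisely $\{Q^b : b\neq\phi(r)\} \cup \{Q^C : C \text{ non-trivial}\}$, and distinctness follows because the lower shores determine and are determined by the cut. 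Property (b) is then immediate: a trivial cycle $C = \{a_C, b\}$ has $Q^C = Q^b$ by definition of $Q^C$ as the union of $Q^{b'}$ over non-head nodes $b'$ of $C$, of which $b$ is the only one.

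For (c): ``$a$ below $b$'' means $b$ lies on every $a$-to-$\phi(r)$ path, which is equivalent to $Q^a \subseteq Q^b$ (every node below $a$, and $a$ itself, is below $b$), and since $\phi$-preimages preserve containment and the shores are taken as the sides not containing $r = \phi^{-1}(\phi(r))$-side, this says $W^a \subseteq W^b$, i.e. $\cutof{W^a}$ is a descendant of $\cutof{W^b}$ in the rooted basic-cut tree. (One checks the containment is strict iff $a\neq b$.) For (d) and (e) I would use the cycle/head bookkeeping: for a non-trivial cycle $C$ with nodes $a_C = a_0, a_1, \ldots, a_{\ell-1}$, we have $Q^C = \bigcup_{i=1}^{\ell-1} Q^{a_i}$ and these $Q^{a_i}$ are pairwise disjoint, giving $W^C = \bigcup_{b\in C\setminus\{a_C\}} W^b$; that no basic cut sits strictly between $\cutof{W^b}$ (for $b = a_i$) and $\cutof{W^C}$ follows because the only cut-defining link pairs ``between'' them would have to be link pairs inside $C$, but those are exactly the $\cutof{W^{a_j}}$ and $\cutof{W^C}$ themselves; and the parent of $\cutof{W^C}$ is $\cutof{W^{a_C}}$ because $a_C$'s unique cycle-as-non-head-node (or $a_C = \phi(r)$, excluded) gives the next cut up. Property (e) is the ``dual'' bookkeeping at a node: $b \neq \phi(r)$ is a non-head node of exactly one cycle $C_0$ (determining its parent, via (b)/(d) depending on whether $C_0$ is trivial), and $b$ is the head of the cycles $C$ with $a_C = b$; the node $b$'s own preimage $\phi^{-1}(b)$ plus the union of the $W^C$ over those cycles partitions $W^b$, which is just the definition of $Q^b$ rewritten, since $Q^b = \{b\} \cup \bigcup_{C : a_C = b} Q^C$.

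The main obstacle is not any single hard argument but getting the case analysis around \emph{trivial} cycles clean and non-redundant: a trivial cycle $C = \{b, a_C\}$ yields $\cutof{W^C} = \cutof{W^b}$ (property (b)), so the cycle-indexed and node-indexed families of basic cuts overlap exactly on trivial cycles, and one must phrase (a)'s ``uniqueness'' claim and (e)'s child enumeration so that each basic cut is counted once. I would handle this by first stating the bijection in (a) using only non-trivial cycles plus all nodes $\neq \phi(r)$, then noting (b) as the single identification that accounts for trivial cycles, and thereafter always referring to children/parents through (d)/(e) with the parenthetical trivial-cycle caveat already absorbed. Everything else is routine unwinding of definitions and I would present it compactly.
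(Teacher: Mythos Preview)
Your proposal is correct and matches the paper's approach: the paper gives no proof at all, stating only that the observation ``follows directly from the definitions,'' and your plan to unwind the definitions of $Q^b$, $Q^C$, ``below,'' and ``head'' case-by-case is exactly the routine verification the paper elides. Your write-up of part~(a) wanders a bit before landing on the right picture (the two links of $b$'s cycle incident to $b$ itself are the ones isolating $Q^b$, with $b$ as their shared node), but you get there, and your identification of the trivial-cycle overlap as the only bookkeeping hazard is apt.
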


Finally, note that $\phi(r)$ is an external node of $\cactusC_G$, as $r$ has degree $K$ in $G$. 
It follows that there is a unique cycle $C$ with $a_C = \phi(r)$. Thus
the $K$-cut $\cutof{(V\setminus\{r\})}$ is equal to $W^C$.
In particular, if $C$ is trivial, this $K$-cut is $W^b$, for the node $b$ in $C$ other than $\phi(r)$.


\subsubsection{Safe Trees and Hubs}
\label{sec: safe trees and hubs}

Through the rest of this section, by a \emph{tree in $G$} we will mean a tree that is a subgraph of $G$. 
To simplify notation we will sometimes treat trees in $G$ as sets of vertices
and for such a tree $T$ we will write $\cutof{T}$ to mean $\cutof{V(T)}$.

We now define the concepts of \emph{safe trees} and \emph{hubs}, mentioned earlier in 
the beginning of Section~\ref{sec: linear-time algorithm}. 
Roughly, a tree in $G$ is considered \emph{safe} if it cannot be eliminated as a possible
subtree of $\Tstar$ (a spanning tree with congestion $K$) based only on the subgraph of $G$ it spans.
These trees are used as partial solutions constructed (implicitly) in the algorithm.  
As the algorithm proceeds, it may expand a safe tree, but only by
connecting its root to a vertex in the new, larger safe tree. 
A safe tree can get discarded, if we discover that it cannot be expanded to a larger safe tree.

To formally define safe trees, we need to extend some concepts from Section~\ref{sec: preliminaries},
defined for spanning trees, to arbitrary trees in $G$. If $T$ is any tree in $G$
and $e=(u,v)\in T$ is an edge of $T$, removing $e$ from $T$
disconnects $T$ into two connected components. As before, given a vertex $w$ of
$T$, we denote the component not containing $w$ by $\treecutout{T}{w}{e}$. 
As usual, by $\cutof{\treecutout{T}{w}{e}}$ we denote the set of edges between 
this component and the rest of the graph.

For any tree $T$ in $G$ and any vertex $w$ in $T$,
$(T,w)$ denotes the rooted version of $T$, with $w$ designated as its
root. Such a rooted tree $(T,w)$ is called \emph{safe} if for each edge $e$ of $T$ we have
$|\cutof{\treecutout{T}{w}{e}}|=K$.
The concept of a safe tree is a natural generalization of the
notion of a spanning tree with congestion $K$, in the following sense:


\begin{observation}
\label{obs:hub-final}
Let $T$ be a spanning tree in $G$ and $w$ any vertex. Then $(T,w)$ is
safe if and only if $\maxcng{G,T}\le K$. 
\end{observation}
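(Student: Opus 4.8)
The statement is an equivalence for a \emph{spanning} tree $T$, so $V(T)=V$. The plan is to compare, edge by edge, the quantity $|\cutof{\treecutout{T}{w}{e}}|$ appearing in the definition of safe with the congestion $\cng{G,T}{e}$ from Section~\ref{sec: preliminaries}. The key point is that when $T$ is spanning, removing an edge $e$ splits $V$ into the two shores $\treecutin{T}{x}{e}$ and $\treecutout{T}{x}{e}$, and the component of $T-e$ \emph{not} containing a chosen vertex $w$ is exactly one of these two shores, namely $\treecutout{T}{w}{e}$. Hence $\cutof{\treecutout{T}{w}{e}}$ is literally the cut $\cutof{\sptreecut{T}{e}}$ induced by $e$, regardless of which side $w$ lies on (the cut is symmetric in its two shores). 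Therefore $|\cutof{\treecutout{T}{w}{e}}| = |\cutof{\sptreecut{T}{e}}| = \cng{G,T}{e}$ for every edge $e$ of $T$ and every vertex $w$.

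Given this identification, the equivalence is immediate. First I would note that by $K$-edge-connectivity, every cut of $G$ has size at least $K$; in particular $\cng{G,T}{e}\ge K$ for each $e\in T$ (this was already observed just before Observation~\ref{obs: cactus nodes singletons or empty}). Now $(T,w)$ is safe iff $|\cutof{\treecutout{T}{w}{e}}|=K$ for every edge $e$, iff $\cng{G,T}{e}=K$ for every edge $e$ (using the identification above), iff $\cng{G,T}{e}\le K$ for every edge $e$ (using the lower bound $\cng{G,T}{e}\ge K$), iff $\maxcng{G,T}=\max_{e\in T}\cng{G,T}{e}\le K$. This chain establishes both directions simultaneously.

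There is essentially no obstacle here; the only thing to be careful about is the direction of the implications and the role of the $K$-edge-connectivity hypothesis, which converts the ``$=K$'' condition in the definition of safe into the ``$\le K$'' condition on the congestion. It is worth stating explicitly in the write-up that the assumption $|\cutof{\sptreecut{T}{e}}|\ge K$ (equivalently $|E_u|\ge K$ for all $u$, hence every cut is large) is what makes the two conditions coincide, so that the observation genuinely uses the standing hypothesis of this section. I would also remark that the choice of root $w$ is irrelevant precisely because a cut is determined by its (unordered) pair of shores, so $(T,w)$ safe for one $w$ implies it for all $w$ — consistent with the fact that the right-hand side $\maxcng{G,T}\le K$ does not mention $w$ at all.
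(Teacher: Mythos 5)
Your argument is correct and matches the reasoning the paper leaves implicit (the observation is stated without proof as an immediate consequence of the definitions): for a spanning tree, $\cutof{\treecutout{T}{w}{e}}$ is exactly the cut induced by $e$, and $K$-edge-connectivity turns the ``$=K$'' in the definition of safety into ``$\le K$''. Nothing further is needed.
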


The lemma below generalizes Corollary~\ref{cor: odd K, cross edges common endpoint} to an arbitrary safe tree.


\begin{lemma}
\label{lem:safe-tree-common-endpoint}
Let $\cutof{Z}$ be a basic $K$-cut and $(T,w)$ a safe tree in $G$ with $T\cap\cutof{Z}\neq\emptyset$.
Then all edges of $T\cap\cutof{Z}$ have a common endpoint.
\end{lemma}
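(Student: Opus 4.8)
The plan is to reduce the statement about an arbitrary safe tree $(T,w)$ to the already-established Corollary~\ref{cor: odd K, cross edges common endpoint}, which gives the same conclusion for spanning trees of congestion $K$. The natural way to do this is to extend $(T,w)$ to a full spanning tree $\Tstar$ of $G$ with $\cng{G}{\Tstar}\le K$ by attaching the vertices of $V\setminus V(T)$ outside of $T$ without disturbing the cuts induced by edges inside $T$. If such an extension exists, then every edge of $T$ is an edge of $\Tstar$, and moreover $T\cap\cutof{Z}$ and $\Tstar\cap\cutof{Z}$ have the same edges among pairs with both endpoints in $V(T)$; combined with the fact (to be checked) that $\Tstar$ has no new edges of $\cutof{Z}$ outside of those in $T$, Corollary~\ref{cor: odd K, cross edges common endpoint} applied to $\Tstar$ and $\cutof{Z}$ then yields the common endpoint.

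First I would argue that the safe tree can indeed be completed to a congestion-$K$ spanning tree. The root $w$ of the safe tree plays the role of the ``hub'': informally, the safety condition $|\cutof{\treecutout{T}{w}{e}}|=K$ for every $e\in T$ says precisely that, for each tree edge, the shore on the far side from $w$ already has exactly $K$ edges leaving it in $G$, so no matter how the rest of $G$ is hung off $w$ (or off vertices on the $w$-side), these cuts stay at size $K$. Concretely, take a spanning tree $T'$ of the subgraph of $G$ induced by $(V\setminus V(T))\cup\{w\}$ — this exists because $G$ is connected and one can always reach $V\setminus V(T)$ from $w$; more carefully, one contracts $V(T)$ to a single vertex $w$ and takes a spanning tree of the resulting (connected) graph. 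Then $\Tstar = T\cup T'$ is a spanning tree of $G$. For any edge $e\in T$, the component $\treecutout{\Tstar}{w}{e}$ equals $\treecutout{T}{w}{e}$ (since everything added by $T'$ hangs on the $w$-side), so its induced cut has size $K$; for any edge $e\in T'$, the far-from-$w$ component lies entirely in $V\setminus V(T)$, hence is a shore whose cut has size $\ge K$ by $K$-edge-connectivity, and we must argue it is exactly $K$ — this follows because $\stc G=K$ forces every tree edge of a congestion-$\le K$ tree to induce a $K$-cut, but since we're trying to *produce* such a tree, I'd instead choose $T'$ more carefully, e.g. as a BFS/DFS tree rooted at $w$ in the contracted graph, and invoke the standard fact that a spanning tree of a $K$-edge-connected graph can be chosen so that... — actually the cleanest route is: it suffices that $\cng{G}{\Tstar}\le K$, and for edges $e\in T$ we already have congestion exactly $K$; for edges $e\in T'$ we only need congestion $\le K$, which we get from a judicious choice — here is where I expect the main work.

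The main obstacle, then, is showing that the completion $T'$ can be chosen so that the edges it introduces have congestion at most $K$ in $\Tstar$. One robust way around this: do not complete $(T,w)$ directly, but instead use the hub machinery the authors are about to develop — however, since this lemma is proved before that machinery, I would instead argue as follows. Let $e=(u,v)\in T$ with, say, $v$ on the far side from $w$. The edge $e$ lies in $\cutof{Z}$ for $\cutof Z$ a basic $K$-cut iff exactly one of $u,v$ lies in $Z$ (recall $Z$ is the shore not containing $r$). Consider the $K$-cut $\cutof{\treecutout{T}{w}{e}}$ induced by $e$ in the safe tree. By Observation~\ref{obs:basic-laminar}(a), the basic $K$-cut $\cutof{Z}$ does not cross this $K$-cut, so one of $Z,\barZ$ is contained in one of the two shores of $\cutof{\treecutout{T}{w}{e}}$. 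Running this argument for two edges $e,e'$ of $T\cap\cutof Z$ that did *not* share an endpoint, exactly as in the proof of Lemma~\ref{lem: K-cuts, type 0 node} (replace $\Tstar$ by $T$ throughout, using the path in $T$ between $\{u,v\}$ and $\{u',v'\}$ and the $K$-cut induced by any edge on it, which has both endpoints-pairs separated), would contradict the non-crossing Observation~\ref{obs: cuts vs cuts Xj}. Thus no completion argument is needed at all: the proof of Lemma~\ref{lem: K-cuts, type 0 node} goes through verbatim for a safe tree, because it only uses that each edge of the tree induces a $K$-cut and that $\cutof Z$ is basic. I would therefore write the proof as a short adaptation of Lemma~\ref{lem: K-cuts, type 0 node}: assume two edges of $T\cap\cutof Z$ with four distinct endpoints, take any edge $e$ on the $T$-path joining the two endpoint-pairs (or handle the degenerate case where one pair shares a vertex with the path, exactly as before), observe $\cutof{\treecutout{T}{w}{e}}$ is a $K$-cut separating the pairs, and derive the contradiction with Observation~\ref{obs:basic-laminar}(a).
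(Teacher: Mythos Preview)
Your final approach---assuming two edges of $T\cap\cutof{Z}$ with four distinct endpoints, taking an edge $e$ on the $T$-path between them, and observing that the $K$-cut $\cutof{\treecutout{T}{w}{e}}$ crosses the basic $K$-cut $\cutof{Z}$ in violation of Observation~\ref{obs:basic-laminar}(a)---is exactly the paper's proof; the completion-to-$\Tstar$ detour is unnecessary (and, as you suspected, does not obviously work). The paper adds one short step you omit: after establishing that any \emph{two} edges of $T\cap\cutof{Z}$ share an endpoint, it argues that all of them share a \emph{common} endpoint, since three pairwise-intersecting edges with no common vertex would form a $3$-cycle in the tree $T$.
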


\begin{proof}
For a contradiction, suppose that $f$ and $f'$ are two edges in
$T\cap\cutof{Z}$ such that all four of their endpoints are
distinct. Consider a path connecting these edges in $T$ and any edge
$e$ on this path in-between $f$ and $f'$. Then the cut
$\cutof{\treecutout{T}{w}{e}}$ is a $K$-cut, by the definition of a safe tree. 
However, $\treecutout{T}{w}{e}$ contains exactly one of the
edges $f$ and $f'$ with its both endpoints, while the endpoints of the other edge are
outside of $\treecutout{T}{w}{e}$. It follows that the cuts
$\cutof{Z}$ and $\cutof{\treecutout{T}{w}{e}}$ are crossing, as each
of the four endpoints is in a different shore intersection. 
Since $Z$ is a basic $K$-cut, this is a contradiction with Observation~\ref{obs:basic-laminar}.

From the above paragraph, any pair of edges in $T\cap\cutof{Z}$ must share
an endpoint. We still need to argue that \emph{all} edges in $T\cap\cutof{Z}$
must share a common endpoint. Indeed, this is trivial when $|T\cap\cutof{Z}|\in\braced{1, 2}$.
When $|T\cap\cutof{Z}| \ge 3$, fix any three edges in $T\cap\cutof{Z}$, and observe that they
must share a common endpoint, say $w$, because otherwise they would form a $3$-cycle.
Any other edge in $T\cap\cutof{Z}$, to share an endpoint with each of these three edges, must have $w$ as an endpoint.
\end{proof}

We are now ready to define hubs.
Let $\cutof{Z}$ be a basic cut in $G$. A node $w\in V(\cutof{Z})$ is
called a \emph{hub} for $Z$, if there exists a spanning tree
$T$ of the subgraph of $G$ induced by $Z\cup\{w\}$ such that $(T,w)$ is a safe tree in $G$.  
We call this $T$ a \emph{witness tree} for $w$ and $\cutof{Z}$.  If $w\in Z$,
then $w$ is called an \emph{in-hub for $Z$}, otherwise $w$ is an \emph{out-hub for $Z$}.
By $\hubs(Z)$ we denote the set of all hubs for $Z$.


\subsubsection{Constructing the Hub Sets for Basic $K$-Cuts}
\label{sec:constructing-hubs}

We now develop the properties of hubs for basic $K$-cuts $\cutof{W^b}$ that are
analogous to Lemmas~\ref{lem: K-cuts, type 1 node} and~\ref{lem: K-cuts, type 0 node}, but
are formulated in terms of safe trees instead of the congestion-$K$ spanning tree $\Tstar$.

The following observations follow directly from the definitions. They show that the safety
property of trees is preserved under some operations, like taking sub-trees or combining
disjoint trees with a common root.

\begin{observation}
\label{obs:hub-subtree}
Let $(T,w)$ be a safe tree, $u$ a vertex of $T$, and $e$ the first edge on the
path in $T$ from $u$ to $w$. Then $(\treecutout{T}{w}{e},u)$ is also a safe tree.
\end{observation}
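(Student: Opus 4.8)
The plan is to show that $T':=\treecutout{T}{w}{e}$ inherits the safety condition directly from $(T,w)$, by observing that the components relevant to the safety test of $(T',u)$ coincide with the corresponding components of $T$.

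First I would fix notation. Since $e$ is the first edge on the path in $T$ from $u$ to $w$, it is incident with $u$; write $e=(u,u_1)$ with $u_1$ the next vertex on that path, and note that we may assume $u\neq w$ (otherwise there is no such edge $e$ and there is nothing to prove). By definition $T'=\treecutout{T}{w}{e}$ is exactly the connected component of $T-e$ containing $u$; in particular $u\in V(T')$, $u_1\notin V(T')$, and $e$ is the unique edge of $T$ joining $V(T')$ to $V(T)\setminus V(T')$, the latter set containing $w$.

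The key step is to handle an arbitrary edge $e'\in T'$. Removing $e'$ disconnects $T'$ into two parts, $A\ni u$ and $B\not\ni u$, so $B=\treecutout{T'}{u}{e'}$. Since the rest of the tree, $V(T)\setminus V(T')$, is attached to $T'$ only at the vertex $u$ through the edge $e$, in $T-e'$ the component containing $w$ is $A\cup(V(T)\setminus V(T'))$ and the component avoiding $w$ is precisely $B$; that is, $\treecutout{T}{w}{e'}=B=\treecutout{T'}{u}{e'}$. These are the same vertex set, hence determine the same cut in $G$. Applying the safety of $(T,w)$ to the edge $e'$ yields $|\cutof{\treecutout{T'}{u}{e'}}|=|\cutof{\treecutout{T}{w}{e'}}|=K$. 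As $e'$ ranges over all edges of $T'$, this is exactly the statement that $(T',u)$ is safe.

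The only point needing care is the component bookkeeping in the second paragraph, namely verifying that $T'$ hangs off the remainder of $T$ solely at $u$; once that is in place, everything follows immediately from the definition of a safe tree, so I do not anticipate any genuine obstacle here.
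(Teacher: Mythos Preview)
Your argument is correct and is precisely the direct-from-definitions verification the paper has in mind; the paper does not spell out a proof at all, merely remarking that the observation follows directly from the definitions. Your identification $\treecutout{T'}{u}{e'}=\treecutout{T}{w}{e'}$ for each edge $e'$ of $T'$ is exactly the point, and your bookkeeping justifying it is sound.
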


\begin{observation}
\label{obs:hub-union}
Let $(T,w)$ and $(T',w)$ be two edge-disjoint safe trees with a common
root $w$. Then $(T\cup T',w)$ is a safe tree.
\end{observation}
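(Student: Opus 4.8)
The plan is to go straight back to the definition: $(T\cup T',w)$ is a safe tree precisely when $T\cup T'$ is a tree (rooted at $w$) and $|\cutof{\treecutout{(T\cup T')}{w}{e}}|=K$ for \emph{every} edge $e$ of $T\cup T'$. So there are two things to verify, and I would handle the second one edge by edge, reducing it to the corresponding safety condition for $(T,w)$ or for $(T',w)$.

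First I would record that $T\cup T'$ is indeed a tree rooted at $w$. Here I take the hypothesis to include that $T$ and $T'$ meet only at $w$ (equivalently, that $T\cup T'$ is a tree), as is the case wherever this observation is invoked: the two trees are edge-disjoint, so the number of edges of $T\cup T'$ is $(|V(T)|-1)+(|V(T')|-1)=|V(T\cup T')|-1$, and since it is connected it is acyclic. Now fix any edge $e\in T\cup T'$; by edge-disjointness $e$ lies in exactly one of the two trees, say $e\in T$. I claim that, as a set of vertices of $G$, $\treecutout{(T\cup T')}{w}{e}=\treecutout{T}{w}{e}$. Indeed, after removing $e$ from $T\cup T'$, every vertex of $T'$ is still joined to $w$ by a path inside $T'$ (which does not use $e$), so all of $V(T')$ remains on the $w$-side; and the vertices of $T$ are split by the removal of $e$ exactly into $\treecutin{T}{w}{e}$ (the $w$-side) and $\treecutout{T}{w}{e}$ (the other side). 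Hence the component of $(T\cup T')-e$ not containing $w$ has vertex set $\treecutout{T}{w}{e}$.

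Finally, the cut $\cutof{S}$ is a cut of $G$ and depends only on the vertex set $S$, not on any tree structure, so $|\cutof{\treecutout{(T\cup T')}{w}{e}}|=|\cutof{\treecutout{T}{w}{e}}|=K$, where the last equality is the safety of $(T,w)$. The symmetric argument covers the case $e\in T'$ using safety of $(T',w)$, and since $e$ was arbitrary, $(T\cup T',w)$ is safe. I do not expect a genuine obstacle: the only points needing a little care are the identification $\treecutout{(T\cup T')}{w}{e}=\treecutout{T}{w}{e}$ (which uses that $T$ and $T'$ are glued only at the root, so the ``far side'' of an edge of $T$ contains none of $T'$) and the remark that the congestion of a tree edge is a function of the bipartition of $V$ it induces and is therefore insensitive to which ambient tree the edge sits in.
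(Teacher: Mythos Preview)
Your proof is correct and matches the paper's approach: the paper states this observation without proof, remarking only that it ``follows directly from the definitions,'' and your argument is precisely the routine verification from the definition of a safe tree. You are also right to flag that edge-disjointness alone does not force $T\cup T'$ to be a tree---one needs $V(T)\cap V(T')=\{w\}$---and this is indeed an implicit assumption that holds at every point where the paper invokes the observation.
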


\begin{observation}
\label{obs:hub-edge}
Let $(T,v)$ be a rooted tree in $G$ satisfying $|\cutof{T}|=K$, and let $(u,v) \in \cutof{T}$.
Then $(T\cup\{(u,v)\},u)$ is a safe tree if and only if $(T,v)$ is a
safe tree.
\end{observation}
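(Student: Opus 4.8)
\textbf{Proof plan for Observation~\ref{obs:hub-edge}.}

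The plan is to analyze how the edges of $T' := T\cup\{(u,v)\}$, rooted at $u$, induce cuts, and to compare this with how the edges of $T$, rooted at $v$, induce cuts. Recall that $(T',u)$ is safe iff for every edge $e$ of $T'$ we have $|\cutof{\treecutout{T'}{u}{e}}| = K$, and similarly $(T,v)$ is safe iff $|\cutof{\treecutout{T}{v}{e}}| = K$ for every edge $e$ of $T$. The edge set of $T'$ is the edge set of $T$ plus the single new edge $(u,v)$, so there are two things to check: first, that the new edge $(u,v)$ always induces a $K$-cut (independent of safety of $T$), and second, that for every old edge $e\in T$, the component $\treecutout{T'}{u}{e}$ equals the component $\treecutout{T}{v}{e}$, so the induced cut is literally the same set of edges.

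First I would handle the new edge. Removing $(u,v)$ from $T'$ splits $T'$ into the single vertex $\{u\}$ (since $u$ is a leaf of $T'$, being newly attached) and the rest, which is exactly $V(T)$. Wait — more carefully: $u$ may or may not already be a vertex of $T$. Since $(u,v)\in\cutof{T}$ and $\cutof{T}=\cutof{V(T)}$, the edge $(u,v)$ has exactly one endpoint in $V(T)$, namely $v$, so $u\notin V(T)$. Hence in $T'$, $u$ is a leaf attached only to $v$, and $\treecutout{T'}{u}{(u,v)} = V(T)$. Therefore $|\cutof{\treecutout{T'}{u}{(u,v)}}| = |\cutof{V(T)}| = |\cutof{T}| = K$ by hypothesis. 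So the new edge always satisfies the safety condition, regardless of whether $(T,v)$ is safe.

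Next I would handle the old edges. Fix $e\in T$. Removing $e$ from $T$ gives two components; $\treecutout{T}{v}{e}$ is the one not containing $v$. In $T'$, removing the same edge $e$ also gives two components, and since $u$ is attached to $v$, the vertex $u$ lies in the same $T'$-component as $v$; hence $\treecutout{T'}{u}{e}$ (the $T'$-component not containing $u$) is the $T'$-component not containing $v$, which as a vertex set is exactly $\treecutout{T}{v}{e}$ (adding the leaf $u$ to $T$ does not change which vertices lie on the $v$-free side of $e$). Therefore $\cutof{\treecutout{T'}{u}{e}} = \cutof{\treecutout{T}{v}{e}}$ as edge sets of $G$, so one has size $K$ iff the other does. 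Combining the two parts: every edge of $T'$ induces a $K$-cut iff every old edge of $T$ induces a $K$-cut (the new edge being automatic), i.e.\ $(T',u)$ is safe iff $(T,v)$ is safe.

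This observation is essentially routine bookkeeping about tree components, so I do not expect a genuine obstacle; the only point requiring a moment's care is the claim $u\notin V(T)$, which is exactly what justifies treating $u$ as a fresh leaf and identifying $\treecutout{T'}{u}{(u,v)}$ with $V(T)$, and which follows immediately from $(u,v)\in\cutof{T}$ together with the convention $\cutof{T}=\cutof{V(T)}$.
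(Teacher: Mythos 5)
Your proof is correct; the paper states this observation without proof (``follows directly from the definitions''), and your argument is exactly the routine definitional verification the authors intend -- the new edge induces the cut $\cutof{V(T)}$ of size $K$ automatically, and each old edge induces the same cut in $(T\cup\{(u,v)\},u)$ as in $(T,v)$ because the leaf $u$ lands on the $v$-side. The one point needing care, namely $u\notin V(T)$, is handled correctly via $(u,v)\in\cutof{T}=\cutof{V(T)}$.
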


Let $Z = W^b$ be a basic $K$-cut, for some node $b$ of $\cactusC_G$.
We now describe how the set of hubs $\hubs(Z)$ can be computed from the 
sets of hubs for the children of cut $\cutof{Z}$. 
We break it into three cases: when $b$ is an external node,
when it is an internal node of Type~1, and when it is an internal node of Type~0.

\smallskip
The case when $b$ is an external node (so it has no children) is simple. 
Recall that each external node is a Type-$1$ node with a degree $K$ vertex in its preimage. 
The set $W^b$ then contains this single vertex and represents a trivial $K$-cut.


\begin{lemma}
\label{lem:hub-leaf}
Let $b\neq\phi(r)$ be an external node of $\cactusC_G$ (necessarily of Type-$1$)
with $\phi^{-1}(b) = \braced{v}$. Then $\hubs(W^b)= \{v\} \cup N(v)$.
\end{lemma}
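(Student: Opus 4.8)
The plan is to verify the two inclusions defining $\hubs(W^b) = \{v\}\cup N(v)$ separately, exploiting the fact that $W^b = \{v\}$ and $\cutof{W^b} = \cutof{\{v\}} = E_v$ is a trivial $K$-cut (since $v$ has degree exactly $K$ in $G$, by Observation~\ref{obs: odd K singleton node large degree}(i)). First I would handle the inclusion $\hubs(W^b)\supseteq\{v\}\cup N(v)$. For $w = v$: the subgraph of $G$ induced by $W^b\cup\{v\} = \{v\}$ is the single-vertex tree $T = (\{v\},\emptyset)$, which has no edges, so vacuously $(T,v)$ is safe; hence $v$ is an in-hub. For $w\in N(v)$: take the single-edge tree $T$ on vertex set $\{v,w\}$ with the edge $(v,w)$. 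Then $T$ is a spanning tree of the subgraph induced by $W^b\cup\{w\} = \{v,w\}$. To see $(T,w)$ is safe, note that the only edge of $T$ is $e = (v,w)$, and $\treecutout{T}{w}{e} = \{v\}$, so $\cutof{\treecutout{T}{w}{e}} = E_v$ has size exactly $K$; thus $w$ is an out-hub. Equivalently one can invoke Observation~\ref{obs:hub-edge} with the trivial rooted tree $(\{v\}, v)$ (which trivially satisfies $|\cutof{\{v\}}| = |E_v| = K$) and the edge $(w,v)\in\cutof{\{v\}}$.

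For the reverse inclusion $\hubs(W^b)\subseteq\{v\}\cup N(v)$, let $w$ be any hub for $W^b$, with witness tree $T$ spanning the subgraph induced by $W^b\cup\{w\} = \{v,w\}$ such that $(T,w)$ is safe. If $w = v$ we are done, so assume $w\neq v$. Since $T$ is a spanning tree on the two-element vertex set $\{v,w\}$, it must consist of exactly the single edge $(v,w)$, and in particular this edge must belong to $G$, so $w\in N(v)$. This exhausts the possibilities, completing the inclusion.

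I do not expect any serious obstacle here — the lemma is essentially a base case for the recursive hub computation, and the argument is just unwinding the definitions of safe tree and hub in the degenerate situation where the shore $W^b$ is a singleton. The only point requiring a moment's care is the degenerate case $w = v$ (an in-hub), where the induced subgraph on $W^b\cup\{w\}$ is a single isolated vertex and the relevant "spanning tree" is edgeless; one should note explicitly that an edgeless rooted tree is safe by vacuity, so that $v$ indeed qualifies as a hub. Everything else follows immediately from $|E_v| = K$.
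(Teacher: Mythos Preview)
Your proposal is correct and follows essentially the same approach as the paper's proof. The only minor difference is in the reverse inclusion: the paper simply observes that $V(\cutof{W^b}) = \{v\}\cup N(v)$ and so by the definition of a hub there are no other candidates, whereas you deduce $w\in N(v)$ from the fact that the witness tree on $\{v,w\}$ must contain the edge $(v,w)$; both arguments are immediate.
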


\begin{proof}
The witness tree for $v$ is the trivial tree with $v$ as a single
vertex. For any $w \in N(v)$, the witness tree is the tree with a
single edge $(v,w)$. There are no other vertices in $V(\cutof{W^b})$, so the lemma follows.
\end{proof}

\smallskip
If $b$ is an internal node, of any type, we will denote the
children of $W^b$ by $Z_1,\ldots,Z_t$. These children are
determined as detailed in Observation~\ref{obs:cactus-correspondence}(e).
As also explained in Observation~\ref{obs:cactus-correspondence}(e),
the set $W^b$ is a disjoint union of all sets $Z_i$, plus a singleton $\phi^{-1}(b)$ if $b$ is of Type-$1$. 
This, together with Observation~\ref{obs:basic-laminar}, implies the following key property: 

\begin{observation}\label{obs:basic-laminar rooted}
Let $b$ be an internal node of $\cactusC_G$, and denote by $Z_1,...,Z_t$ the children of
its corresponding $K$-cut $W^b$. 
If $\cutof{X}$ is a $K$-cut (not necessarily basic) such that $X\subsetneq W^b$
then $X\subseteq Z_i$ for some $i$. 
\end{observation}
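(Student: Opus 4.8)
The statement is a direct structural consequence of what we have already established, so the proof should be short. Let $b$ be an internal node and let $Z_1,\dots,Z_t$ be the children of $W^b$, which by Observation~\ref{obs:cactus-correspondence}(e) satisfy that $W^b$ is the disjoint union of $\phi^{-1}(b)$ (a singleton if $b$ is Type-1, empty if Type-0) together with $Z_1,\dots,Z_t$. First I would recall that each $Z_i$ is itself a basic $K$-cut (it is either $W^a$ for a node $a$ below $b$, or $W^C$ for a non-trivial cycle $C$ with $a_C=b$), so Observation~\ref{obs:basic-laminar}(a) applies to each of them: no $Z_i$ is crossed by any $K$-cut.

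Now take an arbitrary $K$-cut $\cutof{X}$ with $X\subsetneq W^b$ (here $X$ is the shore not containing the root $r$, consistent with our rooting convention, so in particular $r\notin W^b$ and hence $r\notin X$). For each $i$, since $\cutof{X}$ does not cross $\cutof{Z_i}$, one of the four shore intersections $X\cap Z_i$, $\barX\cap Z_i$, $X\cap\barZ_i$, $\barX\cap\barZ_i$ is empty. The plan is to argue that for at least one index $i$ the empty intersection must be $X\cap\barZ_i$, which gives exactly $X\subseteq Z_i$. Suppose not: then for every $i$ we have $X\cap\barZ_i\neq\emptyset$, so the empty intersection is one of the other three. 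If $X\cap Z_i=\emptyset$ for some $i$, then $X$ is disjoint from $Z_i$; if instead $\barX\cap Z_i=\emptyset$, then $Z_i\subseteq X$. So for each $i$, either $Z_i\cap X=\emptyset$ or $Z_i\subseteq X$. (The fourth case $\barX\cap\barZ_i=\emptyset$ means $V=X\cup Z_i$, i.e. $\barX\subseteq Z_i\subsetneq W^b$; but $r\in\barX$ and $r\notin W^b$, a contradiction, so this case cannot occur.) Hence $X$ is a union of some of the $Z_i$'s; but $X$ is also disjoint from $\phi^{-1}(b)$ since $X\subsetneq W^b$ and...

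wait, that's not quite forced. Let me reconsider: $X\subsetneq W^b$ only says $X$ misses at least one vertex of $W^b$. If $X$ is a union of a nonempty proper subset of $\{Z_1,\dots,Z_t\}$ (together possibly with $\phi^{-1}(b)$), we must rule this out. Here is where I would invoke more: such an $X$ would make $\cutof{X}$ a $K$-cut whose shore is a union of children-shores together with possibly $b$'s preimage — but actually the cleanest route is to note that if $X$ were a union of $\geq 1$ but not all of the $Z_i$ (possibly plus $\phi^{-1}(b)$), then $\cutof{X}$ would be a $K$-cut, and consider whether it crosses one of the remaining $Z_j$ not in the union — it does not cross any $Z_j$, fine — but then one should get a contradiction by a counting/parity argument via Theorem~\ref{thm: k-cuts in k-edge-connected graphs}, or by directly appealing to the cactus structure: the only basic $K$-cuts nested inside $W^b$ and containing some $Z_i$ properly are obtained along the cactus, and their shores are exactly of the form $W^a$/$W^C$.

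The main obstacle, then, is precisely this last step: ruling out that $X$ is a union of several but not all children (or several children plus $\phi^{-1}(b)$) without the full cactus machinery. I expect the intended argument is simpler than what I sketched: since $X\subsetneq W^b$, pick a vertex $y\in W^b\setminus X$. If $y\in\phi^{-1}(b)$ then (using Observation~\ref{obs:cactus-correspondence}) $X$ can only reach down into the subtrees $Z_i$, and in fact $X\subsetneq W^b$ forces $X$ to lie in the part below $b$; combined with "each $Z_i$ is either disjoint from $X$ or contained in $X$", if two or more $Z_i$ were contained in $X$ then $\cutof{X}$ would separate them, contradicting that $W^b$ glues them at $b$ — more precisely, Observation~\ref{obs:basic-laminar}(b) applied to the family $Z_1,\dots,Z_t$ (all basic, all associated with $b$): for the $K$-cut $\cutof{X}$, there is some $j$ with $X\subseteq Z_j$ or $\barX\subseteq Z_j$; the latter is impossible since $r\in\barX\setminus W^b$ and $Z_j\subseteq W^b$, so $X\subseteq Z_j$, which is exactly the claim. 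This is the clean one-line proof, and I would write the final proof this way: apply Observation~\ref{obs:basic-laminar}(b) directly, with the $Z_i$ being all basic $K$-cuts associated with $b$, then eliminate the alternative $\barX\subseteq Z_j$ using the root convention. The only thing to double-check is that $Z_1,\dots,Z_t$ are indeed \emph{all} basic $K$-cuts associated with $b$ in the sense of Section~\ref{subsec: cactus representation}, which follows from Observation~\ref{obs:cactus-correspondence}(e) together with the description of basic cuts associated with a node as the partition coming from the link-pairs at $b$.
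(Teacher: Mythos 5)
Your final argument --- apply Observation~\ref{obs:basic-laminar}(b) to the family of basic $K$-cuts associated with $b$ and eliminate the alternative $\barX\subseteq Z_j$ via the root convention --- is exactly the route the paper intends: the observation is stated as following directly from Observation~\ref{obs:cactus-correspondence}(e) together with Observation~\ref{obs:basic-laminar}. However, the one point you flag as "the only thing to double-check" is precisely where the slip is. The children $Z_1,\ldots,Z_t$ of $W^b$ are \emph{not} all of the basic $K$-cuts associated with $b$: they correspond only to the link pairs of $b$ on the cycles \emph{headed} at $b$. Since $b$ is internal and $\phi(r)$ is external, we have $b\neq\phi(r)$, so $b$ is also a non-head node of exactly one cycle, and the link pair of $b$ on that cycle contributes one further basic cut associated with $b$, namely $\cutof{W^b}$ itself, whose shore not containing $b$ is $V\setminus W^b$. (This is consistent with the requirement that the shores associated with $b$ partition $V\setminus\phi^{-1}(b)$, which $Z_1,\ldots,Z_t$ alone do not.)

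Applying Observation~\ref{obs:basic-laminar}(b) to the correct family $\braced{V\setminus W^b,\,Z_1,\ldots,Z_t}$ therefore produces two cases beyond the ones you treat: $X\subseteq V\setminus W^b$, impossible since $\emptyset\neq X\subseteq W^b$; and $\barX\subseteq V\setminus W^b$, i.e.\ $W^b\subseteq X$, impossible since $X\subsetneq W^b$. With these two lines added, your proof is complete and matches the paper's. The earlier detour through Observation~\ref{obs:basic-laminar}(a) and the attempt to rule out $X$ being a union of several children should simply be deleted; as you correctly sensed, part~(a) alone does not close that gap, and part~(b) makes it unnecessary.
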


\smallskip
Now we examine the case when $b$ is an internal node of Type-$1$.
We show that the only possible in-hub for $W^b$ is the vertex $v$ with $\phi(v)=b$, and that for $v$
to be a hub it must be an out-hub for each $Z_i$. The only possible out-hubs
are neighbors of $v$ outside $W^b$, providing that $v$ is an in-hub itself. 
Formally, we have the following lemma.


\begin{lemma}
\label{lem:hub-type1}
Let $b$ be a Type-$1$ internal node of $\cactusC_G$, and let $v$ be the (unique)
vertex of $G$ satisfying $\phi(v)=b$. Let $Z_1,\ldots,Z_t$ be the children of $W^b$.
Then $v\in V(\cutof{W^b}) \cap \bigcap_{i=1}^t V(\cutof{Z_i})$ and
\begin{equation}
\hubs(W^b) \;=\; \begin{cases*}
		\braced{v}\cup(N(v)\setminus W^b) & if $v \in \bigcap_{i=1}^t \hubs(Z_i)$
		\\
		\emptyset & otherwise
		\end{cases*}	
		\label{eqn: hub formula type 1}				
\end{equation}
\end{lemma}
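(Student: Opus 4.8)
The plan is to handle the three assertions in turn: first the membership statement $v\in V(\cutof{W^b})\cap\bigcap_i V(\cutof{Z_i})$, then the ``only $v$ can be an in-hub'' direction, and finally the two-way characterization of $\hubs(W^b)$ in both branches of the case distinction. For the membership claim, recall from Observation~\ref{obs:cactus-correspondence}(e) that $W^b=\{v\}\cup\bigcup_{i=1}^t Z_i$ is a disjoint union, and that each $Z_i=W^C$ for a cycle $C$ with $a_C=b$ (or $Z_i=W^a$ for a trivial cycle). Since $G$ is $K$-edge-connected, $\{v\}$ is not a $K$-cut shore inside $W^b$ unless edges leave it; more to the point, because $W^b\setminus\{v\}$ is partitioned into the $Z_i$ and each $\cutof{Z_i}$ is a $K$-cut while $\cutof{W^b}$ is a $K$-cut, an edge-counting argument (or directly Lemma~\ref{lem: K-cuts, type 1 node} applied to $\Tstar$, noting $\stc{G}=K$) shows that every edge of $\bigcup_i\cutof{Z_i}$ that is not internal to some single $Z_i$ must be incident to $v$; in particular each $\cutof{Z_i}$ contains at least one edge at $v$, so $v\in V(\cutof{Z_i})$, and similarly $v\in V(\cutof{W^b})$. (Here it is cleanest to invoke the global fact that $\phi^{-1}(b)=\{v\}$ together with the partition: any edge from $Z_i$ to $Z_j$ with $i\ne j$ would, if it avoided $v$, give a $K$-cut crossing a basic $K$-cut, contradicting Observation~\ref{obs:basic-laminar}.)

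Next, the direction ``if $v\notin\bigcap_i\hubs(Z_i)$ then $\hubs(W^b)=\emptyset$.'' Suppose $w\in\hubs(W^b)$ with witness tree $T$ spanning $W^b\cup\{w\}$, so $(T,w)$ is safe. By Lemma~\ref{lem:safe-tree-common-endpoint} applied to each basic cut $\cutof{Z_i}$, all edges of $T\cap\cutof{Z_i}$ share a common endpoint; the argument of Lemma~\ref{lem:hub-type1}'s analog (mirroring Lemma~\ref{lem: K-cuts, type 1 node}) forces this common endpoint to be $v$: if an edge of $T$ joined $Z_i$ to $Z_j$ with neither endpoint $v$, the first edge on the $w$-to-that-endpoint path in $T$ would induce (via safety) a $K$-cut crossing $\cutof{Z_i}$ or $\cutof{Z_j}$, contradicting Observation~\ref{obs:basic-laminar}. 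So $T\cap\cutof{Z_i}\subseteq E_v$ for every $i$. Now restrict $T$: let $T_i$ be the component of $T$ containing $Z_i$ after we think of $v$ as the ``outside''; concretely, using Observation~\ref{obs:hub-subtree} on the subtree hanging below $v$ toward $Z_i$, together with Observation~\ref{obs:hub-edge} to reattach the single $v$-$Z_i$ tree edge, we get that $v$ is an out-hub for each $Z_i$ with witness $T\restriction(Z_i\cup\{v\})$. Hence $v\in\bigcap_i\hubs(Z_i)$, contradiction. This also shows that the only vertex of $Z_1\cup\dots\cup Z_t$ (equivalently of $W^b\setminus\{v\}$) that can be a hub is $v$ itself — wait, $v\notin W^b\setminus\{v\}$; rather, an in-hub $w\in W^b$ must lie in some $Z_i$, and the common-endpoint analysis shows $w=v$ is impossible since $v\notin Z_i$, so in fact $v$ is the sole candidate in-hub and any other hub must be an out-hub $w\notin W^b$, necessarily a neighbor of $v$ since $T\cap\cutof{W^b}\subseteq E_v$ forces the unique tree edge leaving $W^b$ to be at $v$.

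For the remaining direction — ``if $v\in\bigcap_i\hubs(Z_i)$ then $\hubs(W^b)=\{v\}\cup(N(v)\setminus W^b)$'' — we construct witnesses. For each $i$ pick a witness tree $T_i$ for $v$ as out-hub of $Z_i$, i.e.\ $T_i$ spans $Z_i\cup\{v\}$ and $(T_i,v)$ is safe. These trees pairwise share only the vertex $v$ (their other vertices lie in the disjoint sets $Z_i$), and are edge-disjoint, so by repeated application of Observation~\ref{obs:hub-union} the union $T=\bigcup_i T_i$ is a tree with $(T,v)$ safe; moreover $V(T)=\{v\}\cup\bigcup_i Z_i=W^b$, so $T$ witnesses $v\in\hubs(W^b)$. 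For any $w\in N(v)\setminus W^b$, the cut $\cutof{W^b}$ is a $K$-cut and $(v,w)\in\cutof{W^b}$ (since $T\cap\cutof{W^b}\subseteq E_v$ and $w\notin W^b$), so $|\cutof{T}|=|\cutof{W^b}|=K$ and Observation~\ref{obs:hub-edge} gives that $(T\cup\{(v,w)\},w)$ is safe; its vertex set is $W^b\cup\{w\}$, so $w\in\hubs(W^b)$. Combined with the previous paragraph (no in-hub other than $v$, no out-hub outside $N(v)$), this pins $\hubs(W^b)$ exactly.

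The main obstacle I expect is the bookkeeping in the ``only $v$'' direction: one must be careful that Lemma~\ref{lem:safe-tree-common-endpoint} is applied to each child cut $\cutof{Z_i}$ separately and that the resulting common endpoints are forced to coincide with $v$ — this is where Observation~\ref{obs:basic-laminar} (a basic $K$-cut crosses no $K$-cut) does the real work, exactly as in Lemma~\ref{lem: K-cuts, type 1 node}, and where one uses that $v=\phi^{-1}(b)$ is the unique vertex ``between'' the children. The witness-construction direction is routine given Observations~\ref{obs:hub-union} and~\ref{obs:hub-edge}; the only subtlety is confirming $(v,w)\in\cutof{T}$ for $w\in N(v)\setminus W^b$, which follows since $V(T)=W^b$ and $w\notin W^b$.
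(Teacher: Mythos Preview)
Your overall architecture is right and the forward direction (building a witness for $v$ from witnesses for the $Z_i$, then extending to $N(v)\setminus W^b$ via Observation~\ref{obs:hub-edge}) is exactly what the paper does. The gap is in the ``only $v$'' direction, and it is a real one.

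You claim $T\cap\cutof{Z_i}\subseteq E_v$ by ``mirroring Lemma~\ref{lem: K-cuts, type 1 node}'': if $T$ has an edge between $Z_i$ and $Z_j$ avoiding $v$, then the first edge on the $w$-to-that-endpoint path induces a $K$-cut crossing $\cutof{Z_i}$. But Lemma~\ref{lem: K-cuts, type 1 node}'s argument works because the path starts at the special vertex $v=\phi^{-1}(b)$, which lies in \emph{no} $Z_k$; that is what forces both shores of the induced cut to straddle the $Z_k$'s. When you start the path at a general hub $w$ --- which may lie in some $Z_k$ or outside $W^b$ --- the shore $\treecutout{T}{w}{e}$ containing $w$ can perfectly well sit inside a single $Z_k$, and no contradiction with Observation~\ref{obs:basic-laminar} arises. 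Concretely, if $w$ is an out-hub with two tree-neighbors $u_1\in Z_1$, $u_2\in Z_2$, then $(w,u_1)\in T\cap\cutof{Z_1}$ but is not incident to $v$; your argument does not rule this out. Your subsequent sentence about in-hubs (``the common-endpoint analysis shows $w=v$ is impossible since $v\notin Z_i$'') is then circular: you have not yet established that the common endpoint is $v$.

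The paper avoids this by arguing directly from Observation~\ref{obs:basic-laminar rooted} with the path from $w$ to $v$ (not from $w$ to some $u\in Z_i$). If $w\in\bigcup_i Z_i$, the first edge $e$ on the $w$-to-$v$ path gives a $K$-cut shore $\treecutout{T}{w}{e}$ that contains $v$ and is a proper subset of $W^b$; since $v$ lies in no $Z_j$, this contradicts Observation~\ref{obs:basic-laminar rooted}. A parallel argument shows an out-hub $w$ must have degree~$1$ in $T$, after which its unique tree-neighbor is an in-hub and hence equals $v$. Once you know $w\in\{v\}\cup(N(v)\setminus W^b)$, your restriction argument (Observations~\ref{obs:hub-subtree}--\ref{obs:hub-union}) correctly recovers $v\in\bigcap_i\hubs(Z_i)$. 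So the fix is simply to replace the common-endpoint detour by this direct use of Observation~\ref{obs:basic-laminar rooted}; the rest of your proof then goes through.
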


\begin{proof}
The first condition, namely that $v$ has edges crossing $K$-cut $\cutof{W^b}$ and all $K$-cuts $\cutof{Z_i}$,
follows directly from Lemma~\ref{lem: K-cuts, type 1 node}. 

In the rest of the proof we show that Equation~(\ref{eqn: hub formula type 1}) is true. The argument
is by considering three types of vertices: vertices in $\bigcup_{i=1}^t Z_i$,
vertex $v$ (the only vertex in $W^b \setminus \bigcup_{i=1}^t Z_i$), and vertices in $V\setminus W^b$.
The theorem will follow from the three claims established below.


\begin{claim}\label{cla: hub-type1, w in Z's}
$\hubs(W^b) \cap (\bigcup_{i=1}^t Z_i) = \emptyset$.
\end{claim}

To prove this claim, suppose that $w\in \hubs(W^b) \cap \bigcup_{i=1}^t Z_i$, and let $T$ be its safe tree.
Consider an edge $e$ on the path from $w$ to $v$ in $T$. 
By the definition of a safe tree, $\cutof{X} = \cutof{\treecutout{T}{w}{e}}$ is a $K$-cut.
But $X\subsetneq V(T) = W^b$ and, since $v\in X$, we also have $X\not\subseteq \bigcup_{i=1}^t Z_i$,
contradicting Observation~\ref{obs:basic-laminar rooted}.


\begin{claim}\label{cla: hub-type1, w = v}
$v\in \hubs(W^b)$ iff $v \in \bigcap_{i=1}^t \hubs(Z_i)$.
\end{claim}

We start with the $(\Leftarrow)$ implication. Assume that $v \in \bigcap_{i=1}^t \hubs(Z_i)$.
For all $i = 1,...,t$, consider the witness tree for $v$ and $Z_i$.  By Observation~\ref{obs:hub-edge} their union is a
safe tree rooted at $v$ and thus it is a witness tree for $v$ and $W^b$. 
Together with $v\in V(\cutof{W^b})$ this implies that $v$ is an in-hub for $W^b$. 

To prove the $(\Rightarrow)$ implication, suppose that $v\in \hubs(W^b)$ and let $T$ be its safe tree.
We will show that $v$ is an out-hub for each $Z_i$. Indeed, for
any edge $e=(v,u_e)$ from $v$ in $T$, $u_e\in Z_j$ for some $j$ and $V(\treecutout{T}{v}{e})\subseteq Z_j$, as
$\cutof{\treecutout{T}{v}{e}}$ is a $K$-cut not containing the whole $W^b$.

Now consider an arbitrary but fixed $i$, and the tree $T'$ consisting of all edges $e$ from $v$ to $Z_i$ in $T$
plus a union of all corresponding trees $\treecutout{T}{v}{e}$. This
$T'$ spans $Z_i\cup\{v\}$. For each $e=(v,u_e), u_e\in Z_i$, by
Observations~\ref{obs:hub-subtree} and Observation~\ref{obs:hub-edge},
both $(\treecutout{T}{v}{e},u_e)$ and
$(\treecutout{T}{v}{e}\cup\braced{e},v)$ are safe trees. Now $T'$ is
the union of all these trees and thus Observation~\ref{obs:hub-union}
implies that $T'$ is a witness tree for $v$ and $\cutof{Z_i}$ and that $w=v$ is an out-hub for $Z_i$. 


\begin{claim}\label{cla: hub-type1, w not in Wb}
Let $w\in V\setminus W^b$. Then 
$w\in \hubs(W^b)$ iff $v \in \bigcap_{i=1}^t \hubs(Z_i)$ and $w\in N(v)\setminus W^b$.
\end{claim}

The $(\Leftarrow)$ implication is trivial: if $v \in \bigcap_{i=1}^t \hubs(Z_i)$ then,
by Claim~\ref{cla: hub-type1, w = v}, $v$ is an in-hub for $W^b$.
Then the assumption that $w\in N(v)\setminus W^b$ implies that $w$ is an out-hub for $W^b$,
directly by definition.

It remains to prove the $(\Rightarrow)$ implication. Suppose that 
$w\in \hubs(W^b)$ and let $T$ be its witness tree. By the definition of $\hubs(W^b)$
we have $w\in V(\cutof{W^b})$. 

We argue first that the degree of $w$ in $T$ is $1$.
Otherwise, if the degree of $w$ in $T$ is at least $2$, let $e$ be an edge from $w$
such that $v\in\treecutout{T}{w}{e}$ (such an edge necessarily exists.)
Then $\treecutout{T}{w}{e}$ does not span whole $W^b$ as there
is another edge from $w$ to $W^b$ in $T$. Thus $v \in V(\treecutout{T}{w}{e}) \subsetneq W^b$, 
and $\cutof{\treecutout{T}{w}{e}}$ is a $K$-cut, a contradiction again with Observation~\ref{obs:basic-laminar rooted}.

Therefore the degree of $w$ in $T$ is $1$. Let this single edge from $w$ be $(w,x)$. Then
Observation~\ref{obs:hub-edge} implies that $(T,w)$ is a safe tree if and
only if $(\treecutout{T}{w}{e},x)$ is a safe tree. Then
$\treecutout{T}{w}{e}$ witnesses that $x$ is an in-hub for $W^b$.
But Claims~\ref{cla: hub-type1, w in Z's} and~\ref{cla: hub-type1, w = v} imply
that only $v$ can be an in-hub for  $W^b$, providing that $v \in \bigcap_{i=1}^t \hubs(Z_i)$,
completing the proof of the $(\Rightarrow)$ implication in Claim~\ref{cla: hub-type1, w not in Wb}.
\end{proof}

If $b$ is an internal node of Type~$0$, $W^b$ may have multiple in-hubs,
but, as we show in the lemma below, they still must be common hubs for all $Z_i$'s. 
The out-hubs of $W^b$ may either be neighbors of such in-hubs, or common out-hubs for all $Z_i$'s. 


\begin{lemma}
\label{lem:hub-type0}
Let $b$ be a Type-$0$ internal node of the cactus. Let $Z_1,\ldots,Z_t$
be the children of $W^b$. Then
\begin{equation}
\hubs(W^b) \;=\; \bar{H}\cup(N(\bar{H}\cap W^b)\setminus W^b),
	\quad\textrm{where}\; \textstyle \bar{H} = V(\cutof{W^b})\cap \bigcap_{i=1}^t \hubs(Z_i).  
	\label{eqn: hub formula type 0}	
\end{equation}
\end{lemma}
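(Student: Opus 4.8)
The plan is to mimic the structure of the proof of Lemma~\ref{lem:hub-type1}, but accounting for the fact that a Type-$0$ node $b$ has an empty preimage, so $W^b = \bigcup_{i=1}^t Z_i$ exactly (a disjoint union, with no extra singleton vertex). As before I would split vertices into two groups: those inside $\bigcup_{i=1}^t Z_i = W^b$, and those outside $W^b$. The claim to prove is $\hubs(W^b) = \bar H \cup (N(\bar H\cap W^b)\setminus W^b)$, where $\bar H = V(\cutof{W^b})\cap\bigcap_{i=1}^t\hubs(Z_i)$. I would prove two claims mirroring the Type-$1$ analysis.

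\textbf{Claim A (in-hubs).} For $w\in W^b$, $w\in\hubs(W^b)$ iff $w\in\bar H$, i.e.\ $w\in V(\cutof{W^b})$ and $w\in\hubs(Z_i)$ for all $i$. For the $(\Leftarrow)$ direction: given witness trees $(T_i,w)$ for $w$ and each $Z_i$, Observation~\ref{obs:hub-union} shows their union is a safe tree rooted at $w$ spanning $W^b\cup\{w\}=W^b$, and since $w\in V(\cutof{W^b})$ it is a witness tree, so $w$ is an in-hub. For $(\Rightarrow)$: suppose $w\in\hubs(W^b)\cap W^b$ with witness tree $T$. By Lemma~\ref{lem: K-cuts, type 0 node} (applied via Corollary~\ref{cor: odd K, cross edges common endpoint} / the analogous reasoning), $w$ lies in some $Z_j$. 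For each edge $e=(w,u_e)$ of $T$ from $w$, the cut $\cutof{\treecutout{T}{w}{e}}$ is a $K$-cut strictly inside $W^b$, so by Observation~\ref{obs:basic-laminar rooted} its shore $\treecutout{T}{w}{e}$ lies entirely in some $Z_{i(e)}$; hence $u_e\in Z_{i(e)}$. Fix an arbitrary $i$: collect all edges $e$ from $w$ with $u_e\in Z_i$ together with the subtrees $\treecutout{T}{w}{e}$; by Observations~\ref{obs:hub-subtree} and~\ref{obs:hub-edge} each $(\treecutout{T}{w}{e}\cup\{e\},w)$ is safe, and by Observation~\ref{obs:hub-union} their union $T'_i$ spans $Z_i\cup\{w\}$ and is safe rooted at $w$. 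Since $T\cap\cutof{W^b}\ne\emptyset$ and all these crossing edges share endpoint $w$ (by Lemma~\ref{lem:safe-tree-common-endpoint}), one checks $w\in V(\cutof{Z_i})$ for every $i$ (here I use that no crossing edge of $W^b$ can have both endpoints inside a single $Z_i$, combined with the fact that $w$'s neighbors across the $Z_i$-boundaries witness membership), so $w$ is a hub for every $Z_i$ and $w\in\bar H$.

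\textbf{Claim B (out-hubs).} For $w\in V\setminus W^b$, $w\in\hubs(W^b)$ iff $w\in N(\bar H\cap W^b)$. The $(\Leftarrow)$ direction is immediate: if $x\in\bar H\cap W^b$ then by Claim~A $x$ is an in-hub with some witness tree $T_x$, and $(T_x\cup\{(x,w)\},w)$ is a safe tree by Observation~\ref{obs:hub-edge}, so $w$ is an out-hub. For $(\Rightarrow)$: let $T$ be a witness tree for out-hub $w$, so $w\in V(\cutof{W^b})$. I argue $\deg_T(w)=1$: otherwise pick an edge $e$ from $w$ in $T$; since there is a second edge from $w$ into $W^b$, $\treecutout{T}{w}{e}$ does not span all of $W^b$, so $\cutof{\treecutout{T}{w}{e}}$ is a $K$-cut strictly inside $W^b$, forcing it inside some $Z_i$ by Observation~\ref{obs:basic-laminar rooted} — but then \emph{every} edge of $T$ from $w$ lands in some $Z_i$ and the union of the resulting subtrees-with-edges is a safe tree rooted at the single in-hub, which by Claim~A must be a single vertex $x$; yet $w$ has two edges into $W^b$ and $T$ is a tree spanning $W^b\cup\{w\}$, so there is a unique in-hub $x$ receiving all of $w$'s edges, contradiction unless $\deg_T(w)=1$. (This step needs care: it reduces to showing $w$ cannot be adjacent in $T$ to two distinct vertices of $W^b$, which follows because the union of $w$'s subtrees would have to reconnect through $w$, contradicting $T$ being a tree — the cleaner formulation is simply that $T$ restricted to $W^b$, i.e.\ $T$ minus $w$, must be connected since $(T,w)$ safe forces $|\cutof{W^b}|=K$ and $T$ spans $W^b\cup\{w\}$; so $w$ has degree $1$.) With $\deg_T(w)=1$, say the edge is $(w,x)$; Observation~\ref{obs:hub-edge} gives that $(\treecutout{T}{w}{(w,x)},x)=(T\setminus\{w\},x)$ is a safe tree spanning $W^b$, so $x$ is an in-hub for $W^b$, hence $x\in\bar H\cap W^b$ by Claim~A, and $w\in N(x)\setminus W^b\subseteq N(\bar H\cap W^b)\setminus W^b$.

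Finally, Claims~A and~B together give exactly $\hubs(W^b)=(\bar H) \cup (N(\bar H\cap W^b)\setminus W^b)$, since any hub is either in $W^b$ (handled by Claim~A, yielding $\bar H$ — note $\bar H\subseteq W^b$ by definition) or outside $W^b$ (handled by Claim~B). I expect the main obstacle to be the $\deg_T(w)=1$ argument for out-hubs and, relatedly, pinning down why a safe witness tree for an in-hub has all its $W^b$-crossing edges at a single vertex and why that single vertex must lie in $V(\cutof{Z_i})$ for \emph{every} $i$: this is where Lemma~\ref{lem:safe-tree-common-endpoint}, Lemma~\ref{lem: K-cuts, type 0 node}, and Observation~\ref{obs:basic-laminar rooted} have to be combined carefully, and it is the only place where the Type-$0$ case genuinely differs from Type-$1$ (multiple in-hubs are allowed, so one cannot simply appeal to uniqueness of $v$).
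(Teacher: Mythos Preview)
Your proof has a genuine gap stemming from a misreading of $\bar H$. You assert at the end that ``$\bar H\subseteq W^b$ by definition,'' but this is false: $\bar H = V(\cutof{W^b})\cap\bigcap_i\hubs(Z_i)$, and $V(\cutof{W^b})$ contains endpoints on \emph{both} sides of the cut. Consequently your Claim~B is too strong. An out-hub $w\notin W^b$ need not lie in $N(\bar H\cap W^b)$; it may lie in $\bar H$ itself. Concretely, if $w$ has a witness tree $T$ in which $\deg_T(w)\ge 2$, then each subtree $\treecutout{T}{w}{e}$ is a proper subset of $W^b$ and hence (by Observation~\ref{obs:basic-laminar rooted}) sits inside some $Z_i$; gathering the branches landing in a fixed $Z_i$ yields a witness tree for $w$ and $Z_i$, so $w\in\hubs(Z_i)$ for every $i$ and thus $w\in\bar H$. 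Nothing forces any single neighbor of $w$ in $W^b$ to be a common hub for all $Z_i$, so your reduction to $N(\bar H\cap W^b)$ fails.

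Your attempted $\deg_T(w)=1$ argument is where this breaks down: you derive that each branch from $w$ lies in some $Z_i$, and then try to extract a contradiction (``a unique in-hub $x$ receiving all of $w$'s edges''), but there is no contradiction to be found --- this situation is perfectly consistent, and the ``cleaner formulation'' that $T\setminus\{w\}$ must be connected is simply unjustified. The paper's proof handles exactly this by splitting the out-hub case into two sub-cases: if $\deg_T(w)\ge 2$ the argument above gives $w\in\bar H$ directly (this is the paper's Case~2), and only when $\deg_T(w)=1$ does one pass to the neighbor $x$ and invoke Claim~A to get $w\in N(\bar H\cap W^b)\setminus W^b$ (Case~3). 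Your Claim~A and the $(\supseteq)$ direction are essentially fine; the fix is just to replace Claim~B by this two-case analysis.
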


\begin{proof}
We first show the $(\supseteq)$ inclusion. Consider some $w\in\bar{H}$. Then $w$ is a hub for each set
$Z_i$. Consider the corresponding witness trees $T_i$ for $w$ and $Z_i$ for all $i=1,\ldots,t$. 
By Observation~\ref{obs:hub-edge} their union is a safe tree rooted at
$w$ and thus it is a witness tree for $w$ and $\cutof{W^a}$. Together
with $w\in V(\cutof{W^b})$ this implies that $w$ is an in-hub for $W^b$.  

Next, consider some $u\in N(\bar{H}\cap W^b)\setminus W^b$.
That is, $u \in V\setminus W^b$, and $u$ is a neighbor of some $w\in\bar{H}\cap W^b$. 
By the previous paragraph, $w$ is an in-hub for $W^b$, so
Observation~\ref{obs:hub-edge} implies that $u$ is an out-hub for $W^b$. 
This completes the proof of the $(\supseteq)$ inclusion.

To show the $(\subseteq)$ inclusion, fix any $w\in \hubs(W^b)$, and let $T$ be its witness tree.
Recall that $T$ spans $W^b\cup\braced{w}$, so any neighbor of $w$ in $T$ is in $W^b = \bigcup_{i=1}^t Z_i$.
We need to show that $w \in \bar{H} \cup N(\bar{H}\cap W^b)\setminus W^b$. To this end,
we consider the three cases below.

\smallskip
\noindent
\mycase{1} $w\in W^b$. We will show that in this case $w\in\bar{H}$.
Note that $w\in V(\cutof{W^b})$ by the definition of a hub, so we
need to show that $w \in \bigcap_{i=1}^t \hubs(Z_i)$.

We start with the following simple observation.
For any edge $(w,u)$ in $T$, $\cutof{\treecutout{T}{w}{e}}$ is a $K$-cut
and $\treecutout{T}{w}{e} \subseteq W^b\setminus\braced{w} \subsetneq  W^b$, which implies that
$\treecutout{T}{w}{e}$ is a subset of one of the sets $Z_i$. 

Now, fix some arbitrary index $i \in \braced{1,...,t}$. It remains to show that $w\in \hubs(Z_i)$.
Let $e_1,\ldots,e_q$ be all the edges in $T$ from $w$ to $Z_i$. 
We claim that the subtrees $\treecutout{T}{w}{e_j}$, $j=1,\ldots,q$ cover all the vertices of $Z_i$, 
possibly with the exception of $w$ in case when $w\in Z_i$. 
Indeed, consider any vertex $x\in Z_i\setminus\braced{w}$ and the
first edge $e'=(w,u')$ on the path from $w$ to $x$ in $T$. By the
previous observation, $u'\in Z_i$, as $\treecutout{T}{w}{e'}$ contains $x\in Z_i$. 
Using Observations~\ref{obs:hub-subtree}, \ref{obs:hub-edge}, and~\ref{obs:hub-union}, this implies that
$\bigcup_{j=1}^q(\treecutout{T}{w}{e_j}\cup\braced{e_j})$ is a witness
tree for $w$ and $\cutof{Z_i}$, so $w\in \hubs(Z_i)$, as needed.

\smallskip
\noindent
\mycase{2} $w\not\in W^b$ and the degree of $w$ in $T$ is at least $2$. 
We will show that in this case $w\in\bar{H}$.
The argument is essentially the same as in Case~1.
We have that  $w\in V(\cutof{W^b})$, by the definition of a hub, and
it remains to show that $w \in \bigcap_{i=1}^t \hubs(Z_i)$.

For any edge $(w,u)$ in $T$, $\cutof{\treecutout{T}{w}{e}}$ is a $K$-cut
and $\treecutout{T}{w}{e} \subseteq W^b\setminus\braced{u'} \subsetneq  W^b$, 
where $u'$ is any neighbor of $w$ in $T$ other than $u$.
This implies that $\treecutout{T}{w}{e}$ is a subset of one of the sets $Z_i$ ---
the same property that we had in Case~1.

Following the same argument as in Case~1, for any index $i$ we can obtain the witness
tree for $w$ and $\cutof{Z_i}$ by combining the branches of $T$
inside $Z_i$, showing that $w\in \hubs(Z_i)$. Since $i$ is arbitrary,
we conclude that $w \in \bigcap_{i=1}^t \hubs(Z_i)$.

\smallskip
\noindent
\mycase{3}
$w\not\in W^b$ and the degree of $w$ in $T$ is $1$. 
In this case, we show that $w\in N(\bar{H}\cap W^b)\setminus W^b$.
Let this single edge from $w$ in $T$ be $e = (w,x)$. Then
Observation~\ref{obs:hub-edge} implies that $(T,w)$ is a safe tree if and
only if $(\treecutout{T}{w}{e},x)$ is a safe tree. However, then
$\treecutout{T}{w}{e}$ witnesses that $x$ is an in-hub for $W^b$
and we have already shown that this in turn happens if and only if
$x\in\bar{H}$. Thus $w\in N(\bar{H}\cap W^b)\setminus W^b$, completing the proof. 
\end{proof}


\subsubsection{Constructing the Hub Sets for Non-Trivial Cycles}
\label{sec:constructing-hubs-cycles}

In the previous sub-section we characterized hubs associated with $K$-cuts $W^b$, for
nodes $b\in\cactusC_G$. In this sub-section, we assume that $K$ is even and
we analyze hubs associated with the other type of basic $K$-cuts, namely with
$K$-cuts $W^C$, for non-trivial cycles $C$ of $\cactusC_G$.
The structural properties we establish are analogous to those in
Lemma~\ref{lem: K-cuts, cycle}, where they were based on a congestion-$K$ spanning tree $\Tstar$.
Here, we need to show that similar properties can be derived based on safe trees instead.
The path of $\Tstar$ that traversed all basic $K$-cuts associated with $C$, identified in Lemma~\ref{lem: K-cuts, cycle}, 
will be represented here by two sub-paths emanating from the head node $a_C$ of $C$,
one clockwise and the other counter-clockwise on $C$. These sub-paths will be
referred to as the front and back spine of $C$.

\smallskip

We denote the nodes of $C$ by $a_0, \ldots, a_{\ell-1}, a_\ell = a_0$, in the order along $C$,
where $\ell\ge 3$. That is, the links of $C$ are exactly $(a_0,a_1),(a_1,a_2),...,(a_{\ell-1},a_0)$. 
We assume that $a_0 = a_\ell$ is the head node $a_C$ of $C$.
The children of $W^C$ are the sets $Z_i=W^{a_i}$, $i=1,\ldots,\ell-1$.  For convenience,
we also use notation $Z_0=Z_\ell=V(\cutof{W^C})\setminus W^C$.
We stress that this set $Z_0$ is not a full shore of $W^C$ (unlike the other sets $Z_i$), it only includes 
the endpoints of the edges of the cut $\cutof{W^C}$ that are outside $W^C$. 
As in Section~\ref{subsec: cactus for congestion K},
for each $i = 0,...,\ell-1$, the set $\cutof{Z_i}\cap\cutof{Z_{i+1}}$ is a half-$K$-cut in $G$ (contains exactly $K/2$ edges).

We now define the two spines for $C$, mentioned earlier:
\begin{itemize}[nosep]
\item
A \emph{back spine} is a path $w_s$,\ldots,$w_\ell$ in $G$, with $s\in\{2, \ldots,\ell\}$,
such that $w_s\in Z_s\cap\hubs(Z_{s-1})\cap\hubs(Z_s)$, $w_i\in Z_i\cap \hubs(Z_i)$ for each $i = s+1,\ldots,\ell-1$,
and $w_\ell\in Z_\ell$.
\item
Symmetrically, a \emph{front spine} is a path $w_0,...,w_s$ in $G$, with $s\in \{0,\ldots,\ell-2\}$,
such that $w_0\in Z_0$, $w_i\in Z_i\cap \hubs(Z_i)$ for each $i = 1,\ldots,s-1$, and $w_s\in Z_s\cap\hubs(Z_s)\cap\hubs(Z_{s+1})$.
\end{itemize}


\begin{observation}
\label{obs:spine-hubs}
If $w_s, \ldots, w_\ell$ is a back spine then $w_i\in\hubs(Z_{i-1})$ for all $i=s,\ldots,\ell$.
Symmetrically, if $w_0, \ldots,  w_s$ is a front spine then  $w_i\in\hubs(Z_{i+1})$ for all $i=0,\ldots,s$.
\end{observation}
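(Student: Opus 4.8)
The plan is to prove the two statements symmetrically, so I will only treat the back-spine case; the front-spine case is obtained by reversing the roles of the indices (replacing $i-1$ by $i+1$ and going around $C$ the other way). So let $w_s,\ldots,w_\ell$ be a back spine, and fix an index $i\in\{s,\ldots,\ell\}$. I must exhibit a witness tree for $w_i$ and the child cut $\cutof{Z_{i-1}}$, i.e.\ a spanning tree $T$ of the subgraph of $G$ induced by $Z_{i-1}\cup\{w_i\}$ such that $(T,w_i)$ is a safe tree in $G$ (recall $|\cutof{Z_{i-1}}|=K$, so safety of $(T,w_i)$ is exactly the condition $|\cutof{\treecutout{T}{w_i}{e}}|=K$ for every edge $e$ of $T$). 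Note that $w_i\in V(\cutof{Z_{i-1}})$ is immediate: by definition of a back spine, $w_i\in Z_i$ for $i>s$ (and $w_s\in Z_s$), while the spine edge $(w_{i-1},w_i)$ lies between $Z_{i-1}$ and $Z_i$, hence belongs to the half-$K$-cut $\cutof{Z_{i-1}}\cap\cutof{Z_i}\subseteq\cutof{Z_{i-1}}$; so $w_i$ indeed has an edge crossing $\cutof{Z_{i-1}}$, which is part of being a hub.

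Next I would build the witness tree. For $i=\ell$ there is nothing to do in the sense that $w_\ell\in Z_\ell = V(\cutof{W^C})\setminus W^C$, and we want $w_\ell\in\hubs(Z_{\ell-1})$: the spine gives $w_{\ell-1}\in Z_{\ell-1}\cap\hubs(Z_{\ell-1})$ with witness tree $T_{\ell-1}$ (a spanning tree of the subgraph induced by $Z_{\ell-1}\cup\{w_{\ell-1}\}$), and the spine edge $(w_{\ell-1},w_\ell)$ crosses $\cutof{Z_{\ell-1}}$; since $|\cutof{Z_{\ell-1}}|=K$ and $(T_{\ell-1},w_{\ell-1})$ is safe, Observation~\ref{obs:hub-edge} applied to $(T_{\ell-1},w_{\ell-1})$ and the edge $(w_\ell,w_{\ell-1})$ yields that $(T_{\ell-1}\cup\{(w_{\ell-1},w_\ell)\},w_\ell)$ is a safe tree; it spans $Z_{\ell-1}\cup\{w_\ell\}$, so $w_\ell\in\hubs(Z_{\ell-1})$. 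For $i$ in the interior range $s< i<\ell$, the back-spine definition already gives $w_i\in Z_i\cap\hubs(Z_i)$; what is asked is $w_i\in\hubs(Z_{i-1})$, and this again follows by the same one-step argument: take the spine predecessor $w_{i-1}$, which (for $i-1>s$) satisfies $w_{i-1}\in Z_{i-1}\cap\hubs(Z_{i-1})$, or (for $i-1=s$) satisfies $w_s\in Z_s\cap\hubs(Z_{s-1})\cap\hubs(Z_s)$ — in the latter case one uses the $\hubs(Z_{s-1})$ membership directly and we are done, so assume $i-1>s$; let $T_{i-1}$ be a witness tree for $w_{i-1}$ and $\cutof{Z_{i-1}}$, use the spine edge $(w_{i-1},w_i)\in\cutof{Z_{i-1}}$, and apply Observation~\ref{obs:hub-edge} to conclude $(T_{i-1}\cup\{(w_{i-1},w_i)\},w_i)$ is a safe tree spanning $Z_{i-1}\cup\{w_i\}$. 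The remaining boundary case $i=s$ is immediate, since $w_s\in\hubs(Z_{s-1})$ is part of the definition of a back spine.

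So the whole observation is really just repeated application of Observation~\ref{obs:hub-edge} (the ``add a pendant edge and re-root'' operation), together with bookkeeping of which $Z_j$ the relevant vertices live in, using that consecutive $Z_j$'s meet in a half-$K$-cut while non-consecutive ones are disjoint. I do not expect a genuine obstacle here; the only point requiring care is the index bookkeeping at the two ends of the spine — in particular making sure that when the spine predecessor $w_{i-1}$ is the endpoint $w_s$ one invokes the $\hubs(Z_{s-1})$ clause rather than trying to run the one-step argument one more time, and making sure that $w_i\in V(\cutof{Z_{i-1}})$ in every case (which, as noted, is exactly the statement that the spine edge incident to $w_i$ on the ``lower-index side'' lies in the half-$K$-cut $\cutof{Z_{i-1}}\cap\cutof{Z_i}$). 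The front-spine half is verbatim the same with indices mirrored.
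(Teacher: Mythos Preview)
Your approach is essentially identical to the paper's: handle $i=s$ by definition, and for $i>s$ use that $w_{i-1}$ is an in-hub for $Z_{i-1}$, then apply Observation~\ref{obs:hub-edge} along the spine edge $(w_{i-1},w_i)\in\cutof{Z_{i-1}}$ to conclude $w_i$ is an out-hub for $Z_{i-1}$. The paper does this uniformly in one line for all $i\in\{s+1,\ldots,\ell\}$, whereas you split into sub-cases, but the content is the same.

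There is one indexing slip. In the sub-case $i-1=s$ (so $i=s+1$) you write that one ``uses the $\hubs(Z_{s-1})$ membership directly and we are done.'' That does not give what you need: you need $w_{s+1}\in\hubs(Z_s)$, not anything about $Z_{s-1}$. The fix is already present in the line you wrote: from $w_s\in Z_s\cap\hubs(Z_s)$ you have that $w_s=w_{i-1}$ is an in-hub for $Z_{i-1}=Z_s$, and then the same one-step argument via Observation~\ref{obs:hub-edge} applies verbatim. In other words, the case $i-1=s$ is not special at all --- $w_{i-1}\in Z_{i-1}\cap\hubs(Z_{i-1})$ holds for every $i\in\{s+1,\ldots,\ell\}$ --- so you can (and the paper does) collapse your sub-cases into one.
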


\begin{proof}
Consider the case of a back spine. For $i=s$ the claim is included in
the definition of a back spine. For each $i \in\braced{s+1,...,\ell}$, by the definition of back
spines, $w_{i-1}$ is an in-hub for $Z_{i-1}$, so, since we also have $(w_{i-1},w_i)\in \cutof{Z_{i-1}}$,
Observation~\ref{obs:hub-edge} implies that $w_i$ is an out-hub for $Z_{i-1}$. The case of the front spine is symmetric.
\end{proof}


\begin{lemma}\label{lem:hub-cycle}
Let $C$ be a non-trivial cycle in the cactus $\cactusC_G$ with vertices $a_0, a_1, \ldots, a_{\ell-1},a_\ell = a_0$,
listed in their order around $C$.  Then
\begin{enumerate}[{\rm(i)},nosep]
\item
If $w_2, \ldots, w_\ell$ is a back spine then $w_{\ell-1}, w_\ell \in \hubs(W^C)$.
\item
If $w_0, \ldots, w_{\ell-2}$ is a front spine then $w_0, w_1 \in \hubs(W^C)$.
\item 
If $w_0, \ldots, w_{g-2}$ is a front spine and $w_{g+1}, \ldots, w_\ell = w_0$ is a back spine,
for some index $g\in\{2,\ldots,\ell-1\}$, then $w_0 \in \hubs(W^C)$.
\item
$\hubs(W^C)$ contains only the vertices included in rules~(i), (ii) and~(iii).
\end{enumerate}
\end{lemma}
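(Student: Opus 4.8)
Parts~(i)--(iii) I would prove by an explicit construction of witness trees, and part~(iv) by transferring the structural analysis behind Lemma~\ref{lem: K-cuts, cycle} from a congestion-$K$ spanning tree to an arbitrary safe tree.

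For part~(i): given the back spine $w_2,\dots,w_\ell$, form the path $w_2 - w_3 - \cdots - w_{\ell-1}$, attach at each $w_i$ ($i=2,\dots,\ell-1$) an in-hub witness tree for $w_i$ and $Z_i$, and attach at $w_2$ an out-hub witness tree for $w_2$ and $Z_1$; the result spans $W^C$. To see that rooting it at $w_{\ell-1}$ is safe, I would check each edge: for an edge interior to an attached witness tree the induced cut $\cutof{\treecutout{T}{w_{\ell-1}}{e}}$ is unchanged and has size $K$ by safety of that subtree, while for a path edge $(w_i,w_{i+1})$ the component not containing $w_{\ell-1}$ is exactly $Z_1\cup\cdots\cup Z_i$, whose boundary consists of the two half-$K$-cuts $\cutof{Z_0}\cap\cutof{Z_1}$ and $\cutof{Z_i}\cap\cutof{Z_{i+1}}$, hence has size $K$. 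Together with $w_{\ell-1}\in V(\cutof{W^C})$ (witnessed by the edge $(w_{\ell-1},w_\ell)$) this gives $w_{\ell-1}\in\hubs(W^C)$; then appending $w_\ell$ at $w_{\ell-1}$ and invoking Observation~\ref{obs:hub-edge} (the new components on $w_\ell$'s side are still unions of consecutive $Z_i$'s) yields $w_\ell\in\hubs(W^C)$. The assembly of the pieces is exactly Observations~\ref{obs:hub-subtree}, \ref{obs:hub-union}, \ref{obs:hub-edge}. Part~(ii) is the mirror image, and part~(iii) combines a front spine and a back spine sharing the vertex $w_0\in Z_0$, attaching the appropriate in-/out-hub witness trees; again every induced cut turns out to be a union of consecutive $Z_i$'s, now because the ``gap'' half-cut $\cutof{Z_{g-1}}\cap\cutof{Z_g}$ carries no tree edge, so each $\treecutout{T}{w_0}{e}$ equals either $Z_1\cup\cdots\cup Z_i$ (front side) or $Z_g\cup\cdots\cup Z_i$ (back side).

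For part~(iv) I would argue as follows. Let $w\in\hubs(W^C)$ with witness tree $(T,w)$, so $T$ spans $W^C\cup\{w\}$. Every edge of $T$ is either interior to some $Z_i$ or lies in one of the half-cuts $T\cap\cutof{Z_{i-1}}\cap\cutof{Z_i}$ (with $Z_0=Z_\ell$), since $G$ has no edges between non-consecutive $Z_i$. Applying Lemma~\ref{lem:safe-tree-common-endpoint} to each basic $K$-cut $\cutof{Z_i}$, all edges of $T\cap\cutof{Z_i}$ share an endpoint $w_i$, and since $Z_{i-1}\cap Z_{i+1}=\emptyset$ this $w_i$ lies in $Z_i$ whenever $T$ meets both half-cuts at $\cutof{Z_i}$. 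If every half-cut contained a $T$-edge, these common endpoints would close into an $\ell$-cycle inside $T$ — impossible — so some half-cut $\cutof{Z_{j-1}}\cap\cutof{Z_j}$ is free of $T$-edges; moreover if $w\in W^C$ then $T$ has no vertex in $Z_0$, so both half-cuts incident to $Z_0$ are empty. Next, using that every $K$-cut is non-crossing with every basic $K$-cut $\cutof{Z_i}$ (Observation~\ref{obs:basic-laminar}) together with the half-$K$-cut structure, any $K$-cut $\cutof{X}$ with $X\subsetneq W^C$ has $X$ contained in a single $Z_i$ or equal to a union of consecutive $Z_i$'s. Since $\cutof{\treecutout{T}{w}{e}}$ is a $K$-cut for every $e$, I would chase the common-endpoint constraints around $C$ starting from the empty half-cut(s) to conclude that the half-cut edges of $T$ form a single path through consecutive sets $Z_i$, possibly with one or two end sets ``hanging off'' a path-endpoint vertex, and that $\treecutout{T}{w}{e}$ is always a union of consecutive $Z_i$'s. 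Restricting $T$ to $Z_i\cup\{w_i\}$ and using that every crossing edge at $\cutof{Z_i}$ meets $w_i$, Observations~\ref{obs:hub-subtree}--\ref{obs:hub-edge} certify $w_i\in\hubs(Z_i)$ for the interior $w_i$, and the out-hub relations a spine requires follow from Observation~\ref{obs:hub-edge}. A final case analysis on where $w$ sits — an in-hub in some $Z_i$, or an out-hub in $Z_0$ — together with the requirement that the components on $w$'s side be full unions of $Z_i$'s, pins $w$ to one of the endpoints in rules~(i)--(iii): the empty half-cut being $\cutof{Z_0}\cap\cutof{Z_1}$, $\cutof{Z_{\ell-1}}\cap\cutof{Z_0}$, or an interior half-cut $\cutof{Z_{g-1}}\cap\cutof{Z_g}$ matches rules~(i), (ii), (iii) respectively.

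The hard part will be part~(iv), and inside it the step that converts the purely local common-endpoint data at each $\cutof{Z_i}$ into the global ``path-plus-hang-offs'' description of $T$ — this is delicate precisely because $T$ need not span $V$ and $T[Z_i]$ may be disconnected, so I cannot simply contract the $Z_i$'s and reuse the proof of Lemma~\ref{lem: K-cuts, cycle} verbatim. I also expect the boundary cases to need careful bookkeeping: degenerate spines with $s\in\{0,1\}$ or $s\in\{\ell-1,\ell\}$, the asymmetric role of the partial set $Z_0$ (which is not a full shore), and making sure that rooting $T$ at an interior path vertex is genuinely ruled out so that no spurious hub is produced and no genuine hub is missed. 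By contrast, the construction direction~(i)--(iii) should be routine once the witness-tree pieces are stitched together with Observations~\ref{obs:hub-subtree}--\ref{obs:hub-edge} and the half-cut count is used to evaluate the induced cuts.
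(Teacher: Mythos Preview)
Your proposal is correct and follows essentially the same approach as the paper: explicit witness-tree construction via Observations~\ref{obs:hub-subtree}--\ref{obs:hub-edge} for (i)--(iii), and for (iv) the common-endpoint lemma (Lemma~\ref{lem:safe-tree-common-endpoint}), the cycle argument forcing an empty half-cut, extraction of the $w_i\in\hubs(Z_i)$ memberships, and a case split on which half-cut is empty. One organizational difference worth noting: for the in-hub case of (iv) the paper does not argue directly from ``both half-cuts at $Z_0$ empty'' as you plan, but instead observes that any in-hub $w\in Z_1\cup Z_{\ell-1}$ has a neighbor $w_0\in Z_0$, appends the edge $(w_0,w)$ to the witness tree, and reduces to the already-proved out-hub case---this sidesteps the boundary bookkeeping you flag as delicate.
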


\begin{figure}[ht]
	\begin{center}
		\includegraphics[width = 5.5in]{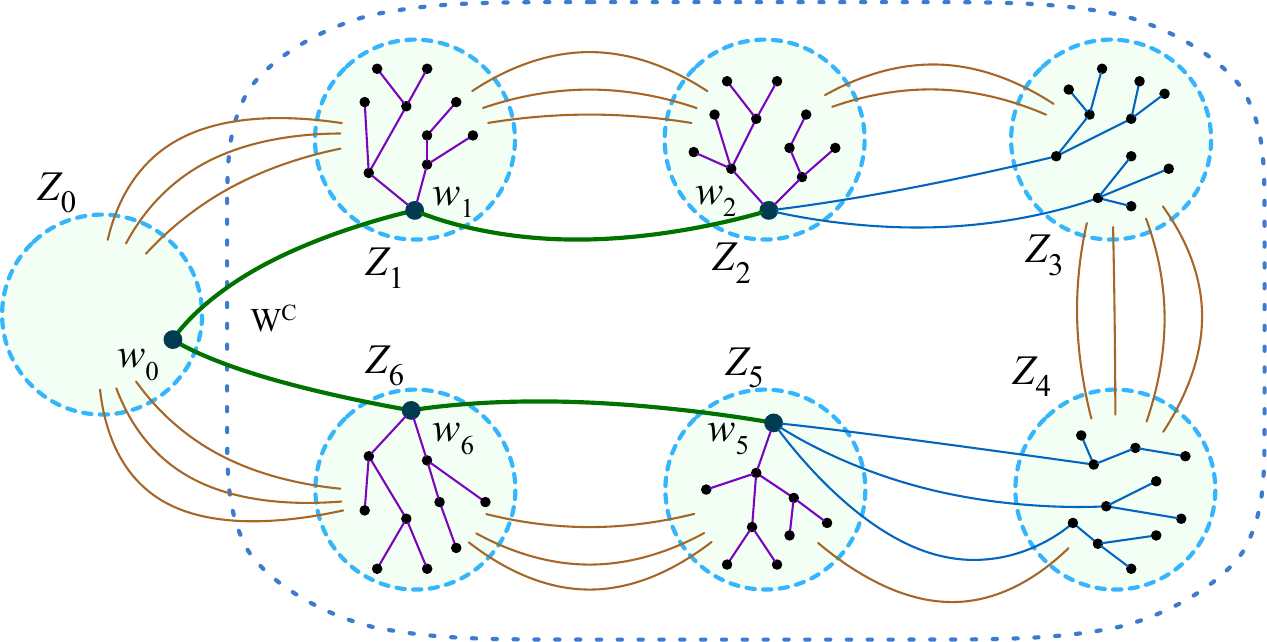}
	\end{center}
	\caption{An illustration of Lemma~\ref{lem:hub-cycle}(iii), for $K=8$, $\ell = 7$ and $g=4$.
	Thick blue edges show the front spine $w_0,w_1,w_2$ and back spine $w_5,w_6,w_7$, with $w_7 = w_0$.
	For $i\in\braced{1,2,5,6}$, the witness trees $T_i$ for $w_i$ and $Z_i$ are depicted with thin purple lines.
	The witness trees for $w_2$ and $Z_3$, and for $w_5$ and $Z_4$ are depicted with thin blue lines.
	Thin brown lines are non-tree edges in the cuts $\cutof{Z_i}$.}
	\label{fig: spines to witness tree}
\end{figure}

\begin{proof}
(i) Let $w_2$, \ldots, $w_\ell$ be a back spine. Let $T_1$ be
  a witness tree for $w_2$ and $Z_1$, and, for each $i=2,\ldots,\ell-1$,
  $T_i$ be a witness tree for $w_i$ and $Z_i$. These witness trees exist by the
  definition of a back spine. 
  Let $T'$ be the tree obtained as a union of trees $T_1,...,T_{\ell-1}$
  and edges $(w_1,w_2),...,(w_{\ell-2},w_{\ell-1})$,
  and $T''$ be the tree obtained by adding edge $(w_{\ell-1},w_\ell)$ to $T'$.
  From Observations~\ref{obs:hub-union}  and~\ref{obs:hub-edge}
  we obtain that $T'$ is a witness trees for $w_{\ell-1}$ and $W^C$,
  and $T''$ is a witness tree for $w_\ell$ and $W^C$. So $w_{\ell-1}, w_\ell \in \hubs(W^C)$.

The proof of (ii) is symmetric to (i).

(iii) 
Let $w_0, \ldots, w_{g-2}$ and $w_{g+1}, \ldots, w_{\ell-1}, w_\ell$ be a front and back spine,
where $g\in\{2,\ldots,\ell-1\}$. Then, from the definition of these spines, the following
witness trees exist:
\begin{itemize}[nosep]
\item a tree $T_i$ for $w_i$ and $Z_i$, for each $i\in\{1,\ldots,g-2\}$, and a tree $T_{g-1}$ for $w_{g-2}$ and $Z_{g-1}$, and
\item a tree $T_i$ for $w_i$ and $Z_i$, for each $i\in \{g+1,\ldots,\ell-1\}$  and a tree $T_{g}$ for $w_{g+1}$ and $Z_{g}$.
\end{itemize}
Let $T$ be a tree obtained as a union of all these trees, along with edges
$(w_0,w_1), ..., (w_{g-3},w_{g-2})$ and $(w_{g+1},w_{g+2}),...,(w_{\ell-1},w_{\ell})$.
(See an example in Figure~\ref{fig: spines to witness tree}.)
Using Observations~\ref{obs:hub-union} and~\ref{obs:hub-edge},
$T$ is a witness tree for $w_0=w_\ell$ and $W^C$, and thus $w_0 \in \hubs(W^C)$.

(iv) Assume that $w\in \hubs(W^C)$, and let $T$ be a witness tree for $w$ and $W^C$.
By the definition of hubs we have $w\in V(\cutof{W^C})$, so the structure of $C$
implies that $w\in Z_0\cup Z_1\cup Z_{\ell-1}$.

We first consider the case when $w\in Z_0=Z_\ell$, that is $w$ is an out-hub for $W^C$. The
argument relies on the two claims below.


\begin{claim}\label{cla: lem:hub-cycle, claim 1}
Let $i\in\{1,\ldots,\ell-1\}$. If $T\cap \cutof{Z_{i-1}}\cap\cutof{Z_{i}}\neq\emptyset$
and $T\cap \cutof{Z_{i}}\cap\cutof{Z_{i+1}}\neq\emptyset$ then 
all edges in $T\cap\cutof{Z_i}$ have a common endpoint, say $w_i$, and $w_i \in Z_i \cap \hubs(Z_i)$. 
\end{claim}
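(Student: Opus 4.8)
The plan is to prove Claim~\ref{cla: lem:hub-cycle, claim 1} by first establishing that the edges of $T \cap \cutof{Z_i}$ share a common endpoint, and then showing that this endpoint lies in $Z_i$ and is an in-hub for $Z_i$. For the common-endpoint part, I would invoke Lemma~\ref{lem:safe-tree-common-endpoint}: since $(T,w)$ is a safe tree and $\cutof{Z_i}$ is a basic $K$-cut that $T$ actually crosses (by the hypothesis that $T \cap \cutof{Z_{i-1}}\cap\cutof{Z_i} \neq \emptyset$, which in particular is a subset of $T \cap \cutof{Z_i}$), all edges of $T \cap \cutof{Z_i}$ have a common endpoint, call it $w_i$. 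To see that $w_i \in Z_i$ (rather than outside $Z_i$), I would pick one edge $e_- \in T \cap \cutof{Z_{i-1}}\cap\cutof{Z_i}$ and one edge $e_+ \in T \cap \cutof{Z_i}\cap\cutof{Z_{i+1}}$; both contain $w_i$. If $w_i \notin Z_i$, then $w_i$ would be an endpoint of $e_-$ lying in $Z_{i-1}$ and simultaneously an endpoint of $e_+$ lying in $Z_{i+1}$ (these being the only other shores touched by these half-cuts), but $Z_{i-1}\cap Z_{i+1}=\emptyset$ since the $Z_j$'s partition $V(\cutof{W^C})$ and $i-1\not\equiv i+1 \pmod \ell$ as $\ell\ge 3$. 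Hence $w_i \in Z_i$.

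Next, to show $w_i \in \hubs(Z_i)$, the idea is to extract a witness tree for $w_i$ and $\cutof{Z_i}$ from $T$, exactly as in the proofs of Lemmas~\ref{lem:hub-type1} and~\ref{lem:hub-type0}. The key structural fact I would use: for any edge $e = (w_i, u)$ of $T$ incident to $w_i$, the component $\treecutout{T}{w_i}{e}$ satisfies $\cutof{\treecutout{T}{w_i}{e}} $ being a $K$-cut (safety of $T$) and $\treecutout{T}{w_i}{e}$ being a proper subset of $W^C \cup \{w\}$ avoiding at least one of the other branches at $w_i$; since the edge $e_-$ (or $e_+$) at $w_i$ crosses into $W^C$ and isn't in this component, $\treecutout{T}{w_i}{e} \subseteq W^C$ is a $K$-cut shore strictly inside $W^C$. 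By Observation~\ref{obs:basic-laminar rooted} (applied with $b = a_C$'s cycle, so that the $Z_j$ are the children of $W^C$), this forces $\treecutout{T}{w_i}{e} \subseteq Z_j$ for some $j$, and since $u \in \treecutout{T}{w_i}{e}$, that $j$ is determined by which $Z_j$ contains $u$. Collecting all edges $e_1,\ldots,e_q$ of $T$ from $w_i$ into $Z_i$ and their pendant subtrees, the argument from Case~1 of Lemma~\ref{lem:hub-type0} shows these subtrees cover $Z_i \setminus \{w_i\}$: any $x \in Z_i\setminus\{w_i\}$ has its first edge from $w_i$ landing in $Z_i$. Then Observations~\ref{obs:hub-subtree}, \ref{obs:hub-edge}, and~\ref{obs:hub-union} assemble $\bigcup_{j=1}^q(\treecutout{T}{w_i}{e_j}\cup\{e_j\})$ into a safe tree rooted at $w_i$ spanning $Z_i \cup \{w_i\}$, i.e.\ a witness tree, so $w_i \in \hubs(Z_i)$.

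The main obstacle, I expect, is handling Observation~\ref{obs:basic-laminar rooted} correctly in the cycle setting: that observation as stated is about the children $Z_1,\ldots,Z_t$ of a node's cut $W^b$, whereas here $W^C$ is a cycle-cut whose children are $Z_1,\ldots,Z_{\ell-1}$ by Observation~\ref{obs:cactus-correspondence}(d). I would need to confirm that the analog of Observation~\ref{obs:basic-laminar rooted} holds verbatim for cycle-cuts — namely, that any $K$-cut $\cutof{X}$ with $X \subsetneq W^C$ satisfies $X \subseteq Z_i$ for some $i \in \{1,\ldots,\ell-1\}$ — which follows from Observation~\ref{obs:basic-laminar}(b) since $\cutof{Z_1},\ldots,\cutof{Z_{\ell-1}}$ are basic cuts whose inner shores partition $W^C$, so any $K$-cut inside $W^C$ must have one of its shores inside some $Z_i$; combined with $X \subsetneq W^C$ this pins down $X \subseteq Z_i$. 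A secondary nuisance is the bookkeeping around $w$ possibly coinciding with $w_i$ or not, and the fact that $Z_0$ is only a partial shore — but since we are in the subcase $w \in Z_0$ here, $w_i \in Z_i$ with $i \ge 1$ is genuinely distinct from $w$, so the pendant-subtree decomposition goes through cleanly without the $w\in Z_i$ exception ever firing for these indices.
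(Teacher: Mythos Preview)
Your argument for the first two assertions—that the edges of $T\cap\cutof{Z_i}$ share a common endpoint (via Lemma~\ref{lem:safe-tree-common-endpoint}) and that this endpoint $w_i$ must lie in $Z_i$ (via $Z_{i-1}\cap Z_{i+1}=\emptyset$)—matches the paper exactly.

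For the assertion $w_i\in\hubs(Z_i)$, however, there is a genuine gap. You rely on a cycle analogue of Observation~\ref{obs:basic-laminar rooted}: that any $K$-cut $\cutof{X}$ with $X\subsetneq W^C$ satisfies $X\subseteq Z_j$ for some $j\in\{1,\ldots,\ell-1\}$. This statement is \emph{false}. For $\ell\ge 4$, take $X=Z_1\cup Z_2$: then $\cutof{X}$ consists of the half-$K$-cut between $Z_0$ and $Z_1$ together with the half-$K$-cut between $Z_2$ and $Z_3$, so $|\cutof{X}|=K$, yet $X\subsetneq W^C$ and $X$ is contained in no single $Z_j$. Your proposed justification via Observation~\ref{obs:basic-laminar}(b) fails because that observation concerns the basic cuts associated with a \emph{single} node $b$, whereas $Z_1,\ldots,Z_{\ell-1}$ are associated with the distinct nodes $a_1,\ldots,a_{\ell-1}$.

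The paper's argument avoids this detour entirely. Once you know that every edge of $T\cap\cutof{Z_i}$ is incident to $w_i$, you are done: since $w\notin Z_i$, the $T$-path from $w$ to any $x\in Z_i\setminus\{w_i\}$ must cross $\cutof{Z_i}$ and hence pass through $w_i$; therefore the branches $\treecutout{T}{w_i}{e_j}$ for edges $e_j=(w_i,u_j)$ with $u_j\in Z_i$ cover $Z_i\setminus\{w_i\}$, and each such branch stays inside $Z_i$ because any exit from $Z_i$ would again be an edge of $T\cap\cutof{Z_i}$, hence incident to $w_i$. No appeal to $K$-cut shores or to Observation~\ref{obs:basic-laminar rooted} is needed. (If you prefer to keep your line of reasoning, it can be repaired: for the specific $X=\treecutout{T}{w_i}{e_j}$ you have $u_j\in X\cap Z_i$ and $w_i\in Z_i\setminus X$, so non-crossing of $\cutof{X}$ with the single basic cut $\cutof{Z_i}$ already forces $X\subseteq Z_i$—but this is just a roundabout version of the direct argument.)
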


That, under the assumptions of the claim, all edges in $T\cap\cutof{Z_i}$ have
a common endpoint, follows from Lemma~\ref{lem:safe-tree-common-endpoint}.
Since $Z_{i-1} \cap Z_{i+1} = \emptyset$, we must have $w_i \in Z_i$.

It remains to show that $w_i \in \hubs(Z_i)$. Let $e_1 = (w_i,u_1),
..., e_k = (w_i,u_k)$ be all edges in $T$ such that $u_j\in Z_i$ for all $j = 1,...,k$.
Since $w\not\in Z_i$, the first part of the claim implies that 
for each $x\in Z_i\setminus\braced{w_i}$ the path from $w$ to $x$ goes through $w_i$.
So the trees $\treecutout{T}{w_i}{e_j}$, for $j = 1,...,k$, along with the singleton $\braced{w_i}$,
form a partition of $Z_i$.
For each $j$, by Observations~\ref{obs:hub-subtree} and Observation~\ref{obs:hub-edge},
both $(\treecutout{T}{v}{e_j},u_j)$ and $(\treecutout{T}{v}{e_j}\cup\braced{e_j},v)$ are safe trees.
Observation~\ref{obs:hub-union} now implies that
$\bigcup_{j=1}^k(\treecutout{T}{v}{e_j}\cup\braced{e_j})$ is a witness
tree for $w_i$ and $Z_i$, thus showing that $w_i \in \hubs(Z_i)$.


\begin{claim}\label{cla: lem:hub-cycle, claim 2}
Let $i\in\{1,\ldots,\ell-1\}$. Suppose that there is $v\in Z_{i-1}\cup Z_{i+1}$
that is a common endpoint of all all edges in $T\cap\cutof{Z_i}$. Then $v \in \hubs(Z_i)$. 
\end{claim}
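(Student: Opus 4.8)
The plan is to reduce this claim to the same witness-tree-assembly argument already used in the proof of Claim~\ref{cla: lem:hub-cycle, claim 1}, the only new twist being that the common endpoint $v$ now lies \emph{outside} $Z_i$ (in $Z_{i-1}$ or $Z_{i+1}$) rather than inside it. Write $e_1=(v,u_1),\ldots,e_k=(v,u_k)$ for all the edges of $T$ incident with $v$ whose other endpoint lies in $Z_i$; by hypothesis every edge of $T\cap\cutof{Z_i}$ is one of these $e_j$. First I would argue that the subtrees $\treecutout{T}{v}{e_j}$, together possibly with $\braced{v}$ if $v\in Z_i$ (which in the present setup it is not, since $v\in Z_{i-1}\cup Z_{i+1}$ and these are disjoint from $Z_i$), partition $Z_i$: indeed, since $w\notin Z_i$ and $v\notin Z_i$, for any $x\in Z_i$ the unique path in $T$ from $v$ to $x$ must leave $v$ along one of the edges of $T\cap\cutof{Z_i}$ — otherwise it would first enter some $Z_{i'}$ with $i'\neq i$ and, because the $Z$'s partition $V(\cutof{W^C})$ and the only half-$K$-cuts touching $Z_i$ are $\cutof{Z_{i-1}}\cap\cutof{Z_i}$ and $\cutof{Z_i}\cap\cutof{Z_{i+1}}$, it could never re-enter $Z_i$ without crossing $\cutof{Z_i}$ along a $T$-edge not incident to $v$, contradicting that $v$ is the common endpoint of \emph{all} of $T\cap\cutof{Z_i}$.

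Granting the partition, each pair $(\treecutout{T}{v}{e_j},u_j)$ is safe by Observation~\ref{obs:hub-subtree} applied to $(T,w)$ (taking $e_j$ as the first edge on the $u_j$-to-$w$ path, which it is, since the path from $u_j$ to $w$ exits $Z_i$ through $v$), and hence $(\treecutout{T}{v}{e_j}\cup\braced{e_j},v)$ is safe by Observation~\ref{obs:hub-edge}, using that $|\cutof{\treecutout{T}{v}{e_j}}|=K$. These $k$ trees are pairwise edge-disjoint and share the common root $v$, so Observation~\ref{obs:hub-union} shows that $T_i := \bigcup_{j=1}^k(\treecutout{T}{v}{e_j}\cup\braced{e_j})$ is a safe tree rooted at $v$. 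Since $T_i$ spans exactly $Z_i\cup\braced{v}$ and $v\notin Z_i$ with $(v,u_1)\in\cutof{Z_i}$, the tree $T_i$ is a witness tree for the out-hub $v$ of $Z_i$, i.e.\ $v\in\hubs(Z_i)$.

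I expect the partition step to be the only real obstacle: one must be careful that every vertex of $Z_i$ is genuinely reachable from $v$ within $T$ through one of the $e_j$, and in particular that no branch of $T$ wanders out of $Z_i$ into a sibling $Z_{i'}$ and comes back. The cleanest way to nail this down is to observe that $\treecutout{T}{v}{e_j}$ is a component of $T\setminus e_j$, so $\cutof{\treecutout{T}{v}{e_j}}$ is a $K$-cut of $G$ with $\treecutout{T}{v}{e_j}\subsetneq W^C$ (it omits $v$ and, since $k\ge 1$ and $w\notin Z_i$, at least one other vertex of $W^C$), whence by Observation~\ref{obs:basic-laminar rooted} it is contained in a single child $Z_{i'}$; as it contains $u_j\in Z_i$, that child is $Z_i$, so $\treecutout{T}{v}{e_j}\subseteq Z_i$ — and running this over all $j$, together with the fact that $T$ is connected and spans $Z_i\cup\braced{v}$, forces the $\treecutout{T}{v}{e_j}$ to cover all of $Z_i$. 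Everything else is a direct reuse of Observations~\ref{obs:hub-subtree}, \ref{obs:hub-edge} and~\ref{obs:hub-union}, exactly as in Claim~\ref{cla: lem:hub-cycle, claim 1}.
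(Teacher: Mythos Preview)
Your proof is correct and follows essentially the same approach as the paper: collect the edges $e_j=(v,u_j)$ from $v$ into $Z_i$, argue that the subtrees $\treecutout{T}{w}{e_j}$ partition $Z_i$ (using that every path in $T$ from $w$ to a vertex of $Z_i$ must pass through $v$, since $w\notin Z_i$ and all edges of $T\cap\cutof{Z_i}$ are incident to $v$), and then assemble the witness tree via Observations~\ref{obs:hub-subtree}, \ref{obs:hub-edge}, and~\ref{obs:hub-union}.

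One caution: your appeal to Observation~\ref{obs:basic-laminar rooted} in the final paragraph is not quite legitimate, since that observation is stated for $W^b$ with $b$ an internal node, not for $W^C$ with $C$ a non-trivial cycle. In fact the direct analogue for $W^C$ is \emph{false}: for $\ell\ge 4$, the set $Z_1\cup Z_2$ is a $K$-cut shore strictly inside $W^C$ (represented by the non-adjacent links $(a_0,a_1)$ and $(a_2,a_3)$ of $C$) that is not contained in any single child $Z_{i'}$. The correct justification for $\treecutout{T}{v}{e_j}\subseteq Z_i$ is the direct one you already sketch in your first paragraph: this subtree is connected, contains $u_j\in Z_i$, does not contain $v$, and therefore contains no edge of $T\cap\cutof{Z_i}$ (all of which touch $v$); hence it cannot leave $Z_i$.
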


The argument is similar to the one above. 
Let $e_1 = (v,u_1), ..., e_k = (v,u_k)$ be all edges in $T$ such that $u_j\in Z_i$ for all $j = 1,...,k$.
The assumption of the claim implies that
for each $x\in Z_i$ the path from $w$ to $x$ goes through $v$.
So the trees $\treecutout{T}{w_i}{e_j}$, for $j = 1,...,k$, form a partition of $Z_i$.
By the same reasoning as the one for  Claim~\ref{cla: lem:hub-cycle, claim 1},
we can conclude that $v \in \hubs(Z_i)$. 


In the next claim, we show that $T$ traverses all half-cuts of $C$, except one.

\begin{claim}\label{cla: lem:hub-cycle, claim 3}
There is exactly one index $g\in \{1,\ldots,\ell\}$ for which $T\cap \cutof{Z_{i-1}} \cap \cutof{Z_{i}}  = \emptyset$.
\end{claim}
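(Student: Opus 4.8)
The plan is to first establish existence of such an index $g$, then uniqueness. For existence, I would argue by contradiction: suppose $T$ traverses \emph{every} half-cut $\cutof{Z_{i-1}}\cap\cutof{Z_i}$ for $i=1,\ldots,\ell$. Then for each $i\in\{1,\ldots,\ell-1\}$, both adjacent half-cuts $\cutof{Z_{i-1}}\cap\cutof{Z_i}$ and $\cutof{Z_i}\cap\cutof{Z_{i+1}}$ meet $T$, so the hypotheses of Claim~\ref{cla: lem:hub-cycle, claim 1} are satisfied, yielding a vertex $w_i\in Z_i$ that is the common endpoint of all edges in $T\cap\cutof{Z_i}$. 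For the index $i=\ell$ (i.e.\ $i=0$), recall that $w\in Z_0=Z_\ell$; since all edges of $T\cap\cutof{Z_\ell}$ cross $\cutof{Z_\ell}$ and $T$ is a witness tree for $w$ spanning $W^C\cup\{w\}$, and since $w$ has all its $T$-neighbors inside $W^C$, I would argue (using that $w\in V(\cutof{W^C})$ and the structure of $C$) that in fact $w$ itself is the common endpoint of $T\cap\cutof{Z_\ell}$, so we may set $w_\ell = w_0 = w$. Now for each $i$, the half-cut $\cutof{Z_{i-1}}\cap\cutof{Z_i}$ is nonempty and each of its $T$-edges has both $w_{i-1}$ (as an endpoint of $T\cap\cutof{Z_{i-1}}$) and $w_i$ (as an endpoint of $T\cap\cutof{Z_i}$) as endpoints; since an edge has only two endpoints, the unique such edge is $(w_{i-1},w_i)$. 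Hence $(w_0,w_1),(w_1,w_2),\ldots,(w_{\ell-1},w_\ell=w_0)$ are all edges of $T$, forming a cycle of length $\ell\ge 3$ in $T$, contradicting that $T$ is a tree. This gives existence of at least one $g$ with $T\cap\cutof{Z_{g-1}}\cap\cutof{Z_g}=\emptyset$.

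For uniqueness, suppose towards contradiction that there are two distinct indices $g, g'$ with the corresponding half-cuts empty. Since $T$ spans all of $W^C\cup\{w\} = Z_0\cup Z_1\cup\cdots\cup Z_{\ell-1}$ and the sets $Z_i$ partition $V(\cutof{W^C})$, the tree $T$ must be connected across the cycle structure. But the two empty half-cuts $\cutof{Z_{g-1}}\cap\cutof{Z_g}$ and $\cutof{Z_{g'-1}}\cap\cutof{Z_{g'}}$ split the cyclic sequence $Z_0,\ldots,Z_{\ell-1}$ into two nonempty arcs, say $\mathcal{A}$ and $\mathcal{B}$, with no edge of the cut-system $\bigcup_i(\cutof{Z_i}\cap\cutof{Z_{i+1}})$ joining $\bigcup_{Z_j\in\mathcal A}Z_j$ to $\bigcup_{Z_j\in\mathcal B}Z_j$; moreover, by the remarks preceding Lemma~\ref{lem: K-cuts, cycle} (specifically $\cutof{Z_i}\cap\cutof{Z_{i'}}=\emptyset$ when $a_i,a_{i'}$ are not consecutive on $C$, and the fact that the $Z_i$ partition $V$), there are \emph{no} edges of $G$ at all between these two unions of shores except those in half-cuts. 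Hence $T$, being a subgraph of $G$, has no edge between $\bigcup_{\mathcal A}Z_j$ and $\bigcup_{\mathcal B}Z_j$. But $w$ lies in exactly one of $\mathcal A, \mathcal B$ — say in the arc containing $Z_0$ — so the vertices of the shores in the other arc are in a different connected component of $T$ from $w$, contradicting that $T$ spans $W^C\cup\{w\}$.

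The main obstacle I anticipate is the $i=\ell$ (equivalently $i=0$) boundary case in the existence argument: the set $Z_0=Z_\ell$ is \emph{not} a full shore of $W^C$ (as the paragraph defining it stresses), only the outside-endpoints of $\cutof{W^C}$, so Claim~\ref{cla: lem:hub-cycle, claim 1}, which is stated for $i\in\{1,\ldots,\ell-1\}$, does not apply directly there. I would handle this by exploiting that $w$ is the \emph{root} of the witness tree $T$ and that all edges incident to $w$ in $T$ go into $W^C$ (since $T$ spans exactly $W^C\cup\{w\}$), together with Lemma~\ref{lem:safe-tree-common-endpoint} applied to the basic $K$-cut $\cutof{W^C}$ itself, to conclude that the edges of $T\cap\cutof{W^C}$ all share the endpoint $w$; combined with $T\cap\cutof{Z_\ell} = T\cap\cutof{W^C}\cap\cutof{Z_{\ell-1}}$ being a subset of $T\cap\cutof{W^C}$, this pins down $w$ as the common endpoint on the $Z_\ell$ side, closing the cycle. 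A secondary subtlety is making precise the ``no edges between the two arcs'' claim in the uniqueness part, which rests on the partition property of the $Z_i$ and the fact that nonconsecutive shores share no cut edges; this is routine given Theorems~\ref{thm: k-cuts in k-edge-connected graphs} and~\ref{thm: k-cuts cactus representation} but should be spelled out.
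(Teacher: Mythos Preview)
Your proposal is correct and follows the same approach as the paper: existence via the cycle-in-a-tree contradiction (using Claim~\ref{cla: lem:hub-cycle, claim 1} to get the common endpoints $w_i$), and uniqueness via connectivity of $T$ together with the fact that the half-$K$-cuts are the only edges between distinct sets $Z_i$.

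One remark: the obstacle you anticipate for the $i=0$ boundary case is real but simpler to resolve than you suggest. You do not need Lemma~\ref{lem:safe-tree-common-endpoint} here. Since $T$ spans exactly $W^C\cup\{w\}$ and $w$ is the \emph{only} vertex of $T$ outside $W^C$, every $T$-edge crossing $\cutof{W^C}$ (and hence every $T$-edge in either of the two half-cuts incident to $Z_0$) must have $w$ as its outside endpoint. This immediately gives $w_0=w_\ell=w$ and closes the cycle $w,w_1,\ldots,w_{\ell-1},w$. The paper's proof leaves this implicit; your first paragraph already contains this observation (``all its $T$-neighbors inside $W^C$''), so the detour through Lemma~\ref{lem:safe-tree-common-endpoint} in your obstacle paragraph is unnecessary.
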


That there is \emph{at least} one such index $g$, follows from Claim~\ref{cla: lem:hub-cycle, claim 1}:
If $T$ had an edge in each half-$K$-cut of $C$ then the edges in all half-$K$-cuts would form a cycle.
Since $T$ is a spanning tree of the subgraph induced by $W^C$, and since the half-$K$-cuts represented by $C$ 
contain all edges connecting different sets $Z_i$, the uniqueness of $g$ follows.

\smallskip

It remains to prove that the edges of $T$ in the half-$K$-cuts represented by $C$ form 
either a back spine satisfying condition~(i), or a front spine satisfying condition~(ii),
or can be divided into two spines that satisfy condition (iii).
With Claims~\ref{cla: lem:hub-cycle, claim 1}, \ref{cla: lem:hub-cycle, claim 2} and \ref{cla: lem:hub-cycle, claim 3},
this is just a matter of verifying that these conditions hold.

By Claim~\ref{cla: lem:hub-cycle, claim 1}, for each $i \in\braced{1,\ldots,\ell}\setminus\braced{g-1,g}$, 
all edges in $T\cap\cutof{Z_i}$ have a common endpoint $w_i\in Z_i$, and $w_i \in \hubs(Z_i)$ for $i\ne \ell$. Furthermore, if $g>1$ then
$w_{g-2}$ is a common endpoint of $T\cap\cutof{Z_{g-1}}\subseteq T\cap\cutof{Z_{g-2}}\cap\cutof{Z_{g-1}}$,
and Claim~\ref{cla: lem:hub-cycle, claim 2} implies that $w_{g-2} \in\hubs(Z_{g-1})$. 
Similarly, if $g<\ell$ then $w_{g+1}\in \hubs(Z_g)$.
Therefore:
\begin{itemize}[nosep]
\item If $g=1$, then $w_2,\ldots,w_{\ell-1},w_\ell$ is a back spine and $w=w_\ell$ satisfies~(i).
\item If $g=\ell$, then $w_0,w_1,\ldots,w_{\ell-2}$ is a front spine and $w=w_0$ satisfies~(ii).
\item If $g\in\{2,\ldots,\ell-1\}$ then $w=w_0,w_1,\ldots,w_{g-2}$ is
a front spine, $w_{g+1},\ldots,w_{\ell-1},w_\ell=w$ is a back
spine, and $w$ satisfies~(iii).
\end{itemize}
This completes the proof of~(iv) for the case when $w\in Z_0$.

The other case is when $w\in Z_1\cup Z_{\ell-1}$, that is $w$ is an in-hub for $W^C$. Recall that $T$ denotes
the witness tree for $w$ and $W^C$. By the definition of hubs and $Z_0$, $w$ has a neighbor $w_0\in Z_0$.
By Observation~\ref{obs:hub-edge}, $w_0\in \hubs(W^C)$ (that is, $w_0$ is an out-hub for $W^C$) and
$(T\cup\braced{(w_0,w)},w_0)$ is its witness tree for $w_0$ and $W^C$. The proof above for the case of
out-hubs now implies that $w_0$ satisfies either condition~(i) or~(ii); in the first case
$w$ satisfies~(i) as $w_{\ell-1}$, and in the second case $w$ satisfies~(ii) as $w_1$.
\end{proof}


\subsection{The Algorithm}
\label{subsec: the algorithm}

As explained at the beginning of this section, to determine whether $\stc{G} = K$ it is sufficient to
compute the hub sets for the basic $K$-cuts of $G$. This is because $\stc{G} = K$ if and only if
$r$ (the root vertex of $G$, that has degree $K$) is a hub for its basic $K$-cut $\cutof{Z}$, where $Z = V\setminus\braced{r}$.

The algorithm follows a dynamic programming paradigm, processing all basic $K$-cuts
bottom-up along their tree structure, as defined earlier in this section.
As presented (see the pseudo-code in Algorithm~\ref{alg:main}), it only solves the decision version,
determining whether $\stc{G} = K$ or not.
If $\stc{G} = K$, a spanning tree of $G$ with congestion $K$ can be reconstructed 
by standard backtracking. 

For each basic $K$-cut $\cutof{Z}$, the algorithm constructs a set $H(Z)$ intended to contain exactly the hubs for $Z$.
These sets are computed using a recurrence relation established in Section~\ref{subsec: rooting G and cactus}.
We start with external nodes. If $Z = W^a$, for an external node $a = \phi(w)$ of $\cactusC_G$,
then, according to Lemma~\ref{lem:hub-leaf}, in $H(Z)$ we include $w$ and its neighbors (line~7).
If $Z = W^a$ for an internal node $a$, then $H(Z)$ is computed from the hub sets of its children,
using either the recurrence from Lemma~\ref{lem:hub-type1}, if $a$ is of Type~1 (lines~10-11),
or the recurrence from Lemma~\ref{lem:hub-type0}, if $a$ is of Type~0 (lines 14-15).
In both cases, this computation can be implemented efficiently using standard data structures.


\begin{algorithm}[t]
\caption{The main algorithm}
\label{alg:main}
\begin{algorithmic}[1]
  \State \textbf{Input:}
  Graph $G=(V,E)$ and its a cactus representation $\cactusC_G,\phi$
\If{there exists a node $b$ of $\cactusC_G$ such that
  $|\phi^{-1}(b)|>1$} 
 \textbf{output} NO 
 \EndIf

\State \textbf{choose} a root $r\in V$ of degree $K$ in $G$
  
\State \textbf{order} the basic $K$-cuts linearly so that each child precedes its parent

\For{each basic $K$-cut $\cutof{Z}$, in this ordering}
\Case{$Z=W^a$ for an external node $a=\phi(w)$ for $w \in V \setminus\braced{r}$}
\State $H(W^a)\gets \{w\} \cup N(w)$
\EndCase

\Case{$Z=W^a$ for a Type-1 internal node $a=\phi(w)$ for some $w\in V$} 
\Let $Z_1,\ldots,Z_t$ be the list of all the children of $W^a$
\If{$w\in H(Z_i)$ for all $i=1,\ldots,t$}
$H(W^a)\gets \braced{w}\cup(N(w)\setminus W^a)$
\Else\ 
$H(W^a)\gets\emptyset$
\EndIf
\EndCase

\Case{$Z=W^a$ for a Type-0 internal node $a$}
\Let $Z_1,\ldots,Z_t$ be the list of all the children of $W^a$
\State
$\widehat{H} \gets V(\cutof{W^a})\cap H(Z_1)\cap\cdots\cap H(Z_t)$
\State
$H(W^a)\gets\widehat{H}\cup(N(\widehat{H}\cap W^a)\setminus W^a)$
\EndCase

\Case{$Z=W^C$ for a non-trivial cycle $C$}
\State
apply Algorithm~\ref{alg:cycle}
\EndCase
\EndFor

\IfThenElse{$r\in H(V\setminus\braced{r})$}{\textbf{output} YES}{\textbf{output} NO}
\end{algorithmic}
\end{algorithm}


The last case, relevant only when $K$ is even, is when $Z= W^C$, for a non-trivial cycle $C$ of $\cactusC_G$
(the pseudo-code for this case is given separately in Algorithm~\ref{alg:cycle}).
In this case the algorithm applies the recurrence implicit in Lemma~\ref{lem:hub-cycle}.
In order to do this, for each vertex $w\in Z_1 \cup Z_0 \cup Z_{\ell-1}$
we need to identify back and/or front spines that, based on 
the cases (i), (ii) or (iii) from this lemma, would imply that $w$ should be added to $H(Z)$.

The challenge is to implement this process efficiently. One key observation here is that
(by definition and Observation~\ref{obs:spine-hubs}), every non-empty
suffix of a back spine is also a back spine, and the analogous property applies to front spines.
This means that it's sufficient to only compute the maximum spine lengths for each candidate vertex $w$, not the actual spines.
We achieve this using an embedded dynamic programming  procedure that processes $C$ in the two directions,
and computes $H(Z)$ in time $\tildeO(k|C|)$ (the number of edges represented by $C$).

To give more detail, let's consider the case of back spines (the computation for front spines is symmetric).
In this case it is sufficient to calculate, for each candidate vertex $w_i\in Z_i$ of a back spine, the
minimum value $s$ for which there exists a path $w_s, \ldots, w_i$
that is a potential prefix of a back spine, i.e., each $w_{i'}$ on this path is an
in-hub for $Z_{i'}$ and $w_s$ is also an out-hub for $Z_{s-1}$. This
value $s$ computed by the algorithm is denoted $S^-(w_i)$. 
It is sufficient to compute $S^-(w_i)$ for the vertices $w_i$ in the candidate
set $U^-_i\subseteq Z_i$ which, in accordance with the definition of a back spine
and Observation~\ref{obs:spine-hubs}, contains only in-hubs for $Z_i$
that are also out-hubs for $Z_{i-1}$ (line~5). Two border cases are treated differently:
for $i=\ell$, $U^-_\ell$ is restricted, in a natural way, to out-hubs for $Z_{\ell-1}$ in $Z_\ell$ and,
for $i=0$, we let $U^-_1=\emptyset$ for technical convenience (line~3).
The values $S^-(w_i)$ are
then computed by a dynamic program starting from $S^-(w_2)=2$ for all
candidates $w_2\in Z_2$, and then, for increasing $i$, setting
$S^-(w_i)$ to be the maximum of $S^-(w_{i-1})$ over the neighbors of
$w_i$, or to $i$ if there are no candidate neighbors (line~7).

Once the maximum spine lengths, back and front, are computed, the calculation
follows the rules from Lemma~\ref{lem:hub-cycle} in a straightforward way (lines 13-16).


\begin{algorithm}[ht]
\caption{The subroutine for cycles}
\label{alg:cycle}
\begin{algorithmic}[1]
  \State \textbf{Input:} Cycle $C$ in $\cactusC_G$ of length $\ell\geq
  3$
  \Let $Z_1,\ldots,Z_{\ell-1}$ be the children of $W^C$ ordered along the
  cycle $C$

  \LeftComment{computing the back spines}

  \State $U^-_1\gets\emptyset$; $U^-_\ell \gets H(Z_{\ell-1})\setminus W^C$
  \For{$i=2,3,\ldots,\ell$}
  \If{$i<\ell$}  $U^-_i \gets Z_i\cap H(Z_i)\cap H(Z_{i-1})$
  \EndIf
  
  \For{all $w\in U^-_i$}
  \State $S^-(w)\gets\max(\braced{i}\cup \braced{S^-(v)\mid v\in
      N(w)\cap U^-_{i-1}})$
  \Comment{for $i=2$, $S^-(w)=2$}
  \EndFor
  \EndFor
    
  \LeftComment{computing the front spines}

  \State $U^+_{\ell-1}\gets\emptyset$; $U^+_0 \gets H(Z_1)\setminus W^C$
  \For{$i=\ell-2,\ell-3,\ldots,1,0$}
  \If{$i\geq 1$}  $U^+_i \gets Z_i\cap H(Z_i)\cap H(Z_{i+1})$
  \EndIf
  
  \For{all $w\in U^+_i$}
  \State $S^+(w)\gets\max(\braced{i}\cup\braced{S^+(v)\mid v\in N(w)\cap U^+_{i+1}})$
  \EndFor
  \EndFor

  \LeftComment{computing the out-hubs in both front and back spines}
  \State  $H^0 \gets \braced{w\in U^+_0\cap U^-_\ell\mid S^+(w)+3\geq S^-(w)}$

  \LeftComment{computing the in-hubs}
  
  \State $H^-\gets\braced{w\in U^-_{\ell-1}\cap V(\cutof{W^C})\mid S^-(w)=2}$
  \State $H^+\gets\braced{w\in U^+_1\cap V(\cutof{W^C})\mid S^+(w)=\ell-2}$

  \LeftComment{computing the resulting set of hubs}
  \State
  $H(W^C)\gets H^0 
  \cup H^-\cup (N(H^-)\setminus W^C)
  \cup H^+\cup (N(H^+)\setminus W^C)$

\end{algorithmic}
\end{algorithm}


\paragraph{Correctness proof.}
To prove the correctness of the algorithm, we need to show that it correctly decides whether $\stc{G} = K$.
This follows directly from the claim below.

\begin{claim}
\label{cla:alg-correct}
Algorithm~\ref{alg:main} (with the subroutine in
Algorithm~\ref{alg:cycle}) computes the correct hub sets, that is, for
each basic $K$-cut $\cutof{Z}$ we have $H(Z)=\hubs(Z)$. 
\end{claim}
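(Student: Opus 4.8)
The plan is to prove Claim~\ref{cla:alg-correct} by induction on the tree structure of basic $K$-cuts, processing them bottom-up, exactly as the algorithm does. For the base case, $\cutof{Z}$ is a leaf of the tree, i.e., $Z = W^a$ for an external node $a$; here line~7 sets $H(W^a) = \{w\}\cup N(w)$, which equals $\hubs(W^a)$ by Lemma~\ref{lem:hub-leaf}. For the inductive step, we assume $H(Z_i) = \hubs(Z_i)$ for all children $\cutof{Z_i}$ of $\cutof{Z}$, and we verify that the algorithm's update rule for $\cutof{Z}$ produces exactly $\hubs(Z)$. There are three cases, matching the structure of the pseudocode and the lemmas of Section~\ref{subsec: rooting G and cactus}:

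First, if $Z = W^a$ for a Type-1 internal node $a = \phi(w)$, lines 10--11 compute $H(W^a)$ to be $\{w\}\cup(N(w)\setminus W^a)$ when $w\in H(Z_i)$ for all $i$, and $\emptyset$ otherwise. By the inductive hypothesis $H(Z_i) = \hubs(Z_i)$, so the guard ``$w\in H(Z_i)$ for all $i$'' is exactly ``$w\in\bigcap_{i=1}^t\hubs(Z_i)$''. Since Lemma~\ref{lem:hub-type1} also guarantees $w\in V(\cutof{W^a})\cap\bigcap_i V(\cutof{Z_i})$, the formula the algorithm computes is literally Equation~(\ref{eqn: hub formula type 1}), so $H(W^a) = \hubs(W^a)$. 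Second, if $a$ is a Type-0 internal node, lines 14--15 set $\widehat{H} = V(\cutof{W^a})\cap\bigcap_i H(Z_i)$ and $H(W^a) = \widehat{H}\cup(N(\widehat{H}\cap W^a)\setminus W^a)$; under the inductive hypothesis this is exactly the right-hand side of Equation~(\ref{eqn: hub formula type 0}), so $H(W^a) = \hubs(W^a)$ by Lemma~\ref{lem:hub-type0}. These two cases are routine once the lemmas are in hand.

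The substantive case is $Z = W^C$ for a non-trivial cycle $C$ (so $K$ is even), handled by Algorithm~\ref{alg:cycle}; here the work is to show that the embedded dynamic program correctly realizes the recurrence implicit in Lemma~\ref{lem:hub-cycle}. The key sub-claim is that for every $w\in U^-_i$, the value $S^-(w)$ computed in line~7 equals the minimum index $s$ for which there is a back-spine prefix $w_s,\ldots,w_i$ ending at $w_i = w$; symmetrically for $S^+(w)$ and front spines. This follows by an inner induction on $i$ (increasing for $S^-$, decreasing for $S^+$): the base cases $S^-(w) = 2$ for $w\in U^-_2$ and $S^+(w) = \ell-2$ for $w\in U^+_{\ell-2}$ are immediate from the max with $\{i\}$, and the recurrence $S^-(w) = \max(\{i\}\cup\{S^-(v) : v\in N(w)\cap U^-_{i-1}\})$ correctly extends a minimal prefix by one link — here one must check, using the definition of the candidate sets $U^-_i = Z_i\cap H(Z_i)\cap H(Z_{i-1})$ (which by the inductive hypothesis equal $Z_i\cap\hubs(Z_i)\cap\hubs(Z_{i-1})$) and Observation~\ref{obs:spine-hubs}, that membership in $U^-_i$ for each vertex on the path is exactly the condition in the definition of a back spine, and that taking the maximum over neighbors corresponds to choosing the earliest possible starting index. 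Once the $S^\pm$ values are shown correct, one translates each of the three cases of Lemma~\ref{lem:hub-cycle}: case~(i) contributes $w_{\ell-1}, w_\ell$ whenever a back spine reaching index $\ell$ with start index $2$ exists, which is detected by $H^- = \{w\in U^-_{\ell-1}\cap V(\cutof{W^C}) : S^-(w) = 2\}$ together with $N(H^-)\setminus W^C$; case~(ii) symmetrically gives $H^+$ and $N(H^+)\setminus W^C$; and case~(iii), which requires a front spine $w_0,\ldots,w_{g-2}$ and a back spine $w_{g+1},\ldots,w_\ell = w_0$ meeting at $w_0$ for some $g\in\{2,\ldots,\ell-1\}$, is captured by $H^0 = \{w\in U^+_0\cap U^-_\ell : S^+(w)+3\geq S^-(w)\}$ — the arithmetic $S^+(w)+3\geq S^-(w)$ encoding exactly that the front spine can be made long enough (reaching some $g-2$) and the back spine short enough (starting at some $g+1$) with a common $g$, given that $w_{g-1}, w_g$ are the two missing indices. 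Combining $H^0\cup H^-\cup(N(H^-)\setminus W^C)\cup H^+\cup(N(H^+)\setminus W^C)$ then yields precisely the set described by Lemma~\ref{lem:hub-cycle}(i)--(iv), so $H(W^C) = \hubs(W^C)$.

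The main obstacle I expect is bookkeeping in the cycle case: getting the index arithmetic right at the two boundary positions ($i = 0, 1$ and $i = \ell-1, \ell$), reconciling the asymmetric role of $Z_0 = Z_\ell = V(\cutof{W^C})\setminus W^C$ (which is not a full shore) with the uniform treatment of the internal $Z_i$, and verifying the off-by-constant in the ``$S^+(w)+3\geq S^-(w)$'' test against the fact that Lemma~\ref{lem:hub-cycle}(iii) leaves exactly two half-cuts (at indices $g-1$ and $g$) untraversed by $T$. Everything else — the two internal-node cases, and the reduction of the leaf case to Lemma~\ref{lem:hub-leaf} — is a direct substitution of the inductive hypothesis into the already-proved recurrences.
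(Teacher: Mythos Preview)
Your proposal is correct and follows essentially the same approach as the paper's own proof: induction over the basic $K$-cuts in the algorithm's processing order, dispatching the external, Type-1, and Type-0 node cases directly to Lemmas~\ref{lem:hub-leaf}, \ref{lem:hub-type1}, and~\ref{lem:hub-type0}, and handling the non-trivial cycle case by interpreting $S^{\pm}$ as extremal spine indices and matching the resulting sets $H^0$, $H^-$, $H^+$ to the three cases of Lemma~\ref{lem:hub-cycle}. Your write-up is somewhat more explicit about the inner induction establishing the meaning of $S^{\pm}$, but the structure and the key verifications (in particular the $S^+(w)+3\geq S^-(w)$ arithmetic for case~(iii)) are the same.
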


To prove Claim~\ref{cla:alg-correct}, we prove that $H(Z)=\hubs(Z)$ for all basic $K$-cuts $\cutof{Z}$
inductively, in the order in which the sets $H(Z)$ are calculated by the algorithm.

For basic $K$-cuts $Z = \cutof{W^b}$, where $b\neq\phi(r)$, the calculation of
$H(W^b)$ in Algorithm~\ref{alg:main} exactly follows the statements
for $\hubs(W^b)$ in Lemmas~\ref{lem:hub-leaf}, \ref{lem:hub-type1},
and~\ref{lem:hub-type0} for external nodes, Type-$1$ internal nodes, and Type-$0$
internal nodes, respectively. So the inductive claim follows.

For a nontrivial cycle $C$, as already explained earlier,
in the first part of Algorithm~\ref{alg:cycle} we calculate, for each $w = w_\ell\in H(Z_{\ell-1})\setminus W^C$,
the value $S^-(w)$ equal to the minimal $s$ such that a back
spine $w_s, \ldots, w_\ell$ exists.  Similarly, for each $w = w_0\in
H(Z_1)\setminus W^C$, the value $S^+(w)$ is the maximal $s'$ for which
a front spine $w_0, \ldots, w_{s'}$ exists.

It remains to explain the meaning and the computation of sets $H^0$, $H^-$ and $H^+$.
These are simply the hub sets corresponding to the three different cases in
Lemma~\ref{lem:hub-cycle}.

Consider the computation of $H^0$.  If for some $w=w_0=w_\ell$ we
have $s'+3\geq s$ then we select a back spine $w_s$, \ldots, $w_\ell$
and a front spine $w_0$, \ldots, $w_{s'}$ that are guaranteed to exist
by the previous paragraph.  We let $g=s-1$ and observe that $s'\geq
g-2$ and $g\in\{2,\ldots,\ell-1\}$. Now $w_0$, \ldots, $w_{g-2}$ is a
front spine, as it is a prefix of the front spine $w_0$, \ldots,
$w_{s'}$ above and $w_{g+1}$, \ldots, $w_\ell$ is a back
spine equal to the back spine above. Thus $w\in\hubs(W^C)$ by
Lemma~\ref{lem:hub-cycle}(iii). Algorithm~\ref{alg:cycle} calculates
$H^0$ as the set of precisely all these hubs $w$.

Consider the computation of $H^-$. (The case of $H^+$ is symmetric.)
Each $w=w_{\ell-1}\in H^-$ is in $V(\cutof{W^C})$, thus it has a
neighbor $w_\ell\in Z_\ell$. We also have $S^-(w_{\ell-1})=2$, which
now guarantees the existence of a back spine $w_2, \ldots, w_{\ell-1}$, $w_\ell$. 
It follows that $w_{\ell-1}\in\hubs(W^C)$ by
Lemma~\ref{lem:hub-cycle}(i). Thus Algorithm~\ref{alg:cycle}
calculates $H^-$ as the set of all in-hubs from
Lemma~\ref{lem:hub-cycle}(i), and $N(H^-)\setminus W^C$ is the set of
out-hubs from Lemma~\ref{lem:hub-cycle}(i).

This completes the proof of the inductive claim for cycles, namely that
$H(W^C)=\hubs(W^C)$ for each non-trivial cycle $C$. 
The proof of Claim~\ref{cla:alg-correct} is now complete.


\myparagraph{Running time.}
We now analyze the running time. Recall that $n =|V|$ is the number of vertices
of $G$ and $m = |E|$ is the number of edges. As the input graph is $K$-edge-connected,
each vertex has degree at least $K$, so $m = \Omega(Kn)$.
One key property behind our estimate of the running time is that
for each basic cut $\cutof{Z}$, its corresponding hub set $H(Z)$
satisfies $|H(Z)|\le 2K$. This follows directly from the fact that
$H(Z)\subseteq V(\cutof{Z})$, and $|\cutof{Z}| = K$. Furthermore, the set differences
that occur in Algorithms~\ref{alg:main}~and~\ref{alg:cycle} are also subsets of $V(\cutof{Z})$, so they can be computed in $\tildeO(K)$ time.

We first analyze Algorithm~\ref{alg:cycle}. As explained above, all the sets $U^-_i$,
$U^+_i$, $H^0$, $H^-$, $H^+$ have size $O(K)$. Thus the computation of
spines and of the sets $H^0$, $H^-$, $H^+$ involves $O(K\ell)$ operations
on integers in $\braced{1,...,n}$ and vertex identifiers. The subsequent computation
consists of a constant number of operations on sets of vertices of size
$K$. Since the total length of the cycles in $\cactusC_G$ is $O(n)$,
the overall time for all invocations of Algorithm~\ref{alg:cycle} is
$\tildeO(Kn)=\tildeO(m)$.

Turning to Algorithm~\ref{alg:main}, the initialization part runs in time
$\tildeO(m)$, including the construction of the cactus representation
$\cactusC_G$, $\phi$ (see~\cite{karger_panigrahi_near-time-time_cactus_2009}).
Within the dynamic programming process, for each node $a$ of $\cactusC_G$ of degree $d$,
processing a basic $K$-cut $\cutof{Z} = \cutof{W^a}$ 
involves $O(d)$ set operations, each on sets of vertices of size
$O(K)$. Since the number of links in $\cactusC_G$ is $O(n)$, the total
time for processing all these $K$-cuts is  $\tildeO(Kn)=\tildeO(m)$.

Combining the bounds for Algorithms~1 and~2, we conclude that
the total running time of our algorithm is $\tildeO(m)$. 
Together with Claim~\ref{cla:alg-correct} this completes the proof of
Theorem~\ref{thm: algorithm for k-edge-connected}.



\bibliographystyle{abbrv}
\bibliography{stc_references}

@article{lin_2025_stc_interval_graphs,
title = {The spanning tree congestion problem on interval graphs},
journal = {Discrete Applied Mathematics},
volume = {377},
pages = {147-153},
year = {2025},
issn = {0166-218X},
doi = {https://doi.org/10.1016/j.dam.2025.06.054},
author = {Lan Lin and Yixun Lin},
}

@techreport{Fleiner_Frank_cactus_mincuts_2009,
	title = {A quick proof for the cactus representation of mincuts},
	author = {T. Fleiner and A. Frank}, 
	year = {2009},
	number = {QP-2009-03},
	institution = {Egerv\'ary Research Group, Budapest}
}

@inproceedings{karger_panigrahi_near-time-time_cactus_2009,
  title={A near-linear time algorithm for constructing a cactus representation of minimum cuts},
  author={Karger, David R and Panigrahi, Debmalya},
  booktitle={Proc. Twentieth Annual ACM-SIAM Symposium on Discrete Algorithms {(SODA)}},
  pages={246--255},
  year={2009},
  organization={SIAM}
}

@inproceedings{sandeep_1986_optimal_tree_machines,
 author = {Bhatt, Sandeep and Chung, Fan and Leighton, Tom and Rosenberg, Arnold},
title = {Optimal Simulations of Tree Machines},
year = {1986},
isbn = {0818607408},
address = {USA},
booktitle = {Proceedings of the 27th Annual Symposium on Foundations of Computer Science {(FOCS)}},
pages = {274-282},
numpages = {9},
}

@article{simonson_1987_variation_min_cut_arrangement,
author = {Simonson, Shai},
year = {1987},
month = {12},
pages = {235-252},
title = {A Variation on the Min Cut Linear Arrangement Problem.},
volume = {20},
journal = {Mathematical Systems Theory},
}

@inproceedings{rosenberg_1988_graph_embeddings,
author={Rosenberg, Arnold L.},
editor={Reif, John H.},
title={Graph embeddings 1988: Recent breakthroughs, new directions},
booktitle={VLSI Algorithms and Architectures},
year={1988},
publisher={Springer New York},
address={New York, NY},
pages={160--169},
isbn={978-0-387-34770-7}
}

@article{khuller_1993_designing_multi_commodity_flow_tree,
title = {Designing multi-commodity flow trees},
journal = {Information Processing Letters},
volume = {50},
number = {1},
pages = {49-55},
year = {1994},
issn = {0020-0190},
author = {Samir Khuller and Balaji Raghavachari and Neal Young},
}

@article{seymour_1994_call_routing,
  title={Call routing and the ratcatcher},
  author={Seymour, Paul D. and Thomas, Robin},
  journal={Combinatorica},
  volume={14},
  number={2},
  pages={217--241},
  year={1994},
  publisher={Springer}
}

@article{cai_1995_tree_spanner,
author = {Cai, Leizhen and Corneil, Derek  G.},
title = {Tree spanners},
journal = {SIAM Journal on Discrete Mathematics},
volume = {8},
number = {3},
pages = {359-387},
year = {1995},
}

@article{fakete_2001_tree_spanner,
title = {Tree spanners in planar graphs},
journal = {Discrete Applied Mathematics},
volume = {108},
number = {1},
pages = {85-103},
year = {2001},
issn = {0166-218X},
author = {S\'andor P. Fekete and Jana Kremer},
}

@article{ostrovoskii_2004_minimal_congestion_tree,
title = {Minimal congestion trees},
journal = {Discrete Mathematics},
volume = {285},
number = {1},
pages = {219-226},
year = {2004},
author = {M.I Ostrovskii},
}

@article{hruska_2008_tree_congestion,
  title={On tree congestion of graphs},
  author={Hruska, Stephen W},
  journal={Discrete Mathematics},
  volume={308},
  number={10},
  pages={1801--1809},
  year={2008},
  publisher={Elsevier}
}

@article{kozawa_2009_stc_graphs,
title = {On spanning tree congestion of graphs},
journal = {Discrete Mathematics},
volume = {309},
number = {13},
pages = {4215-4224},
year = {2009},
author = {Kyohei Kozawa and Yota Otachi and Koichi Yamazaki},
}

@article{law_2009_congestion_duality,
  title={Spanning tree congestion: duality and isoperimetry; with an application to multipartite graphs},
  author={Law, Hiu-Fai and Ostrovskii, MI},
  journal={Graph Theory Notes NY},
  volume={58},
  pages={18--26},
  year={2010}
}

@inproceedings{otachi_2010_complexity_result_stc,
author = {Otachi, Yota and Bodlaender, Hans L. and Van Leeuwen, Erik Jan},
title = {Complexity results for the spanning tree congestion problem},
year = {2010},
booktitle = {Proc. 36th Int. Conference on Graph-Theoretic Concepts in Computer Science ({WG})},
pages = {3-14},
}

@PhdThesis{lowenstein_2010_in_the_complement_dominating_set,
  author = 	{L{\"o}wenstein, Christian},
  title = 	{In the Complement of a Dominating Set},
  year = 	{2010},
  school =  {Technische {Universit\"{a}t} at Ilmenau},
}

@article{kozawa_2011_stc_rook_graphs,
  title={Spanning tree congestion of rook's graphs},
  author={Kozawa, Kyohei and Otachi, Yota},
  journal={Discussiones Mathematicae Graph Theory},
  volume={31},
  number={4},
  pages={753--761},
  year={2011},
  publisher={De Gruyter Open}
}

@inproceedings{okamoto_2011_hardness_results_exp_algorithm_stc,
author = {Okamoto, Yoshio and Otachi, Yota and Uehara, Ryuhei and Uno, Takeaki},
title = {Hardness Results and an Exact Exponential Algorithm for the Spanning Tree Congestion Problem},
year = {2011},
booktitle = {Proc. 8th Annual Conference on Theory and Applications of Models of Computation ({TAMC})},
pages = {452-462},
numpages = {11}
}

@article{dragan_2011_spanner_in_sparse_graph,
title = {Spanners in sparse graphs},
journal = {Journal of Computer and System Sciences},
volume = {77},
number = {6},
pages = {1108-1119},
year = {2011},
issn = {0022-0000},
author = {Feodor F. Dragan and Fedor V. Fomin and Petr A. Golovach},
}

@article{bodlaender_2011_stc_k-outerplanargraphs,
title = {Spanning tree congestion of k-outerplanar graphs},
author = {Hans L. Bodlaender and Kyohei Kozawa and Takayoshi Matsushima and Yota Otachi},
journal = {Discrete Mathematics},
volume = {311},
number = {12},
pages = {1040-1045},
year = {2011},
}

@article{bodlaender_2012_parameterized_complexity_stc,
author = {Bodlaender, Hans and Fomin, Fedor and Golovach, Petr and Otachi, Yota and Leeuwen, Erik},
year = {2012},
month = {09},
pages = {1-27},
title = {Parameterized complexity of the spanning tree congestion problem},
volume = {64},
journal = {Algorithmica},
}

@article{kubo_2015_spanning_tree_small_parameter,
author = {Kubo K. and Yamauchi Y. and Kijima S. and Yamashita M.},
title = {Spanning tree congestion problem on graphs of small diameter},
journal = {RIMS Kokyuroku},
volume = {1941},
pages = {17-21},
year = {2015},
url = { http://www.kurims.kyoto-u.ac.jp/∼kyodo/kokyuroku/contents/pdf/1941-03.pdf. }
}

@Incollection{otachi_2020_survey_spanning_tree_congestion,
	author={Otachi, Yota},
	title={A Survey on Spanning Tree Congestion},
	bookTitle={Treewidth, Kernels, and Algorithms: Essays Dedicated to Hans L. Bodlaender on the Occasion of His 60th Birthday},
	year={2020},
	publisher={Springer International Publishing},
	address={Cham},
	pages={165--172},
	isbn={978-3-030-42071-0},
}

@article{luu_chrobak_2025_better_hardness_algo,
  author       = {Huong Luu and
                  Marek Chrobak},
  title        = {Better Hardness Results for the Minimum Spanning Tree Congestion Problem},
  journal      = {Algorithmica},
  volume       = {87},
  number       = {1},
  pages        = {148--165},
  year         = {2025},
  url          = {https://doi.org/10.1007/s00453-024-01278-5},
  doi          = {10.1007/S00453-024-01278-5},
  timestamp    = {Sat, 01 Feb 2025 00:37:44 +0100},
  biburl       = {https://dblp.org/rec/journals/algorithmica/LuuC25.bib},
  bibsource    = {dblp computer science bibliography, https://dblp.org}
}

@incollection{dinitz_etal_strukture_systemy_1976,
	author =  {E. A. Dinic and A. V. Karzanov and M. V. Lomonosov}, 
	title ={The structure of a system of minimal edge cuts of a graph},
	booktitle = {Issledovani po diskretnoi optimizacii (Studies in discrete optimization)},
	 editors = {A.A. Fridman},
	 publisher = {Nauka, Moscow},
	 year = {1976},
	 pages = {290--306}, 
	note = {In Russian}
}

@article{law_etal_congestion_of_planar_graphs_2013,
  title={Spanning tree congestion of planar graphs},
  author={Law, Hiu and Leung, Siu and Ostrovskii, Mikhail},
  journal={Involve, a Journal of Mathematics},
  volume={7},
  number={2},
  pages={205--226},
  year={2013},
  publisher={Mathematical Sciences Publishers}
}

@inproceedings{kolman_iwoca_2024,
  author = {Kolman, Petr},
  title = {Approximating Spanning Tree Congestion on Graphs with Polylog Degree},
  booktitle = {Proc. Int. Workshop on Combinatorial Algorithms ({IWOCA})},
  year = {2024},
  pages = {497--508},
  biburl = {https://dblp.org/rec/conf/iwoca/Kolman24.bib},
  doi = {10.1007/978-3-031-63021-7_38},
}

@inproceedings{kolman_approximating_congestion_2025,
  author       = {Petr Kolman},
  title        = {Approximation of Spanning Tree Congestion Using Hereditary Bisection},
  booktitle    = {Proc. 42nd Int. Symposium on Theoretical Aspects of Computer Science,
                  {STACS}},
  pages        = {63:1--63:6},
  year         = {2025},
  url          = {https://doi.org/10.4230/LIPIcs.STACS.2025.63},
  doi          = {10.4230/LIPICS.STACS.2025.63}
}

@inproceedings{lampis_etal_parameterized_spanning_tree_congestion_2025,
  author       = {Michael Lampis and
                  Valia Mitsou and
                  Edouard Nemery and
                  Yota Otachi and
                  Manolis Vasilakis and
                  Daniel Vaz},
  title        = {Parameterized Spanning Tree Congestion},
  booktitle    = {50th Int. Symposium on Mathematical Foundations of Computer
                  Science ({MFCS}) },
  pages        = {65:1--65:20},
  year         = {2025},
  url          = {https://doi.org/10.4230/LIPIcs.MFCS.2025.65},
  doi          = {10.4230/LIPICS.MFCS.2025.65}
}

@InProceedings{chandran_et_al_spanning_tree_congestion_2018,
  author =	{Chandran, L. Sunil and Cheung, Yun Kuen and Issac, Davis},
  title =	{{Spanning tree congestion and computation of generalized {Gy\"{o}ri-Lov\'{a}sz} partition}},
  booktitle =	{45th International Colloquium on Automata, Languages, and Programming (ICALP)},
  pages =	{32:1--32:14},
  ISBN =	{978-3-95977-076-7},
  ISSN =	{1868-8969},
  year =	{2018},
  URL =		{https://drops.dagstuhl.de/entities/document/10.4230/LIPIcs.ICALP.2018.32},
  URN =		{urn:nbn:de:0030-drops-90361},
  doi =		{10.4230/LIPIcs.ICALP.2018.32}
}


\newpage
\appendix


\section{$\NP$-Hardness for Degree~$4$}
\label{sec: np-hardness degree 4}

In this section, as a warm-up for our $\NP$-completeness proof of problem $\problemSTC$ for graphs of degree $3$, we
show a simpler proof for graphs of degree at most $4$. That is, we prove the following theorem.

\begin{theorem}\label{thm: np-completeness for degree 4}
Problem $\problemSTC$ is $\NP$-complete for graphs of degree at most $4$.
\end{theorem}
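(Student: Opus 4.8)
Membership in $\NP$ is immediate --- a spanning tree is a polynomial-size certificate whose congestion is computable in polynomial time --- so it suffices to establish $\NP$-hardness. The plan is to reduce from a suitably restricted, still $\NP$-hard variant of $3$-SAT (for concreteness, \problemMPNSAT, the variant that also underlies the degree-$3$ construction of Section~\ref{sec: np-hardness for degree-3 graphs}): given a formula $\Phi$, we build in polynomial time a graph $G_\Phi$ of maximum degree at most $4$ together with an integer $K$ such that $\stc{G_\Phi}\le K$ if and only if $\Phi$ is satisfiable. As a matter of convenience the intermediate construction will use integer edge weights (equivalently, bundles of parallel edges); at the end each weighted edge is replaced by a small constant-size widget $\dwgadget$ of maximum degree $4$, and one checks that this substitution changes neither the degree bound nor the relevant congestion threshold.

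The graph $G_\Phi$ has three types of components, wired so that each vertex meets only $O(1)$ of them. A \emph{variable gadget} for each variable $x$ is a small cyclic structure $\flower$ with two distinguished output ports, designed so that \emph{every} spanning tree of congestion at most $K$ must route through it in exactly one of two mirror-image configurations; we read this binary choice as the truth value of $x$. From each port a low-degree \emph{wire} (a path, or a path built from $\twoNflower$-gadgets for negated occurrences of $x$) runs to each clause containing $x$, and a congestion bound on the wire edges forces the state carried along a wire to match the configuration chosen inside the gadget. A \emph{clause gadget} for a clause $(a\vee b\vee\bar c)$ is a junction $\bottleneck$ where the three incoming wires meet; its incident cut is sized so that a spanning tree can keep congestion at most $K$ through it precisely when at least one literal arrives in the ``true'' state. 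Finally a single \emph{root} component $\rootflower$ anchors the whole construction and pins down the orientation of all induced cuts. The degree bound of $4$ is maintained by replacing every would-be high-degree hub with a short cycle or path of vertices of degree at most $4$; the edge weights are exactly what allow such a bounded-degree vertex still to impose a cut constraint of the required magnitude.

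Correctness splits into the two usual directions. For soundness, from a satisfying assignment of $\Phi$ we construct an explicit spanning tree $T$ of $G_\Phi$: route each $\flower$ according to the truth value of its variable, extend along every wire consistently, and at each $\bottleneck$ attach the three wires using the guaranteed ``true'' literal to avoid overloading the bottleneck cut; inspecting the induced cuts gadget by gadget shows $\maxcng{G_\Phi,T}=K$. For completeness, we take any spanning tree $T$ with $\maxcng{G_\Phi,T}\le K$ and prove a \emph{rigidity} statement: inside each $\flower$, $T$ must realize one of the two canonical configurations (any other local pattern creates an induced cut of size $>K$ within the gadget); each wire must transmit that configuration faithfully (otherwise some wire edge has congestion $>K$); and at each $\bottleneck$ the bottleneck cut has size $\le K$ only if some incoming literal is ``true''. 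Reading off the configurations realized by $T$ then yields a satisfying assignment of $\Phi$.

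The main obstacle is precisely this rigidity analysis under the degree-$4$ restriction. In the known unbounded-degree hardness proofs one can use a single high-degree vertex whose incident cut literally equals the clause constraint, so no cheating is possible; here the same constraint must be enforced across a network of degree-$\le 4$ vertices and weighted edges, and the weights and the value of $K$ must be tuned so that simultaneously (i) the honest tree attains congestion exactly $K$, (ii) every alternative behaviour of $T$ --- cutting a gadget in an unintended place, rerouting a wire, or ``cheating'' a clause --- strictly exceeds $K$ at some induced cut, and (iii) the widgets $\dwgadget$ simulating the weights neither create spurious low-congestion trees nor push any degree above $4$. Establishing (ii) amounts to enumerating all the ways $T$ can interact with a gadget together with its attached wires and checking that each unintended possibility overloads a cut; this case analysis, plus the bookkeeping to keep (iii) under control, is the technical core of the argument. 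The degree-$3$ proof in Section~\ref{sec: np-hardness for degree-3 graphs} sharpens these gadgets further, which is why the present degree-$4$ construction serves as a warm-up.
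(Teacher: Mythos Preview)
Your proposal is a plan, not a proof. You correctly identify that the crux is the rigidity analysis --- showing that every congestion-$K$ spanning tree must encode a satisfying assignment --- but you never carry it out: the gadgets $\flower$, $\bottleneck$, $\rootflower$ are never defined, the value of $K$ and the edge weights are never specified, and the case analysis you yourself label ``the technical core'' is absent. A sentence like ``its incident cut is sized so that a spanning tree can keep congestion at most $K$ through it precisely when at least one literal arrives in the `true' state'' is a \emph{specification} of what the gadget should do, not a construction or an argument that such a gadget exists. The same applies to ``every spanning tree of congestion at most $K$ must route through it in exactly one of two mirror-image configurations'': this is the conclusion you need, not something you have established. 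You also never verify that the degree stays at most~$4$ after all substitutions, nor that the $\dwgadget$ replacements preserve the threshold.

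It is also worth noting that the paper's degree-$4$ construction is considerably simpler than the apparatus you sketch. There are no variable gadgets with two internal configurations, no wires, and no junction-style clause gadgets. Each variable and each clause is a \emph{single vertex}; clause-variable edges connect them directly; and a ``root cycle'' of auxiliary vertices supplies the global structure. The double weights on the root-variable, root-clause, and clause-variable edges are tuned so that (a) no root-clause edge can be in $T$ (forcing every clause to be a leaf attached to one of its variables), and (b) a variable cannot simultaneously serve its 2N-clause and a positive clause without overloading its root-variable edge. The rigidity argument (Lemma~\ref{lem: spanning tree to assignment}) is then a short direct count of edge-disjoint paths to the root cycle, not a gadget-by-gadget enumeration. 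If you intend to give an alternative construction along the lines you describe, you need to actually write it down and do the analysis.
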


The proof is by reduction from problem $\problemMPNSAT$,
the version of SAT defined in Section~\ref{sec: np-hardness for degree-3 graphs} (see also~\cite{luu_chrobak_2025_better_hardness_algo}).
We show how, given a boolean expression $\phi$ that is an instance of $\problemMPNSAT$, to compute in polynomial time a graph $G$
and a constant $K$ such that
\begin{description}
	\item{$(\ast)$} $\phi$ is satisfiable if and only if $\stc{G}\le K$.
\end{description}

Our construction builds on the ideas 
from~\cite{luu_chrobak_2025_better_hardness_algo,lampis_etal_parameterized_spanning_tree_congestion_2025};
in particular we use the concept of a graph with double-weighted edges, with weights $\dbweight{a}{b}$ that are
polynomial in the size of $\phi$. This is permitted by Lemma~\ref{lemma: double weight gadget}
and our construction of $\dwgadget(a,b)$ in Appendix~\ref{sec: double weight gadget}.
(Alternatively, for this proof we can use the double-weight gadget of degree $4$ constructed in~\cite{lampis_etal_parameterized_spanning_tree_congestion_2025}.)
For brevity, from now on we will refer to double weights simply as ``weights''. Similarly, the ``degree'' of a vertex refers
to its weighted degree. 

Let $\phi$ be the given instance of $\problemMPNSAT$ with $n$ variables and $m$ clauses, of which $m'$
clauses are 2N- or 2P-clauses. We take $K = 3m + 5$, and we convert $\phi$ into a graph $G$ as follows (see Figure~\ref{fig: construction of G}):
\begin{itemize}\setlength{\itemsep}{-0.03in}
\item For each variable $x$, create a vertex $x$ and for each clause $\kappa$ create a vertex $\kappa$.

\item For each vertex $v$ that is a variable, 2N-clause or 2P-clause, create three other corresponding vertices:
	\emph{root vertices} $r^1_v$, $r^2_v$ 	and a \emph{terminal vertex}  $t_v$, connected by
	three unweighted edges	 $(r^1_v,r^2_v)$, $(r^1_v,t_v)$, and  $(r^2_v,t_v)$.
\item Add $n+m'$ weight-$2$ edges arbitrarily so that the root vertices form a cycle called the \emph{root cycle}.
	The edges in the cycle will be called \emph{root-cycle edges}, and the edges connecting the	
	root cycle to terminal vertices are \emph{root-terminal edges}.	
\item For each variable $x$, add a \emph{root-variable} edge $(x,t_x)$ with weight $\dbweight{1}{K-5}$.
\item For each 2P-clause $\gamma$, add an edge $(\gamma,t_\gamma)$ with weight $\dbweight{1}{K-1}$, 
	and for each 2N-clause $\alpha$, add an edge $(\alpha,t_\alpha)$ with weight $\dbweight{2}{K-1}$. 
	Call these edges \emph{root-clause} edges.
\item For each clause $\kappa$, add an edge from $\kappa$ to each vertex representing a variable whose literal (positive or negative) appears in $\kappa$. 
	If $\kappa$ is a positive clause, these edges have weight $\dbweight{1}{K-2}$, and if $\kappa$ is a negative clause,
	these edges have weight $\dbweight{1}{K-3}$. Call such edges \emph{clause-variable} edges.
\end{itemize}

\begin{figure}[ht]
\begin{center}
	\includegraphics[width=4in]{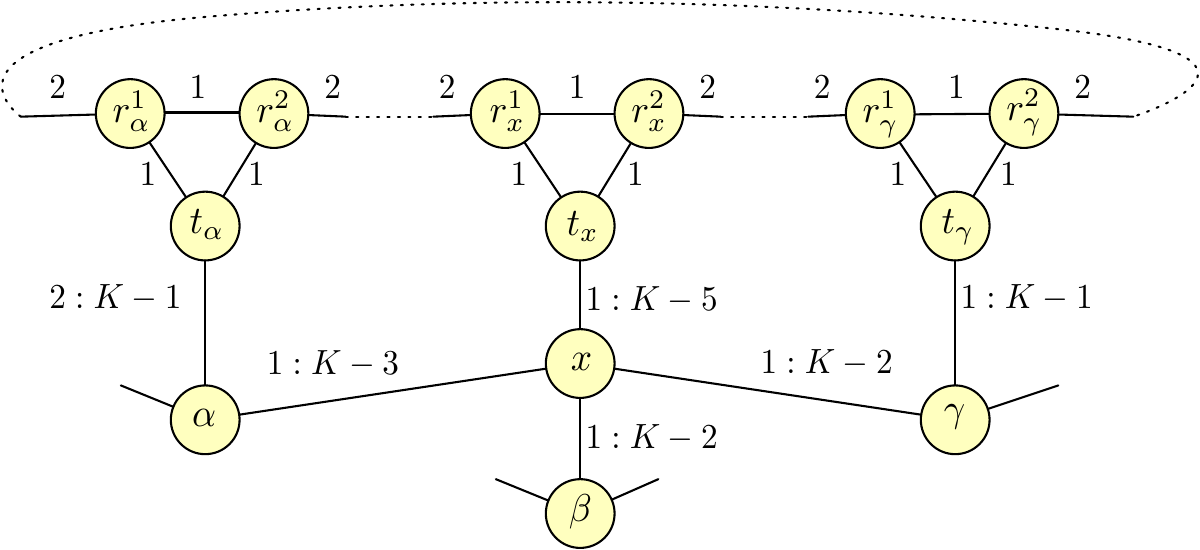}
    \caption{The construction of $G$.}
	\label{fig: construction of G}
\end{center}
\end{figure}

In the proof, when discussing $G$, for brevity, we will refer to the vertex representing a variable $x$ simply as
``variable $x$'' and, similarly, to the vertex representing a clause $\kappa$ as ``clause $\kappa$''.

It now remains to show that $G$ and $K$ satisfy condition~$(\ast)$. We prove the two implications in~$(\ast)$
separately. 


\medskip\noindent
{$(\Rightarrow)$} 
Given a truth assignment that satisfies $\phi$, we can construct a spanning tree $T$ for $G$ by adding to 
it the following edges (see Figure~\ref{fig: degree 4 assignment to tree}):
\begin{itemize}
\setlength{\itemsep}{-0.03in}
\item each root-variable edge,
\item for each clause $\kappa$, exactly one clause-variable edge from $\kappa$ to any variable whose literal satisfies $\kappa$
		(if $\kappa$ is satisfied by multiple literals, choose this variable arbitrarily),
\item all edges in the root cycle except for one edge of weight $2$,
\item for each variable or clause $u$, edge $(t_u,r^1_u)$. 
\end{itemize}

\begin{figure}[ht]
\begin{center}
	\includegraphics[width=4in]{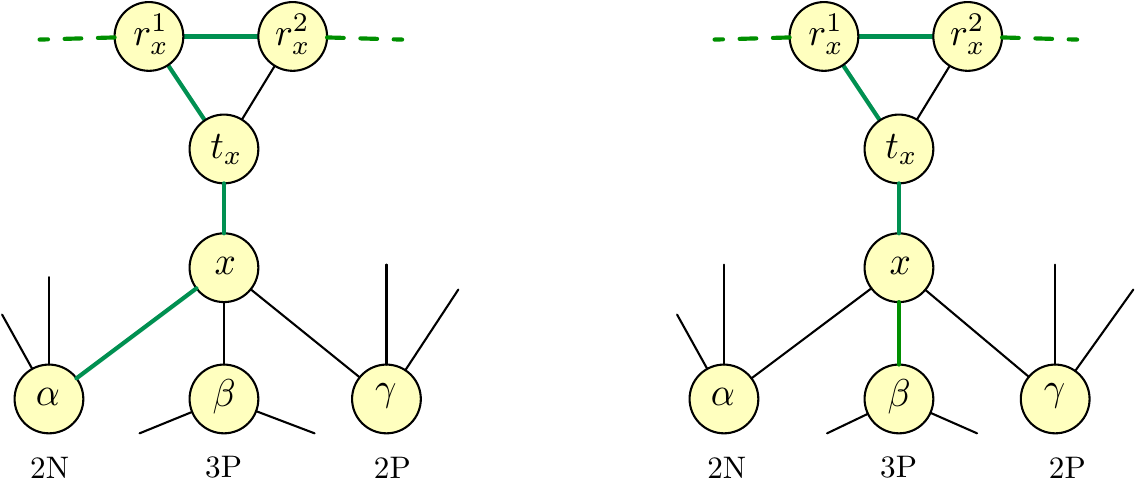}
    \caption{Converting a truth assignment satisfying $\phi$ into a spanning tree $T$. 
	The picture shows a variable $x$ with its 2N-clause $\alpha$, 3P-clause $\beta$ and 2P-clause $\gamma$.
	Tree edges are shown using thick (green) lines.
	On the left, the case when $x$ is false and edge $(x,\alpha)$ is chosen for $\alpha$.
	On the right, the case when $x$ is true and edge $(x,\beta)$ is chosen for $\beta$.
	Three other cases are not shown: when $x$ is chosen for $\gamma$,
	when $x$ is chosen for both $\beta$ and $\gamma$, and when $x$ is not chosen at all.
 	 }
	\label{fig: degree 4 assignment to tree}
\end{center}
\end{figure}

By inspection, $T$ forms a spanning tree, and each clause is a leaf in $T$. 
Furthermore, for each variable $x$, if $x$ has an edge in $T$ to its 2N-clause then $x$ does not have an edge in $T$ to any of its
positive clauses. 
To complete the proof of the $(\Rightarrow)$ implication,
we show that each edge $e$ in $T$ has congestion at most $K$. For this, we consider several cases. 

\smallskip\noindent\mycase{1}
$e$ is in the root cycle or it is a root-terminal edge. Consider first the sub-case when $e = (r^1_v, r^2_v)$, for some $v$ (variable or clause). 
Then $e$ contributes $1$ to its induced cut, $(r^2_v,t_v)$ contributes $1$, and the only edge from the root cycle not contained in $T$
contributes $2$.
All other edges in this cut are clause-variable and root-clause edges and there are at most $3m$ 
of them (because clauses are leaves in $T$). Thus, $\cng{T}{e} \le 4+3m  < K$. 

The second sub-case is when $e = (r^2_u,r^1_v)$, for some $u \neq v$ (variables or clauses), is very 
similar. Edge $e$ contributes $2$ to its cut, the only edge from the root cycle not contained in $T$
contributes $2$, and the clause-variable and root-clause edges contribute $3m$ at most,
so $\cng{T}{e} \le 4+3m < K$ as well.

In the third sub-case when $e$ is a root-terminal edge $(t_v,r^1_v)$, for some $v$,
$e$ contributes $1$ to its cut, and $(t_v,r^2_v)$ contributes $1$. If $v$ is a clause, then
the only other edge in the cut is a root-clause edge of weight at most $2$;
if $v$ is a variable, then all other edges in the cut are at most $5$ of the $9$ 
clause-variable edges from the clauses of $x$. Thus, $\cng{T}{e} \le 2+5  = 7 < K$.

\smallskip\noindent\mycase{2}
$e$ is a root-variable edge $e = (x,t_x)$ for some variable $x$. 
Let $\alpha, \beta, \gamma$ be the 2N-clause, 3P-clause, and 2P-clause of $x$, respectively. 
The shore $\treecutin{T}{x}{e}$ of $x$ of the cut induced by $e$ contains $x$ and can also contain some of its clauses, but if it contains
$\alpha$ then it contains none of  $\beta, \gamma$.
So $\treecutin{T}{x}{e}$ is either 
(i) $\braced{x}$, or 
(ii) $\braced{x,\kappa}$ for $\kappa \in \braced{\alpha, \beta, \gamma}$, or 
(iii) $\braced{x, \beta, \gamma}$. 
In case (i), $\cng{G,T}{e} = (K-5)+3 \le K-2$. 
In case (ii), $\cng{G, T}{e} \le (K-5) + 2 + 3  = K$. (It is equal $K$ only when $\kappa = \alpha$.) 
In case (iii), $\cng{G,T}{e} = (K-5) + 1 + 2 + 2 = K$. 

\smallskip\noindent\mycase{3}
$e$ is a clause-variable edge $e = (x, \kappa)$, for some clause $\kappa$ of $x$. 
If $\kappa = \alpha$ then $\cng{G,T}{e} = (K-3) + 3 = K$, and
if $\kappa \in \braced{\beta,\gamma}$ then $\cng{G,T}{e} = (K-2) + 2  = K$.


\medskip\noindent
{$(\Leftarrow)$} Let $T$ be a spanning tree with $\cng{G}{T}\le K$. We show how to convert $T$ into a satisfying assignment for $\phi$.
It is sufficient to prove the following lemma:

\begin{lemma}\label{lem: spanning tree to assignment}
Tree $T$ has the following properties:
\begin{description}\setlength{\itemsep}{-0.03in}
\item{(a)} $T$ does not contain any root-clause edges. 
\item{(b)} If a variable $x$ is connected in $T$ to its 2N-clause, then $x$ is not connected in $T$ to its 2P-clause or its 3P-clause.
\end{description}
\end{lemma}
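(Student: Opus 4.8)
My plan rests on one structural fact about $G$ and then a short cut-counting argument for each part. Throughout I use that a tree edge contributes its \emph{second} weight to the (unique) cut it induces, while a non-tree edge contributes its \emph{first} weight to every cut it crosses — this is exactly how congestion is computed in the $(\Rightarrow)$ direction above. The structural fact is: \emph{$G$ is $3$-edge-connected}. This is a short check — the root vertices together with the root-cycle edges form a cycle; each triangle $\braced{r^1_v,r^2_v,t_v}$ attaches to it by exactly three edges (since $t_v$ has degree $3$); each clause vertex has degree $3$; and each variable vertex has degree $4$ with three edge-disjoint routes into the root structure (through $t_x$ and through two of its clauses) — so no cut of $G$ has fewer than three edges, and every cut is crossed by at least three edges, each of first weight $\ge 1$. \emph{Part (a)} is then immediate: if a root-clause edge $e=(\kappa,t_\kappa)$ were in $T$ then, letting $\cutof{S}$ be the cut it induces, $e$ is the only tree edge in $\cutof{S}$ and contributes its second weight $K-1$, while at least two other edges cross $\cutof{S}$, each contributing $\ge 1$; hence $\cng{T}{e}\ge(K-1)+2=K+1$, contradicting $\cng{G}{T}\le K$.

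For \emph{part (b)}, assume $(x,\alpha)\in T$ with $\alpha$ the 2N-clause of $x$, and, for contradiction, $(x,\kappa)\in T$ for one of the two positive clauses $\kappa$ of $x$; since both clause-variable edges from $x$ to positive clauses have weight $\dbweight{1}{K-2}$, the 2P and 3P cases are symmetric. By part (a), the only tree edges at $\alpha$ are among $(x,\alpha)$ and $(x',\alpha)$, where $x'$ is the other variable of $\alpha$, and every tree edge at $\kappa$ is a clause-variable edge. I argue by cases on which of the four edges at $x$ — namely $(x,t_x)$ of weight $\dbweight{1}{K-5}$, the two tree edges $(x,\alpha),(x,\kappa)$, and the fourth edge $(x,\kappa')$ to the remaining positive clause, of weight $\dbweight{1}{K-2}$ — lie in $T$, and on whether $\alpha$ and $\kappa$ are leaves of $T$. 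In each case I pick one tree edge $f$ incident to $x$ and bound the congestion of the cut it induces from below by: (i) the large second-weight contribution of $f$; plus (ii) the two boundary edges of $\alpha$, namely $(x',\alpha)$ of first weight $1$ and $(\alpha,t_\alpha)$ of first weight $2$, both of which cross that cut whenever $x'$ and $t_\alpha$ fall outside the shore of $\alpha$; plus (iii) the two boundary edges of $\kappa$ (its other two clause-variable edges if $\kappa$ is 3P, or its other clause-variable edge together with its root-clause edge if $\kappa$ is 2P), each of first weight $1$. For instance, when $\alpha$ and $\kappa$ are both leaves, $x$ needs some further tree edge; if that is $(x,t_x)$, then removing $(x,t_x)$ leaves a shore containing $\braced{x,\alpha,\kappa}$, and that cut has congestion at least $(K-5)+(1+2)+(1+1)+1=K+1$, the final $+1$ coming from $x$'s fourth edge $(x,\kappa')$ or the boundary of its subtree. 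All remaining cases ($x$ connected onward via $(x,\kappa')$, or $\alpha$ or $\kappa$ not a leaf) follow the same template; the one point that must be nailed down when choosing $f$ and locating each vertex is that $\alpha$ and $\kappa$ cannot share a shore of any single tree edge — a tree-edge cut contains exactly one tree edge, whereas both $(x,\alpha)$ and $(x,\kappa)$ would otherwise cross it.

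The main obstacle is the exhaustive case analysis in part (b): one must check that \emph{every} routing of $T$ around $x$ overloads some single-edge cut, including the situations where $\alpha$ or $\kappa$ hangs a nontrivial subtree below it — in which case the chosen cut is forced to cut through the root cycle, itself costing at least $4$ (two cut cycle-edges). The constants $K-3$, $K-2$, $K-5$ attached to the 2N-edge, the positive-clause edges, and the root-variable edge are calibrated precisely so that each of these lower bounds lands at $K+1$ or more; once the case split is organized, each individual estimate is routine.
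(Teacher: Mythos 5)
Your part~(a) is fine and is essentially the paper's argument in different clothing: the paper exhibits two edge-disjoint $\kappa$--$t_\kappa$ paths avoiding the root-clause edge $f$, which is exactly a local verification of the $3$-edge-connectivity you assert globally. One caution: for the 2N-clause the root-clause edge itself has \emph{first} weight $2$, so you must count \emph{three distinct edges} in the induced cut (giving $+2$ beyond $f$), not total first weight $3$; your phrasing ``at least two other edges'' is the right statement, but your justification via weighted degrees conflates the two notions and should be made precise.

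Part~(b) has a genuine gap. Your template --- cut a tree edge $f$ incident to $x$ and add up ``the boundary edges of $\alpha$ and $\kappa$'' --- only yields a contradiction if you can certify that those boundary edges actually cross the cut, i.e.\ that $x'$, $t_\alpha$, $t_x$, the third clause of $x$, and the other variables of $\kappa$ all lie on the far shore. Nothing in your write-up establishes this, and the shore of $f$ can be an arbitrary subtree containing further variables and clauses; local counting does not control the total cut weight, since absorbing a vertex into the shore can \emph{decrease} the boundary (a clause with two variables already inside removes two crossing edges and adds one). The paper resolves exactly this with two ingredients you are missing: (i) a structural claim that every tree path between two root vertices uses only root-cycle and root-terminal edges (else a clause-variable tree edge of weight $\dbweight{1}{K-3}$ would have to cut the root cycle, costing $\ge 4$ and giving congestion $\ge K+1$), which forces the entire root cycle onto one shore of the relevant cut; and (ii) a Menger-type count of $7$ edge-disjoint paths from $\braced{\alpha,x,\pi}$ to the root cycle, each of which must cross that cut and at most one of which can use the tree edge itself. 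Step~(ii) crucially uses the $\problemMPNSAT$ property that distinct clauses share at most one variable (to obtain four distinct auxiliary variables $z_1,\dots,z_4$), a property you never invoke. Note also that the paper cuts the \emph{first root-variable edge} $(y,t_y)$ on the tree path from the cluster to the root cycle --- where $y$ need not equal $x$ --- precisely because when $x$ reaches the root structure only through a chain of clause-variable edges (or through $\alpha$ itself, where your cut edge $(x,\alpha)$ has second weight only $K-3$ and the required $+4$ is tight), no edge incident to $x$ gives a clean bound. Your single worked case ($\alpha,\kappa$ leaves, $x$ attached via $(x,t_x)$) is correct but is the easiest configuration; the cases you defer to ``the same template'' are the ones where the template, as stated, does not close.
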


The reason this lemma is sufficient is because it allows us to produce a satisfying assignment for $\phi$:
For each variable $x$, if $x$ is connected to its 2N-clause in $T$, make $x$ false; otherwise make it true. By 
Lemma~\ref{lem: spanning tree to assignment}(b), this is a valid truth assignment, and
by Lemma~\ref{lem: spanning tree to assignment}(a), every clause vertex $\kappa$ has an edge in $T$ to some of its variables, so
the definition of the truth assignment guarantees that this variable will satisfy $\kappa$.

\begin{proof}
The proof of  Lemma~\ref{lem: spanning tree to assignment}(a) is straightforward:
Suppose that a 2N- or 2P-clause $\kappa$ has its root-clause edge $f = (\kappa,t_\kappa)$ in $T$.
Let $x,y$ be the two variables of the literals in $\kappa$. Then
there are two disjoint paths from $\kappa$ to $t_\kappa$ in $G$ that do not use $f$:
namely paths starting with $\kappa - x - t_x$ and $\kappa-y-t_y$, and then following the root cycle to $t_\kappa$, for one path clockwise and
for the other one counter-clockwise, to ensure that the paths are disjoint.
The cut $\cutof{\sptreecut{T}{f}}$ induced by $f$ must contain at least one edge from each of these two paths,
implying $\cng{T}{f} \ge K-1 +2 = K+1$, a contradiction. 

\smallskip

The rest of this section is devoted to the proof of Lemma~\ref{lem: spanning tree to assignment}(b), which
will complete the proof of $\NP$-completeness. We start with the following claim:


\begin{claim}\label{cla: root cycle in spanning tree}
For any two root vertices, the path in $T$ between these vertices
consists of only root-cycle edges and root-terminal edges.
\end{claim}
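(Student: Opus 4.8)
Call the subgraph $H_0$ of $G$ formed by all root-cycle edges and all root-terminal edges the \emph{backbone}. Its vertex set is exactly the root vertices together with the terminal vertices; it is connected, in fact $2$-edge-connected, since the root cycle is $2$-edge-connected and each $t_v$ lies on the triangle $r^1_v, r^2_v, t_v$; and, crucially, every root vertex has \emph{all} of its $G$-edges inside $H_0$ (namely $(r^1_v,r^2_v)$, one weight-$2$ root-cycle edge, and one root-terminal edge). The only edges of $G$ leaving $V(H_0)$ are the edges $(v,t_v)$ for $v$ a variable, $2$N-clause, or $2$P-clause, and each such edge leaves $H_0$ only at the terminal vertex $t_v$.

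I would argue by contradiction: suppose the $T$-path $P$ between two root vertices $a$ and $b$ uses an edge not in $H_0$. Starting from $a$, $P$ begins with a backbone edge and stays inside $H_0$ until it first reaches some terminal vertex $t_v$ and crosses $(v,t_v)\in T$. By Lemma~\ref{lem: spanning tree to assignment}(a) no root-clause edge lies in $T$, so $v$ cannot be a $2$N- or $2$P-clause; hence $v$ is a variable, and $P$ contains a root-variable edge $e=(x,t_x)$.

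Next I would examine the cut $\cutof{\sptreecut{T}{e}}$ induced by $e$ in $T$, with shores $A=\treecutout{T}{x}{e}\ni t_x$ and $B=\treecutin{T}{x}{e}\ni x$. Since $e$ lies on $P$, the root vertex $a$ is in $A$ and the root vertex $b$ is in $B$; moreover $x$ is not a leaf of $T$ (otherwise $B=\braced{x}$ would contain the root vertex $b$), so $x$ has a clause-neighbour $\kappa_1\in B$ joined to it by a tree edge. The congestion of $e$ is then at least the weight $K-5$ contributed by $e$ itself, plus: (i) at least two backbone edges, since $A$ and $B$ both meet the $2$-edge-connected graph $H_0$ — and at least three when $r^1_x$ and $r^2_x$ fall on opposite sides of the cut, because $H_0$ contains three edge-disjoint paths between $r^1_x$ and $r^2_x$ (the edge $(r^1_x,r^2_x)$, the path through $t_x$, and the rest of the root cycle); (ii) each clause-variable edge $(x,\kappa)$ whose clause endpoint lies in $A$; and (iii) for each $2$-clause $\kappa$ of $x$ that lies in $B$, the root-clause edge $(\kappa,t_\kappa)$, whose other endpoint $t_\kappa\in V(H_0)$ may lie in $A$ (and which has first weight $2$ when $\kappa$ is a $2$N-clause). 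A short case analysis on the side of each of $x$'s three clauses and of $r^1_x, r^2_x$ — using that $2$P- and $2$N-clauses of $x$ are themselves tethered to the backbone through their terminal vertices — would show these contributions always sum to at least $6$, so $\cng{G,T}{e}\ge (K-5)+6>K$, contradicting $\cng{G}{T}\le K$. (Alternatively one may run essentially the same analysis on the cut induced by the tree edge $(x,\kappa_1)$, which already contributes $K-2$ or $K-3$ from its own weight.)

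The routine parts are the structural facts about $H_0$ and the reduction to a root-variable edge. I expect the main obstacle to be the accounting in the last step: one must check, across all ways the excursion of $P$ outside $H_0$ can be routed and re-enter the backbone, that the cut induced by $e$ accumulates strictly more than $K$ units of edge weight. Here the double weights of root-clause edges ($\dbweight{2}{K-1}$ for $2$N-clauses) and the three edge-disjoint paths between $r^1_x$ and $r^2_x$ in $H_0$ are precisely what makes the count close, so the case distinction has to be made carefully.
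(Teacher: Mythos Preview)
Your overall strategy --- find an edge on the $T$-path $P$ between the two root vertices whose induced cut must have weight exceeding $K$ --- is exactly what the paper does, and your reduction showing that $P$ contains a root-variable edge (and hence, as you note, also a clause-variable edge $(x,\kappa_1)$) is correct. The difference is in \emph{which} edge you analyze.

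The paper makes the cleaner choice: it takes the clause-variable edge $f=(x,\kappa_1)$ itself. Its second weight is at least $K-3$, and since $f$ lies on $P$, the cut $\cutof{\sptreecut{T}{f}}$ separates the two root vertices $a$ and $b$; hence it must sever the backbone $H_0$ with total weight at least $4$ (two weight-$2$ root-cycle edges, or equivalent combinations). That gives $\cng{T}{f}\ge (K-3)+4=K+1$ in one line --- precisely your parenthetical ``alternative''.

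Your main route via the root-variable edge $e=(x,t_x)$ starts $2$ units lower (second weight $K-5$) and so must recover $6$ rather than $4$. Two remarks on this: first, your backbone bound is understated --- you count ``at least two backbone edges'', but since the cut separates two \emph{root} vertices, the backbone contribution is always at least $4$ in weight (the weight-$1$ edge $(r^1_v,r^2_v)$ is shortcut by the triangle through $t_v$), not merely $2$ in edge count. Second, even granting the correct bound of $4$, you still need $2$ additional units from clause-variable or root-clause edges, and the case analysis you defer is genuinely longer than ``short'': it has to chase how each of $x$'s three clauses, and then their other variables, fall relative to the cut. It can be made to work, but it is not the advertised one-liner. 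Since you already have the clause-variable edge in hand, switching to it eliminates all of this.
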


To justify Claim~\ref{cla: root cycle in spanning tree}, consider any two different root vertices $p$ and $q$ and suppose that the $p$-to-$q$ path $P$ in $T$
does not satisfy Claim~\ref{cla: root cycle in spanning tree}. Since, by Lemma~\ref{lem: spanning tree to assignment}(a), $P$ does not contain any
root-clause edges, $P$ must contain at least one clause-variable edge, say $f=(y,\kappa)$
(cf. Figure~\ref{fig: cut edge between two root vertices}).
But then in the cut $\cutof{\sptreecut{T}{f}}$ induced by $f$, the vertices $p$ and $q$ would be in different shores, 
so this cut would have to contain two root-cycle edges, or a root-cycle edge and two root-terminal edges, or
four root-terminal edges; in any case, the total weight of these edges is $4$. 
Thus, $\cng{T}{f} \ge K-3+ 4 = K+1$ -- contradicting the assumption that $\cng{G}{T}\le K$. 
So Claim~\ref{cla: root cycle in spanning tree} holds.


\begin{figure}[ht]
\begin{center}
	\includegraphics[width=4in]{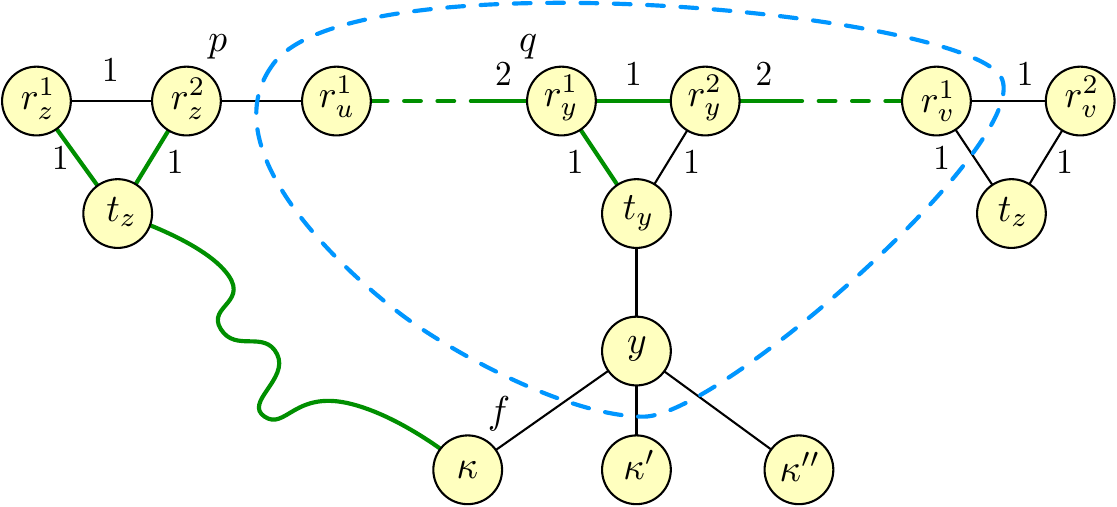}
    \caption{The cut of a clause-variable edge $f = (\kappa,y)$
	 that is on a path between two different root vertices $p$ and $q$, implying $\cng{T}{f} \ge K+1$.
	 Tree edges are shown using thick (green) lines. The cut is indicated using a (blue) dashed curve.}
	\label{fig: cut edge between two root vertices}
\end{center}
\end{figure}


\smallskip

Continuing with the proof of Lemma~\ref{lem: spanning tree to assignment}(b),
let $x$ be any variable, and let $\alpha,\beta,\gamma$ be its 2N-clause, 3P-clause, and 2P-clause. 
Assume for contradiction that edges $(\alpha,x)$ and $(\pi, x)$ are in $T$ for some $\pi\in\braced{\beta,\gamma}$. 
Consider the path $P$ in $T$ connecting $\braced{\alpha,x,\pi}$ to the root cycle.
By Lemma~\ref{lem: spanning tree to assignment}(a), $P$ must use a root-variable
edge, and let $e = (y,t_y)$ be the first such edge. We will examine the
edges crossing the cut $\cutof{\sptreecut{T}{e}}$ induced by $e$.
From Claim~\ref{cla: root cycle in spanning tree},  we obtain:


\begin{corollary}\label{cor: root chain is on one side}
The root cycle is on the shore $\treecutin{T}{t_y}{e}$ of $\cutof{\sptreecut{T}{e}}$, 
while $\alpha,x,\pi$ are all on the shore $\treecutin{T}{y}{e}$.
\end{corollary}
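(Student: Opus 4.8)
The plan is to analyze the two components of $T\setminus e$ directly, using Claim~\ref{cla: root cycle in spanning tree} and Lemma~\ref{lem: spanning tree to assignment}(a). First I would record the local structure of $T$: the only $T$-edges joining a clause/variable vertex to a terminal vertex are root-variable edges (root-clause edges are excluded by Lemma~\ref{lem: spanning tree to assignment}(a)), the only $T$-edges incident to a terminal vertex besides these are root-terminal edges, and no edge joins a clause/variable vertex directly to a root vertex. Consequently, along the simple path $P=(z=p_0,p_1,\ldots,p_M)$ from $\braced{\alpha,x,\pi}$ to the root cycle (so $z\in\braced{\alpha,x,\pi}$ and $p_M$ is a root vertex), the prefix preceding the first root-variable edge $e=(p_k,p_{k+1})$ stays entirely among clause and variable vertices; hence the near endpoint $p_k$ of $e$ is the variable $y$ and the far endpoint $p_{k+1}$ is its terminal $t_y$.

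Next I would split $T\setminus e$. Since $P$ is a simple path in the tree $T$, it traverses $e$ exactly once, so the prefix $p_0,\ldots,p_k$ lies in one component of $T\setminus e$ and the suffix $p_{k+1},\ldots,p_M$ in the other; by definition these are $\treecutin{T}{y}{e}$ and $\treecutin{T}{t_y}{e}$, respectively. The three vertices $\alpha,x,\pi$ induce a connected subtree of $T$ (the path $\alpha-x-\pi$ via the edges $(\alpha,x),(\pi,x)\in T$), and this subtree avoids $e$ because $e$ is a root-variable edge while $(\alpha,x),(\pi,x)$ are clause-variable edges; since it is joined to $y$ through $z$ inside $T\setminus e$, we get $\alpha,x,\pi\in\treecutin{T}{y}{e}$, which is the first assertion.

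For the second assertion, consider the subgraph $R\subseteq T$ formed as the union of the $T$-paths between all pairs of root vertices. By Claim~\ref{cla: root cycle in spanning tree}, every such path uses only root-cycle and root-terminal edges, so $R$ is connected, contains every root vertex, and contains no root-variable edge; in particular $e\notin R$, and therefore $R$ lies inside a single component of $T\setminus e$. The endpoint $p_M$ of $P$ is a root vertex belonging to the suffix of $P$, hence $p_M\in\treecutin{T}{t_y}{e}$, which forces $R\subseteq\treecutin{T}{t_y}{e}$; thus the whole root cycle lies in $\treecutin{T}{t_y}{e}$, as claimed. The argument is mostly bookkeeping, and the only points needing care are (i) identifying which endpoint of $e$ is $y$ and which is $t_y$, which is exactly the local structure argument of the first paragraph, and (ii) verifying that $e$ is distinct from every edge of the star $\alpha-x-\pi$ and from every edge of the root backbone $R$, so that deleting $e$ leaves both of these subtrees intact — this is where the edge-type classification (root-variable vs.\ clause-variable, and Claim~\ref{cla: root cycle in spanning tree}) does the work.
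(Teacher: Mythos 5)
Your proof is correct and follows the same route the paper intends: the paper derives this corollary directly from Claim~\ref{cla: root cycle in spanning tree} together with the exclusion of root-clause edges (Lemma~\ref{lem: spanning tree to assignment}(a)), and your write-up simply makes explicit the bookkeeping the paper leaves implicit — identifying the endpoints of $e$, noting that the star $\alpha-x-\pi$ and the root backbone each avoid $e$, and placing each in the appropriate component of $T\setminus e$. No gaps.
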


By the definition of $\problemMPNSAT$, different clauses in $\phi$ cannot share more than one variable. 
This implies that there are four distinct variables $z_1, z_2, z_3, z_4$, none equal to $x$, 
with clause-variable edges from $\alpha,\beta,\gamma$, namely
such that $z_1 \in \alpha$, $z_2, z_3 \in \beta$, and $z_4 \in \gamma$. 
This in turn implies that there are $7$ edge-disjoint paths in $G$
from the set $\braced{\alpha,x,\pi}$ to the root cycle. These paths are:
\begin{itemize}\setlength{\itemsep}{-0.03in}
    \item $\alpha - t_{\alpha} - r^1_\alpha$, $\alpha - t_{\alpha}- r^2_\alpha$ (these count as disjoint because $w_1({\alpha, t_\alpha}) = 2$),
		$\alpha - z_1 - t_{z_1} - r^1_{z_1}$,
	\item $x-t_x-r^1_x$,
    \item if $\pi = \beta$: \ 
		 $\beta - z_2 - t_{z_2} - r^1_{z_2}$, $\beta - z_3 - t_{z_3} - r^1_{z_3}$, $x- \gamma - t_{\gamma} - r^1_{\gamma}$.
	\item if $\pi = \gamma$: \  
	 	$\gamma - t_{\gamma} - r^1_{\gamma}$,  $\gamma - z_4 - t_{z_4} - r^1_{z_4}$, $x-\beta - z_2 - t_{z_2} - r^1_{z_2}$.
\end{itemize}
All these $7$ paths cross cut $\cutof{\sptreecut{T}{e}}$, and at most one of them can use edge $e$ to cross it. 
So the congestion of $e$ is at least
  $\cng{T}{e} \ge  K-5 + 6 = K+1$, contradicting the choice of $T$,
  and completing the proof of Lemma~\ref{lem: spanning tree to assignment}.
\end{proof}


\section{The Double-Weight Gadget -- Proof of Lemma~\ref{lemma: double weight gadget}}
\label{sec: double weight gadget}


In this section we prove Lemma~\ref{lemma: double weight gadget}, by
describing the construction of our double-weight gadget $\dwgadget(a,b)$,
that will be used to replace edges with double weights in  our
$\NP$-completeness proof in Section~\ref{sec: np-hardness for degree-3 graphs}.
The double-weight gadget is constructed from a sub-gadget that we call a \emph{bottleneck},
that we introduce first.


\myparagraph{The bottleneck gadget.}
For an integer $w\ge 3$, we define a \emph{$w$-bottleneck of degree $3$} as an unweighted graph $\bottleneck = \bottleneck(w)$ 
with two distinguished vertices $s,t$ called the \emph{gates of $\bottleneck$}, that has the following properties:
\begin{description}[nosep]
\item{(b1)} $\stc{\bottleneck} = w$, 
\item{(b2)} for any spanning tree $T$ of $\bottleneck$, the $s$-to-$t$ path in $T$ contains an edge $e$ with $\cng{\bottleneck , T}{e}\ge w$, 
\item{(b3)} the degrees of $s$ and $t$ are $2$ and all other degrees in $\bottleneck$ are at most $3$.
\end{description}

We now show a construction of $\bottleneck(w)$. We remark that congestion properties of some grid-like graphs of degree $3$ 
have been analyzed in the literature, for example in \cite{law_etal_congestion_of_planar_graphs_2013} for hexagonal grids.
However, these results are not sufficient for our purpose because they don't explicitly address condition~(b2) of bottleneck graphs.

Let $w\ge 3$ be the specified parameter.
Our gadget is a hexagonal grid, depicted as a slanted wall-of-bricks grid of dimensions $w\times w$,
illustrated in Figure~\ref{fig: wall-of-bricks gadget} for $w = 4$ and $w = 5$.
More precisely, geometrically the grid has $w-1$ rows of $2\times 1$ rectangular faces called \emph{bricks}, with $w-1$ bricks per row.
The vertices are the corners of the rectangles, with the edges represented by lines connecting these corners. 
The gate $s$ of $\bottleneck$ is the bottom left corner while $t$ is chosen to be the top right corner. 
The gates have degree 2 and all other vertices have degree 2 or 3, so property~(b3) is satisfied.

\begin{figure}[ht]
\begin{center}
	\includegraphics[width=6in]{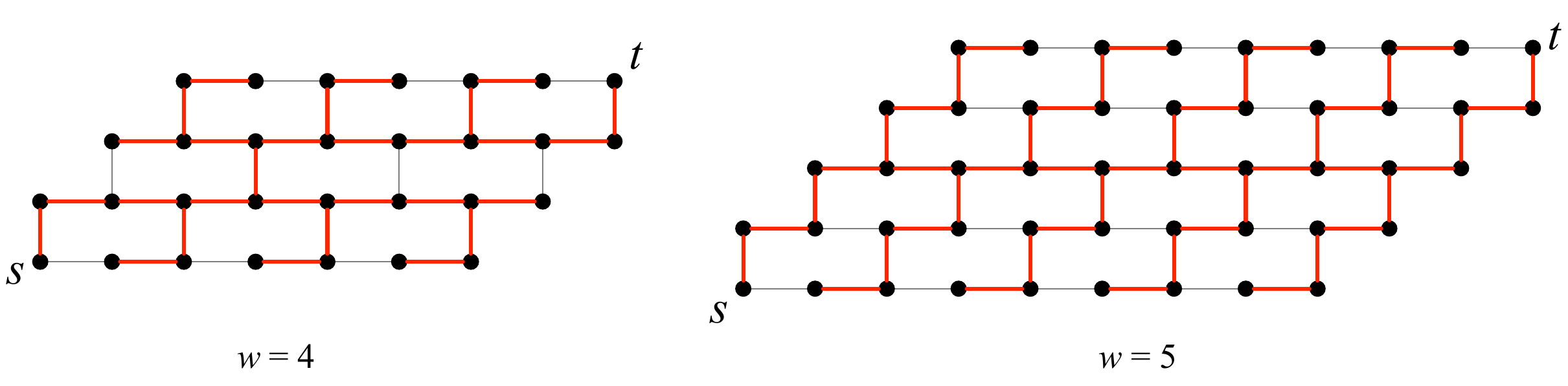}
   	 \caption{Bottleneck gadget $\bottleneck$ for $w=4$ and $w=5$. Their spanning trees $\Tstar$ with congestion $w$ are shown in thick (red).}
	\label{fig: wall-of-bricks gadget}
\end{center}
\end{figure}

First, we claim that $\bottleneck$ has property~(b1) that $\stc{\bottleneck} = w$.
It is sufficient to prove that $\stc{\bottleneck} \le w$, because the other inequality follows from~(b2).
To this end, we specify a spanning tree $\Tstar$ of $\bottleneck$ with congestion $w$ (See Figure~\ref{fig: wall-of-bricks gadget}.)  
For odd $w$, $\Tstar$ includes the middle horizontal path, with vertical branches
alternatively connecting to the top and bottom boundaries. 
For even $w$, $\Tstar$ includes both middle horizontal paths, with one of the two middle vertical edges connecting them. 
The vertical branches are analogous to the odd case.
By inspection, the congestion of $\Tstar$ in both cases is $w$. Specifically,
for odd $w$, excluding the edges on the left and right boundaries,
the edges in the middle horizontal path and the vertical edges touching this path have congestion $w$.
All other edges have lower congestion.
For even $w$, the connecting edge in the middle row has congestion $w$,
and the edges along the middle horizontal paths touching the central brick also have congestion $w$.
All other edges have lower congestion.
Note that in both cases, the $s$-to-$t$ path in $\Tstar$ includes some of congestion-$w$ edges. 

Next, we prove property~(b2), namely
that \emph{for any spanning tree} $T$ of $\bottleneck$, at least one of the edges in the $s$-to-$t$ path $P$ in $T$ has congestion at least $w$. 
To do this, it is convenient to express the argument in terms of the dual graph. 
This follows the approach in~\cite{hruska_2008_tree_congestion,law_2009_congestion_duality},
where the authors studied congestion in rectangular and triangular grid graphs. 
Let $\bottleneck_{st} = \bottleneck + (s,t)$, a graph obtained from $\bottleneck$ by adding an edge $(s,t)$,
that splits the outer face of $\bottleneck$ into two faces. $T$ remains a valid spanning tree of $\bottleneck_{st}$. 
Let $\dual{\bottleneck}_{st}$ be the dual graph of $\bottleneck_{st}$.  
($\dual{\bottleneck}_{st}$ has some parallel edges, because $\bottleneck$ has some vertices of degree $2$.)
Define $z_{i,j}$ to be the dual vertex in $\dual{\bottleneck}_{st}$ representing the brick at 
(skewed) column $i$ and row $j$, where $1 \le i,j \le w-1$. 
We also define two special dual vertices: $z_{w,0}$ is the dual vertex of the outer face whose boundary 
includes the bottom and right boundaries of $\bottleneck$, and $z_{0,w}$ corresponds to the outer face whose boundary 
include the top and left boundaries of $\bottleneck$ (See Figure~\ref{fig: wall-of-bricks dual}.)

\begin{figure}[ht]
\begin{center}
	\includegraphics[width=3.25in]{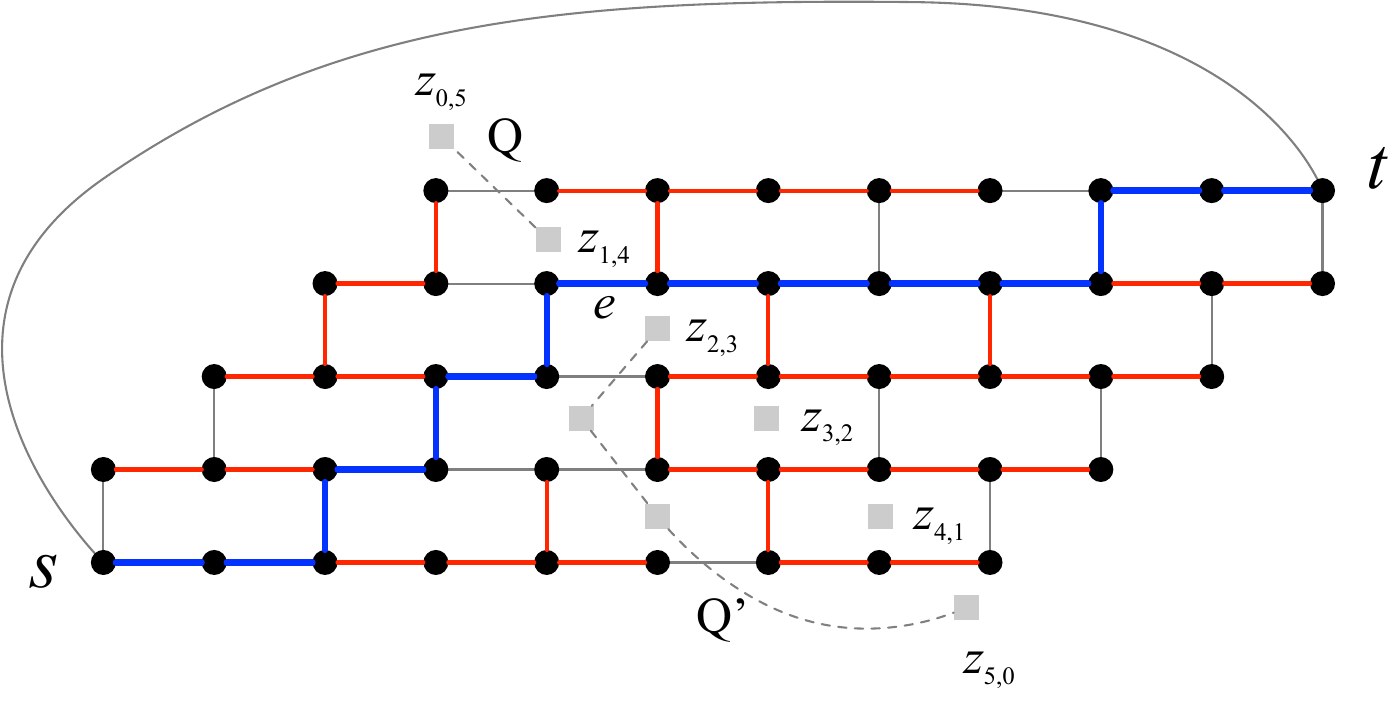}
   	 \caption{Illustration of an $s$-$t$ path $P$ in thick (blue) edges that includes an edge $e$ with congestion $w=5$.}
	\label{fig: wall-of-bricks dual}
\end{center}
\end{figure}
Let $\dual{T}$ be the graph obtained from $\dual{\bottleneck}_{st}$ by removing the edges dual to the edges of $T$.
Then $\dual{T}$ is connected and acyclic, so it is a spanning tree of $\dual{\bottleneck}_{st}$, and is referred to as the
tree dual to $T$.

Now consider the dual vertices corresponding to the bricks on the main diagonal of $\bottleneck$ namely the vertices  $z_{i,j}$ where $i+j=w$. 
Let $C$ be the set of (horizontal) edges of $\bottleneck$ dual to the edges of the path 
$z_{0,w} - z_{1,w-1} - z_{2, w-2} - \cdots - z_{w-1, 1} - z_{w,0}$ in $\dual{\bottleneck}_{st}$.
$C$ is a cut of $\bottleneck$ (consisting of the edges along the main diagonal), and therefore $P$ contains at least one edge from $C$, say 
an edge $e \in T$ whose dual is $\dual{e} = (z_{p,q}, z_{p+1,q-1})$. Note $z_{p,q}$ is above $e$ while $z_{p+1, q-1}$ is below it. 
Since the cycle $P + (s,t)$ encloses the top-left region of $\bottleneck_{st}$ (more precisely, the corresponding dual edges
form a cut of $\bottleneck$ that separates $z_{0,w}$ from $z_{w,0}$), there are two disjoint paths $Q, Q'$ in $\dual{T}$ that
start at $z_{p,q}$ and $z_{p+1,q-1}$ and end at $z_{0,w}$ and $z_{w,0}$, respectively (See Figure~\ref{fig: wall-of-bricks dual} for an example). 
Let $l(Q), l(Q')$ denote their respective lengths. 
Both $Q, Q'$ must cross the boundaries of $\bottleneck$. 
If the two boundaries they cross are top and bottom then $l(Q) + l(Q') \ge w - q + (q-1)  = w-1$;
if they are left and right then $l(Q) + l(Q') \ge p + w - (p+1) = w-1$. 
Otherwise, if the two boundaries are top and right, then $l(Q) + l(Q') \ge w - q + w - (p+1)  = w-1$; 
if they are left and bottom then $l(Q)+l(Q') \ge p + q - 1 = w-1$. 
In all cases, $l(Q)+l(Q') \ge w-1$. 
It is not difficult to see that the congestion of $e$ is equal to $l(Q)+l(Q')+1$; 
for a detailed proof of this observation see the analyses of rectangular grids in~\cite[Section~3.3]{hruska_2008_tree_congestion}
and triangular grids in~\cite[pp.~6]{law_2009_congestion_duality}, that both naturally apply also to our wall-of-bricks gadget.
This means  $\cng{G,T}{e} \ge  w$ which completes the proof for property (b2).


\myparagraph{Double-weight gadget.}
To construct \emph{a double-weight gadget $\dwgadget(a,b)$ of degree $3$}, where $\dbweight{a}{b}$ is a double weight with $a<b$, 
create $a$ disjoint copies $\bottleneck_1,...,\bottleneck_a$ of $\bottleneck(b-a+1)$ (constructed above), 
where the gates of each
$\bottleneck_i$ are $s_i$ and $t_i$. Then add two additional vertices $s^\ast$ and $t^\ast$ called the \emph{ports of $\dwgadget(a,b)$},
with $s^\ast$ connected by edges to each gate $s_i$ and $t^\ast$ is connected by edges to each gate $t_i$.
(See Figure~\ref{fig: double weight gadget}.)

Clearly $\dwgadget(a,b)$ satisfies Lemma~\ref{lemma: double weight gadget}(i). That this gadget satisfies Lemma~\ref{lemma: double weight gadget}(ii) was proved by Lampis~et~al.~\cite[Lemma~10]{lampis_etal_parameterized_spanning_tree_congestion_2025} when $\bottleneck$ is a square-grid,
but their argument trivially extends to any bottleneck gadget, so we omit the proof here.



\begin{figure}[ht]
	\begin{center}
		\includegraphics[width = 6in]{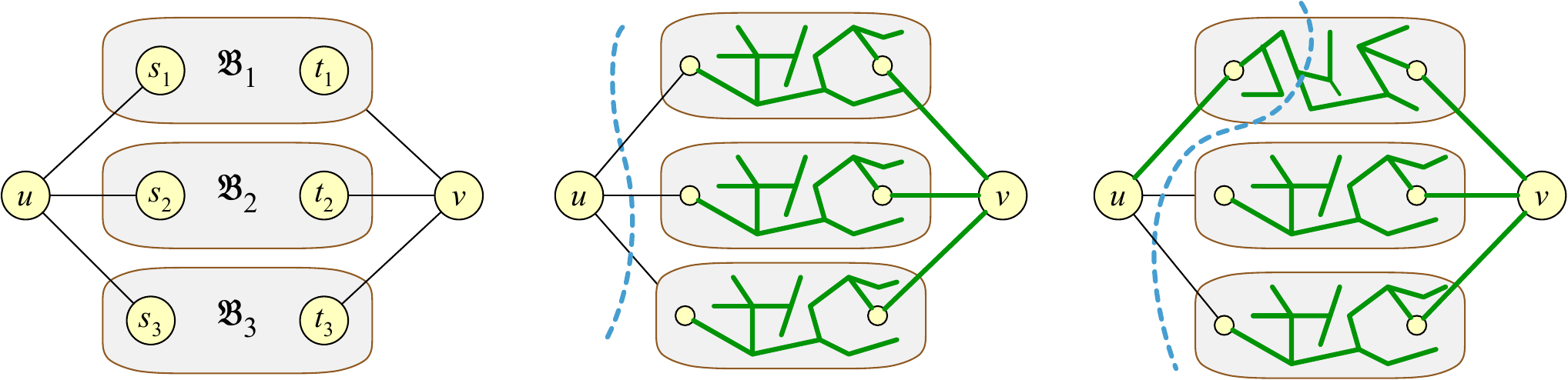}
	\end{center}
	\caption{On the left, the double-weight gadget $\dwgadget(a,b)$ replacing an edge $(u,v)$ 
	of $G$ with weight $\dbweight{a}{b}$ to produce a new graph $G'$. Here, $a=3$.
	The pictures in the middle and right illustrate how this gadget ``simulates'' the double weight.
	If $(u,v)$ is not in the spanning tree of $G$, the corresponding spanning tree of $G'$ can traverse the gadget as in the middle
	picture, contributing $a$ to the congestion.
	If $(u,v)$ is in the spanning tree of $G$, the corresponding spanning tree of $G'$ can traverse the gadget as
	in the picture on the right, contributing $(b-a+1) + a-1 = b$ to the congestion.
	}
	\label{fig: double weight gadget}
\end{figure}

\end{document}